\newtheorem{theorem}{Theorem}
\newtheorem{problem}[theorem]{Problem}
\newtheorem{lemma}[theorem]{Lemma}
\newtheorem{cor}[theorem]{Corollary}
\newtheorem{theorem*}{Theorem}
\newcommand{\remove}[1]{}
\newcommand{\rephrase}[2]{\ \newline \ \noindent\textbf{#1}.~\emph{#2}\newline }
\newcommand{\lr}[1]{\langle #1 \rangle}
\newcommand{\mlr}[1]{\ensuremath{\lr{#1}}\xspace}
\newcommand{\morph}[1]{\mlr{#1}}
\newcommand{\mmorph}[1]{$\mlr{#1}$\xspace}
\newcommand{\cand}{candidate\xspace}
\newcommand{\cv}{\cand vertex\xspace}
\newcommand{\xc}{$x$-contractible\xspace}
\newcommand{\face}[1]{$(#1) $\xspace}
\tikzstyle{graphnode}=[circle, draw, fill=black!30, inner sep=0pt, minimum
\tikzstyle{unmatched}=[graphnode,fill=black!0]
\tikzstyle{shaded}=[graphnode,fill=black!20]
\tikzstyle{matched}=[graphnode,fill=black!100]
\tikzstyle{matching} = [ultra thick]
\newcommand{\rednote}[1]{\textcolor{red}{{#1}}}
\newcommand{\Deg}{\textrm{deg}}
\newcommand{\MM}{{\cal M}}
\newcommand{\mm}{\MM'}
\let \PROC \algorithmicproc
\definecolor{gray}{RGB}{72,61,139}
\renewenvironment{algorithmic}{\algsetup{indent=2.5em,linenosize=\small,linenodelimiter=.}\begin{oldalgorithmic}}{\end{oldalgorithmic}}
\definecolor{blue}{rgb}{0.274,0.392,0.666}
\definecolor{red}{rgb}{0.827,0.017,0.056}
\definecolor{green}{rgb}{0,0.588,0.509}
\begin{document}

\title{\MakeUppercase{How to Morph Planar Graph Drawings}
\footnote{amalgamates results from conference papers~\cite{SODA-morph,Barrera-unidirectional,angelini2013morphing,Angelini-optimal-14}.}
}

\author{
Soroush Alamdari%
	\thanks{\affil{Department of Computer Science, Cornell University, Ithaca, New York, USA}, 
          \email{alamdari@cs.cornell.edu}},
Patrizio Angelini%
\thanks{\affil{Wilhelm-Schickard-Institut f\"{u}r Informatik, T\"ubingen University, Germany}, 
	\email{angelini@informatik.uni-tuebingen.de}},
Fidel~Barrera-Cruz%
\thanks{\affil{School of Mathematics, Georgia Institute of Technology, Atlanta, Georgia, USA}, 
          \email{fidelbc@math.gatech.edu}},
Timothy~M.~Chan%
\thanks{\affil{David R. Cheriton School of Computer Science, University of Waterloo, Waterloo, Canada}, 
          \email{\{tmchan, alubiw\}@uwaterloo.ca}},
Giordano Da Lozzo%
\thanks{\affil{Department of Engineering, Roma Tre University, Italy}, 
	\email{\{dalozzo, gdb, frati, patrigna, roselli\}@dia.uniroma3.it}},
Giuseppe~Di~Battista\footnotemark[5],
Fabrizio~Frati\footnotemark[5],
Penny~Haxell%
\thanks{\affil{Department of Combinatorics and Optimization, University of Waterloo, Waterloo, Canada}, 
          \email{pehaxell@uwaterloo.ca}},
Anna~Lubiw\footnotemark[4],
Maurizio~Patrignani\footnotemark[5],
Vincenzo~Roselli\footnotemark[5],
Sahil~Singla%
\thanks{\affil{School of Computer Science, Carnegie Mellon University, Pittsburgh, Pennsylvania, USA}, 
          \email{ssingla@cmu.edu}},
and Bryan~T.~Wilkinson%
\thanks{\affil{Center for Massive Data Algorithmics, Aarhus University, Aarhus, Denmark}, 
          \email{wilkinson@cs.au.dk}}
}

\remove{
\author{
Soroush Alamdari
\and Patrizio Angelini
\and Fidel~Barrera-Cruz
\and Timothy~M.~Chan
\and Giuseppe~Di~Battista
\and Fabrizio~Frati
\and Penny~Haxell
\and Anna~Lubiw
\and Maurizio~Patrignani
\and Vincenzo~Roselli
\and Sahil~Singla
\and Bryan~T.~Wilkinson
}

\newcommand{\dia}{Department of Engineering, Roma Tre University, Italy\\}
\newcommand{\tubi}{Wilhelm-Schickard-Institut f\"{u}r Informatik, T\"ubingen University, Germany\\}
\newcommand{\cswaterloo}{David R. Cheriton School of Computer Science, University of Waterloo, Waterloo, Canada\\}
\newcommand{\cowaterloo}{Department of Combinatorics and Optimization, University of Waterloo, Waterloo, Canada\\}
\newcommand{\cornell}{Department of Computer Science, Cornell University, Ithaca, New York, USA\\}
\newcommand{\madalgo}{ Center for Massive Data Algorithmics, Aarhus University, Aarhus, Denmark\\}
\newcommand{\carnegie}{School of Computer Science, Carnegie Mellon University, Pittsburgh, Pennsylvania, USA\\}
\institute{
Soroush Alamdari \at \cornell \email{alamdari@cs.cornell.edu}
\and Patrizio Angelini \at \tubi \email{angelini@informatik.uni-tuebingen.de} \and 
{Giuseppe Di~Battista, Fabrizio Frati, Maurizio Patrignani, Vincenzo Roselli} \at \dia \email{\{gdb, frati, patrigna, roselli\}@dia.uniroma3.it}
\and {Fidel~Barrera-Cruz, Penny Haxell} \at \cowaterloo \email {\{fbarrera, pehaxell\}@uwaterloo.ca}
\and {Timothy~M.~Chan, Anna Lubiw} \at \cswaterloo \email {\{tmchan, alubiw\}@uwaterloo.ca}
	\and Sahil~Singla \at \carnegie \email{ssingla@cmu.edu}
\and Bryan~T.~Wilkinson \at \madalgo \email{wilkinson@cs.au.dk}
}
}

\maketitle

\begin{abstract}	
Given an $n$-vertex graph and two straight-line planar drawings of the graph that have the same faces and the same outer face, we show that there is a morph (i.e., a continuous transformation) between the two drawings that preserves straight-line planarity and consists of $O(n)$ steps, which we prove is optimal in the worst case.
Each step is a \emph{unidirectional linear morph}, which means that	every vertex moves at constant speed along a straight line, and the lines are parallel although the vertex speeds may differ. Thus we provide an efficient version of Cairns' 1944 proof of the existence of straight-line planarity-preserving morphs for triangulated graphs, which required an exponential number of steps.
\end{abstract}

\section{Introduction}\label{se:intro}

A morph between two geometric shapes is a continuous transformation of one shape into
the other. 
Morphs are useful in many areas of computer science---Computer Graphics, Animation, and Modeling, to name just a few.
The usual goal in morphing is to ensure that the structure of the shapes be ``visible'' throughout the entire transformation.


Two-dimensional graph drawings can be used to represent many of the shapes for which morphs are of interest, e.g., two-dimensional
images~\cite{bn-fbim-92,fm-ffip-98,te-imfpt-99}, polygons, and  poly-lines~\cite{aadddhlrssw-cpwlv-11,acl-arpsi-00,cc-wgosm-97,elrsw-cssp-98,gh-msp-94,nmwb-mpstcg-08,sgwm-2dsbisvpp-93,sg-pba2dsb-92,sr-sbussr-95}. 
For this reason, morphs of graph drawings have been well studied. 

For the problem of morphing graph drawings,  
the input consists of two 
drawings $\Gamma_0$ and $\Gamma_1$ of the same graph $G$, and the problem is to transform continuously from the first drawing to the second drawing.
A \emph{morph} between $\Gamma_0$ and $\Gamma_1$ is  a continuously changing family of drawings of $G$ indexed by time $t \in  [0, 1]$, such that the drawing at time $t = 0$ is $\Gamma_0$ and the drawing at time $t = 1$ is $\Gamma_1$. 
Maintaining structure during the morph becomes a matter of preserving geometric properties such as planarity, straight-line planarity, edge lengths, or edge directions.
For example, preserving edge lengths in a straight-line drawing leads to problems of linkage reconfiguration~\cite{CDR,DO'R}.

In this paper we consider the problem of morphing between two graph drawings while preserving planarity.
Of necessity, we 
assume that the initial and final planar drawings are \emph{topologically equivalent}---i.e., have the same faces and the same outer face.
In addition to the above-mentioned applications, morphing graph drawings while preserving planarity has application to the problem of creating three-dimensional models from two-dimensional slices~\cite{BS}, with time playing the role of the third dimension.

When planar graphs may be drawn with poly-line edges the morphing problem becomes much easier---the intuition is that vertices can move around while edges bend to avoid collisions.  An efficient morphing algorithm for this case was given by 
Lubiw and Petrick~\cite{LP}. 
The case of orthogonal graph drawings is also well-solved---Biedl et al.~\cite{biedl2013morphing} gave
an algorithm to morph efficiently between any two orthogonal drawings of the same graph while preserving planarity and orthogonality.

We restrict our attention in this paper to straight-line planar drawings.
Our main result is an efficient algorithm to morph between two topologically equivalent straight-line planar drawings of a graph, where the morph must preserve straight-line planarity.  The issue is to find the vertex trajectories, since the edges are determined by the vertex positions.

Existence of such a morph is not obvious, and was first proved in 
1944 by Cairns~\cite{c-dplc-44} for the case of triangulations.
Cairns used an inductive proof, based on contracting a low-degree vertex to a neighbor.
In general,  a contraction that preserves planarity in both drawings may not exist, so Cairns needed a preliminary morphing procedure to make this possible.  As a result, his method 
involved two recursive calls, and took an exponential number of steps.
Thomassen~\cite{t-dpg-83} extended the proof to all planar straight-line drawings. He did this by augmenting both drawings to  isomorphic (``compatible'') triangulations which reduces the general case to Cairns's result. The idea of compatible triangulations was rediscovered and thoroughly explored by Aronov et al.~\cite{ASS}, who showed, among other things, that two drawings of a graph on $n$ vertices have a compatible triangulation of size $O(n^2)$ and that this bound is tight in the worst case.
 
Floater and Gotsman~\cite{FG} gave an alternative way to morph straight-line planar triangulations based on Tutte's graph drawing algorithm~\cite{Tutte}.
Gotsman and Surazhsky~\cite{GS,surazhsky-mpt-99,SG,SG2} extended the method to all straight-line planar graph drawings using the same idea of compatible triangulations, and they showed that the resulting morphs are visually appealing.
These algorithms do not produce explicit vertex trajectories; instead, they compute the intermediate drawing (a ``snapshot''') at any requested time point.  There are no quality guarantees about the number of time-points required to approximate continuous motion while preserving planarity.  
For related results, see~\cite{ekp-ifmpg-03,ekp-mpg-04,fe-gdm-02}.

For more history and related results on morphing graph drawings, see Roselli's PhD thesis~\cite[Section 3.1]{Roselli-thesis}.

The problem of finding a straight-line planar morph that uses a polynomial number of discrete steps has been asked several times
 (see, e.g.,~\cite{km-mpgss-08,l-mpgd-07,LP,LPS}).  
The most natural definition of a discrete step is a \emph{linear morph}, where every vertex moves along a straight-line segment at uniform speed.  
Note that we do not require that all vertices move at the same speed.
One of the surprising things we discovered (after our first conference version~\cite{SODA-morph}) is that it is actually easier to solve our problem using a more restrictive type of linear morph.  
Specifically, we define a morph to be \emph{unidirectional} if every vertex moves along a straight-line segment at uniform speed, and all the lines of motion are parallel. As a special case, a linear morph that only moves one vertex is unidirectional by default.  

\subsection{Main Result}
\label{sec:main-result}

The main result of this paper is the following:

\begin{theorem}\label{th:main}
Given a planar graph $G$ on $n$ vertices and two straight-line planar drawings of $G$ with the same faces, the same outer face, and the same nesting of connected components, there is  
a morph between the two drawings that preserves straight-line planarity and consists of $O(n)$ unidirectional morphs. Furthermore, the morph can be found in time $O(n^3)$. 
\end{theorem}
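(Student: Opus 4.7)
The plan is to prove the theorem by induction on $n$, following Cairns's strategy of contracting a low-degree vertex to a neighbor, but using unidirectional morphs to keep the total number of steps linear rather than exponential. Cairns's blowup came from a recursive ``preparation'' call needed to make the contraction planarity-preserving in both drawings; the extra flexibility of unidirectional motion should allow this preparation to be done with $O(1)$ unidirectional morphs per recursion level instead of a whole recursive subcall.

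First I would reduce to the case where $G$ is a triangulation. The worst-case compatible triangulation of Aronov et al.\ has $\Theta(n^2)$ size, so a direct application would destroy the linear bound on morph steps; instead I would argue for a ``local'' augmentation that adds only $O(n)$ Steiner vertices in total while keeping the two drawings topologically equivalent, or that inserts such vertices temporarily inside the recursion. Once in the triangulated setting, Euler's formula yields an internal vertex $v$ of degree at most~$5$, which is my \emph{candidate vertex} for contraction. The technical core is to show that $v$ is \emph{$x$-contractible}: by moving only $v$ (trivially unidirectional) in a carefully chosen common direction, $v$ can be brought to coincide with some neighbor $u$ while preserving planarity in both $\Gamma_0$ and $\Gamma_1$. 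For $\deg(v)=3$ this is a direct geometric check; the cases $\deg(v)=4$ and $\deg(v)=5$ are presumably where the two routines \algoquad and \algopenta suggested by the preamble macros are needed, and would require a case analysis over the combinatorial shapes of $v$'s neighbor polygons in the two drawings. After contracting $v$ onto $u$ the induction hypothesis handles the smaller graph on $n-1$ vertices, and I reverse the contraction with one more unidirectional step, yielding the $O(n)$ total bound.

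The main obstacle is precisely the degree-$4$ and degree-$5$ case analysis: ensuring that a single common direction of motion simultaneously makes $v$ $x$-contractible in both drawings. The polygon of $v$'s neighbors can be convex in one drawing and highly non-convex in the other, so I expect the argument to split into subcases based on which candidate neighbor $u$ lies in a reflex or convex position of the opposite drawing, and to require a constant number of preparatory unidirectional slides of $v$ alone before the final contracting step; in degenerate configurations one may even need to ``slide'' a different nearby vertex first to open up room for the contraction. The stated $O(n^3)$ running time then follows by spending $O(n^2)$ per recursion level to locate a suitable candidate vertex and to recompute the drawing data after each contraction, while the generalization to disconnected graphs with prescribed nesting can be handled component by component using the outer-face information.
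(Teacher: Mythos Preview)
Your proposal captures the high-level Cairns induction correctly, but there are two substantial gaps where the paper introduces machinery you have not anticipated.

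\textbf{The preparation step is not local.} You propose to make $v$ contractible to the same neighbor $u$ in both drawings by moving only $v$, or at worst sliding one nearby vertex. This cannot work in general: whether $u$ lies in the kernel of the polygon $\Delta(v)$ depends on the shape of that polygon, which is determined by the positions of $v$'s \emph{neighbors}, not of $v$ itself. Moving $v$ around inside its kernel does nothing to the kernel; to bring $u$ into the kernel you must reshape $\Delta(v)$, and that reshaping must respect the entire triangulation outside $\Delta(v)$. The paper's solution is global: it isolates a subproblem called \emph{Quadrilateral Convexification} (make a specified non-convex quadrilateral in a triangulation convex) and solves it with a \emph{single} unidirectional morph that in general moves every vertex of the drawing. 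The key, non-obvious ingredient you are missing is Hong and Nagamochi's theorem that a $3$-connected hierarchical plane $st$-graph admits a convex straight-line level-planar drawing: after a slight rotation so that all vertices have distinct $y$-coordinates, one redraws the whole triangulation (minus one diagonal) level-planarly with the quadrilateral convex, and the linear interpolation between the old and new drawings is then automatically a planar horizontal (hence unidirectional) morph. The degree-$4$ and degree-$5$ preparation each reduce to one or two calls to this subroutine.

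\textbf{The pseudomorph must be converted to a true morph.} Your plan contracts $v$ to $u$, recursively morphs, and ``reverses the contraction with one more unidirectional step.'' But during the entire recursive morph $v$ is coincident with $u$, so this sequence is not a planar morph at all. You must specify a position for $v$ strictly inside the kernel of $\Delta(v)$ in \emph{every} intermediate drawing of the recursive morph, and these positions must be chosen so that each step remains unidirectional. For degree $3$ or $4$ a fixed convex combination of neighbors works, but for degree $5$ it does not; the paper devotes a section to a backward-inductive ``nice points'' argument showing that suitable positions exist inside a shrinking family of sectors of a small disk around $u$.

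Finally, your reduction to triangulations is too vague. Adding only $O(n)$ Steiner points to two \emph{fixed} drawings to obtain compatible triangulations is precisely what the Aronov--Seidel--Souvaine lower bound rules out in general. The paper instead \emph{morphs} the two drawings (again using Quadrilateral Convexification as the workhorse, one call per edge to be inserted) so that the same set of new edges can be added straight in both, at a total cost of $O(n)$ unidirectional steps; the disconnected case requires an additional argument to line up the components consistently before connecting them.
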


This paper combines four conference papers: \cite{SODA-morph} designs a general algorithmic scheme for constructing morphs between planar graph drawings and proves the first polynomial upper bound on the number of morphing steps; \cite{Barrera-unidirectional}~introduces unidirectional morphs; \cite{angelini2013morphing}~introduces techniques to handle non-triangulated graphs; \cite{Angelini-optimal-14} solves a crucial subproblem of   ``convexifying a quadrilateral'' with a single unidirectional morph, yielding a linear bound on the total number of morphing steps (which they prove optimal).  New to this version is the handling of disconnected graphs.


Techniques from our paper have been used in algorithms to morph Schnyder drawings~\cite{Barrera-Cruz-Schnyder}, and algorithms to morph convex drawings while preserving convexity~\cite{adflpr-omcd-15}. 

\remove{
We note that a newer result~\cite{Angelini-optimal-14} reduces the bound in Theorem~\ref{th:main} to a total of $O(n)$ unidirectional morphs, which is optimal. While this new result supersedes ours, all its main ideas are borrowed from our paper, with the exception of an elegant solution to the Quadrilateral Convexification problem with $O(1)$ unidirectional morphs, achieved via a previously unseen connection between that problem and the existence of hierarchical planar convex drawings of hierarchical planar triconnected graphs. Beside the use of this connection, the algorithm in~\cite{Angelini-optimal-14} is {\em exactly} the same as ours, which justifies our belief that this paper has been foundational to all subsequent research done in this topic (see~\cite{Angelini-optimal-14},~\cite{Barrera-Cruz-Schnyder} for algorithms to morph Schnyder drawings, and~\cite{adflpr-omcd-15} for algorithms to construct convex morphs).
}

From a high-level perspective, 
our proof of Theorem~\ref{th:main} has two parts: to solve the problem for the special case of a maximal planar graph, in which case both drawings are triangulations; and to reduce the general problem to this special case. 

Previous papers~\cite{SODA-morph,GS,t-dpg-83} reduced the general case to the case of triangulations by finding 
``compatible'' triangulations of both drawings, which increases the size of the graph to $O(n^2)$.
We improve this by making use of the freedom to morph the drawings.  
Specifically, we show that after a sequence of $O(n)$ unidirectional morphs 
we can triangulate both drawings with the same edges.
Thus we reduce the general problem to the case of triangulations with the same input size.

For the case of triangulations, the main idea of our algorithm is the same one that Cairns used to prove existence of a morph, namely to find a vertex $v$ that can be contracted in both drawings to a neighbor $u$ while preserving planarity.
Contracting $v$ to $u$ gives us two planar drawings of a smaller graph.
Using recursion, we can find a morph between the smaller drawings that consists of unidirectional morphs.
Thus the total process is to move $v$ along a straight line to $u$ in the first drawing (a unidirectional morph since only one vertex moves), perform the recursively computed morph, and then, in the second drawing, reverse the motion of $v$ to $u$.  

Note that this process allows a vertex to become coincident with another vertex, so it is not a true morph, but rather what we call a  ``pseudomorph''. 
There are two main issues with this plan: 
 (1) to deal with the fact that  there may not be a vertex $v$ that can be contracted to the same neighbor $u$ in both drawings; and 
(2) to convert a pseudomorph to a morph. 
We will give a few more details on each of these.

A contractible vertex always exists if we are dealing with a single triangulation:
because the graph is planar, there is an internal vertex  $v$ of degree less than or equal to 5; because the graph is triangulated, the neighbors of $v$ form a polygon of at most 5 vertices; and by an easy geometric argument, such a polygon always has a vertex $u$ that ``sees'' the whole polygon, so $v$ can be contracted to $u$ while preserving planarity.  In our situation there is one complication---we want to contract $v$ to the same neighbor in both drawings.
This is easy to solve in the case where our low-degree vertex $v$ has neighbors that form a convex polygon in the first drawing---in this case every neighbor of $v$ sees the whole convex polygon in the first drawing, so we can choose the vertex $u$ that works in the second drawing.  Our general solution will be to morph the first drawing so that $v$'s neighbors form a convex polygon (or at least ``convex enough'', in the sense that $u$ sees the whole polygon).   This was the same approach that Cairns used, though his solution required an exponential number of morphing steps, and our solution will only require two morphing steps.

To summarize the first part of our algorithm to morph between two triangulations: we show that after two unidirectional morphs we can obtain a vertex $v$ that can be contracted to the same neighbor in both drawings. 

After performing the contraction we apply induction to find a morph (composed of unidirectional morphs) between the two smaller drawings.  
The last issue is to convert the resulting pseudomorph to a true morph.
Instead of contracting $v$ to a neighbor $u$, we must keep $v$ close to, but not coincident with, $u$, while we follow the morph of the smaller graph. 
Cairns solved this issue by keeping $v$ at the centroid of its surrounding polygon, but this results in 
non-linear motion for $v$~\cite{SODA-morph}.
We will find a position for $v$ in each drawing during the course of the morph so that the linear motion from one drawing to the next remains unidirectional.

Putting together the two parts of our algorithm, the total number of unidirectional morphs satisfies $S(n) = S(n-1) + O(1)$, which solves to $O(n)$.  

This completes the high-level overview of our algorithm.  From a low-level perspective, the 
heart of our algorithm is a solution to a problem we call \emph{Quadrilateral Convexification}: given a triangulation containing a non-convex quadrilateral, morph the triangulation to make the quadrilateral convex.

Our morphing algorithm uses $O(n)$ calls to Quadrilateral Convexification, plus $O(n)$ other unidirectional morphs.  We will show that Quadrilateral Convexification can be accomplished with a single unidirectional morph, and thus our total bound is $O(n)$ unidirectional morphs. Our solution to the Quadrilateral Convexification problem is achieved via a connection to the existence of hierarchical planar convex drawings of hierarchical triconnected planar graphs.

We also show that the linear bound of Theorem~\ref{th:main} is asymptotically optimal in the worst case. Namely, we have the following. 

\begin{theorem}\label{th:lb-bound}
There exist two straight-line planar drawings of an $n$-vertex path such that any planar morph between them consists of $\Omega(n)$ linear morphs.
\end{theorem}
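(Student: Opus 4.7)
The plan is to exhibit two drawings of an $n$-vertex path whose ``total turning'' differs by $\Omega(n)$, and then to show that any single linear morph can change this invariant by at most $O(1)$.

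For the construction, I would let $\Gamma_0$ be a spiral drawing of the path $v_1 v_2 \cdots v_n$ in which every internal vertex $v_j$ makes the same nonzero exterior turn (say, a left turn of $\pi/2$), with edge lengths geometrically decreasing so that the path winds inward without any crossings. This yields a straight-line planar drawing whose signed exterior angles sum to $(n-2)\pi/2 = \Omega(n)$. For $\Gamma_1$, I would place all vertices in order on a horizontal line, so every exterior angle is $0$ and the total turning vanishes.

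The invariant I would use is $I(\Gamma) = \sum_{j=2}^{n-1} \alpha_j$, where $\alpha_j \in (-\pi,\pi)$ is the signed exterior angle at $v_j$. Straight-line planarity forces $\alpha_j \neq \pm\pi$ throughout the morph, since $\alpha_j = \pm\pi$ would mean that the two edges incident to $v_j$ overlap. Consequently each $\alpha_j(t)$ varies continuously inside $(-\pi,\pi)$, so $I(\Gamma(t))$ is continuous in $t$. Writing $\theta_i(t)$ for the continuously lifted direction angle of edge $v_i v_{i+1}$, the sum telescopes to $I(\Gamma(t)) = \theta_{n-1}(t) - \theta_1(t)$.

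The heart of the argument, which I expect to be the main technical step, is the following. During a linear morph every vertex moves linearly in $t$, so the direction vector $v_{i+1}(t) - v_i(t)$ of every edge is itself a linear function of $t$, tracing a straight segment in $\mathbb{R}^2$. Since the edge never degenerates, this segment avoids the origin, and any segment in $\mathbb{R}^2 \setminus \{0\}$ subtends an angle strictly less than $\pi$ at the origin. Hence the continuously lifted angle $\theta_i$ changes by strictly less than $\pi$ during a single linear morph. Applying this to $i = 1$ and $i = n-1$ gives $|I(\Gamma(1)) - I(\Gamma(0))| < 2\pi$.

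Putting the pieces together, if a morph from $\Gamma_0$ to $\Gamma_1$ consists of $k$ linear steps, then $|I(\Gamma_0) - I(\Gamma_1)| < 2\pi k$, so $k = \Omega(n)$. The main obstacles I anticipate are (i) verifying straight-line planarity of the spiral construction for an explicit choice of coordinates and (ii) justifying that the short-arc choice $\alpha_j \in (-\pi,\pi)$ is indeed the continuous lift induced along the morph; both ultimately reduce to the non-overlap of consecutive edges, which is guaranteed by straight-line planarity.
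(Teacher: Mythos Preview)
Your proposal is correct and follows essentially the same approach as the paper: the paper's Lemma~\ref{le:lb-rotation} is your ``segment avoids the origin'' argument that each edge direction rotates by less than $\pi$ in a linear morph, and the paper's Lemma~\ref{le:lb-diff-rot} is exactly your obstacle~(ii) that non-overlap of consecutive edges keeps $\alpha_j$ in $(-\pi,\pi)$ and hence pins down the continuous lift. The only cosmetic difference is packaging---you telescope the total turning to $\theta_{n-1}-\theta_1$ and bound its change per step by $2\pi$, whereas the paper instead deduces that some single edge accumulates $\Omega(n)$ rotation; the two are equivalent once one observes that the change in your $I$ equals $\rho_{n-1}(M)-\rho_1(M)$ in the paper's notation.
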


In particular, we show that morphing from an $n$-vertex spiral to a straight path takes $\Omega(n)$ linear morphs by defining a measure of the difference between the two drawings that begins at $\Omega(n)$ and changes only by $O(1)$ during a single linear morph.

\remove{
In order to prove Theorem~\ref{th:lb-bound}, we exploit one drawing in which all the edges lie on a common straight line and another drawing in which the edges are ``wrapped'' in a spyral-like fashion. We define a measure representing, given two drawings of a path, how differently the two drawings are wrapped; then we prove that the measure for the two considered drawings is $\Omega(n)$ and that a single linear morph only changes the measure by $O(1)$. This leads to prove Theorem~\ref{th:lb-bound}.}

The rest of the paper is organized as follows. First, in
Section~\ref{sec:definitions} we give formal definitions of our terms and concepts.
Then in Section~\ref{se:overview} we give a more detailed outline of the algorithm.  We fill in solutions to the various subproblems in Sections~\ref{sec:triangulate}--\ref{sec:geometry}. In Section~\ref{se:lowerbound} we present our lower bound. Finally, in Section~\ref{se:conclusions} we conclude and present some open problems.

\section{Definitions and Preliminaries}
\label{sec:definitions}

A \emph{straight-line planar drawing} $\Gamma$ of a graph $G(V,E)$ maps vertices in $V$ to distinct points of the plane and edges in $E$ to non-intersecting open straight-line segments between their end-vertices.
A planar drawing of a graph partitions the plane into topological connected regions called \emph{faces}. The unbounded face is called the \emph{outer face}. 
Two planar drawings of a connected planar graph are \emph{topologically equivalent} if they induce the same circular ordering of the edges around each vertex and have the same outer face.   
Two planar drawings of a disconnected planar graph are \emph{topologically equivalent} if each connected component is topologically equivalent in both drawings, and furthermore, the connected components are nested the same way in both drawings.
A \emph{planar embedding} is an equivalence class of planar drawings of the same graph.
A \emph{plane} graph is a planar graph with a given planar embedding.

Given a vertex $v$ of a graph $G$, the \emph{neighbors} of $v$ are the vertices adjacent to $v$, and the  \emph{degree} of $v$ in $G$, denoted by $\deg(v)$, is the number of neighbors of $v$.

In a plane graph, a \emph{facial cycle} is a closed walk that progresses from one edge $xy$ to the next edge $yz$ in the clockwise cyclic order of edges around vertex $y$.  
In a planar drawing, each inner face is bounded by an outer facial cycle and some number of inner facial cycles, one for each connected component that is drawn inside the face.

\smallskip
\noindent{\bf Morphs.}
If $\Gamma_0$ and $\Gamma_1$ are two 
drawings of the same graph, a \emph{morph} between $\Gamma_0$ and $\Gamma_1$ is  a continuously changing family of drawings of $G$ indexed by time $t \in  [0, 1]$, such that the drawing at time $t = 0$ is $\Gamma_0$ and the drawing at time $t = 1$ is $\Gamma_1$. In this paper we are only concerned with graph drawings in which every edge is drawn as a straight-line segment.  In this case, a morph is specified by the vertex trajectories. 

A \emph{linear morph} is a morph in which every vertex moves along a straight-line segment at uniform speed.
A linear morph is completely specified by the initial and final vertex positions.
If vertex $v$ is at position $v_0$ in the initial drawing (at time $t=0$) and at position $v_1$ in the final drawing (at time $t=1$), then its position at time $t$ during a linear morph is $(1-t)v_0 + tv_1$, for any $0\leq t\leq 1$.  Note that vertices may move at different speeds, and in particular, some vertices may remain stationary.  

If $\Gamma_0$ and $\Gamma_1$ are straight-line planar drawings of a graph, we use $\morph{\Gamma_0, \Gamma_1}$ to denote the linear morph from $\Gamma_0$ to $\Gamma_1$. 
We seek a morph that consists of a sequence of $k$ linear morphs.  
Such a morph can be specified by $k+1$ planar straight-line graph drawings.   If $\Gamma_1, \ldots, \Gamma_{k+1}$ are straight-line planar drawings of a graph, we use $\morph{\Gamma_1, \ldots, \Gamma_{k+1}}$ to denote the morph from $\Gamma_1$ to $\Gamma_{k+1}$ that consists of the sequence of $k$ linear morphs $\morph{\Gamma_i,\Gamma_{i+1}}$, for $i=1, \ldots, k$.

A \emph{unidirectional morph} is a linear morph in which every vertex moves parallel to the same line, i.e.~there is a line $L$ with unit direction vector $\bar \ell$ such that each vertex moves linearly from an initial position $v_0$ to a final position $v_0 + k_v {\bar \ell}$ for some $k_v \in {\mathbb R}$. 
Note that $k_v$ may be positive or negative and that different vertices may move different amounts along direction $\bar \ell$.  We call this an \emph{$L$-directional morph}.
Observe that a linear morph of a single vertex is by default a unidirectional morph.

In this paper we restrict attention to 
topologically equivalent
straight-line planar drawings and to morphs in which every intermediate drawing is straight-line planar.  
From now on we use the term \emph{morph} to mean a \emph{straight-line planarity preserving morph}. 
In particular, during the course of the morph, a vertex may not become coincident with another vertex, nor hit a non-incident edge.


\remove{
In several places we will make use of the following basic result: Let $\Gamma$ be a straight-line planar drawing of a maximal planar graph $G$, let $x,y,z$ be the vertices incident to the outer face of $G$, and let $\Delta$ be the triangle delimiting the outer face of $G$ in $\Gamma$. Let $\Delta'$ be any straight-line planar-drawing of cycle $xyz$ such that $\morph{\Delta,\Delta'}$ is a unidirectional morph. Consider the drawing $\Gamma'$ of $G$ in which $x$, $y$, and $z$ are drawn as in $\Delta'$ and in which every internal vertex is in a position which is a convex combination of the positions of its neighbors with the same coefficients as in $\Gamma$. We have the following:

\begin{lemma} Morph $\morph{\Gamma,\Gamma'}$ is unidirectional.
  \label{lemma:convex-comb}
\end{lemma}

\begin{proof}
First, we note that straight-line drawing $\Gamma'$ defined above exists, is unique, and is planar, by a well-known result of Tutte~\cite{Tutte}. 

Now suppose the morph is indexed by $t \in [0,1]$, the position of any vertex $u$ of $G$ at time $t$ is $u_t$, and the drawing of $G$ at time $t$ is $\Gamma_t$. We prove that $\Gamma_t$ is such that every internal vertex is in a position which is a convex combination of the positions of its neighbors with the same coefficients as in $\Gamma$. The planarity of drawing $\Gamma_t$ (and hence the fact that $\morph{\Gamma,\Gamma'}$ is planarity-preserving) then follows again by~\cite{Tutte}.  Consider any internal vertex $u$ of $G$ with neighbors $w^1,\dots,w^k$ such that $u_0=\sum \gamma_i w^i_0$. We have $u_t=(1-t) \sum \gamma_i w^i_0 + t \sum \gamma_i w^i_1 = \sum \gamma_i [(1-t)w^i_0 + t w^i_1]=\sum \gamma_i w^i_t$. 

It remains to prove that all the vertices move during $\morph{\Gamma,\Gamma'}$ along the same direction. By hypothesis, $x$, $y$, and $z$ move along the direction of a vector $\bar{\ell}$. Without loss of generality, up to a rotation of the Cartesian axes, we can assume that $\bar{\ell}$ is a horizontal line. Then it suffices to prove that the $y$-coordinate $y(u_t)$ of any vertex $u$ is the same at any time $t \in [0,1]$. However, $y(u_t)$ is determined by the solution of the system of equations expressing the $y$-coordinate of each vertex as the convex combination of the $y$-coordinates of its neigbhors (with the same coefficients as in $\Gamma$). Again by~\cite{Tutte}, $y(u_t)$ is unique. Moreover, since the $y$-coordinate of $x$ is the same at any time $t \in [0,1]$ (and similarly for vertices $y$ and $z$), it follows that $y(u_t)$ is the same at any time $t \in [0,1]$.  
\end{proof}
}


In several places we will make use of the following basic result: If we  have a straight-line planar-drawing 
of a maximal planar graph, 
and if we apply a unidirectional morph to the three vertices of the triangular outer face, preserving the orientation of the triangle,  and let the interior vertices follow along linearly, then the result is a unidirectional morph.

\begin{lemma} Let $x,y,z$ be the clockwise vertices of the triangular outer face of a straight-line planar drawing of a maximal planar graph.  Suppose that vertices $x$, $y$, and $z$ move linearly in the direction of a vector $\bar{\ell}$ in such a way that their clockwise order is preserved.  Any point $p$ inside the triangle can be defined as a convex combination of $x$, $y$, $z$, and in this way the motion of $x$, $y$, $z$ determines the motion of  $p$.  The result is a unidirectional morph of the straight-line planar drawing (in particular, planarity is preserved). 
  \label{lemma:convex-comb}
\end{lemma}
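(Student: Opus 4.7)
The plan is to exploit barycentric coordinates together with the fact that a triangle-to-triangle map that sends vertices to vertices and extends by barycentric coordinates is an affine map. I would organize the proof in two steps, corresponding to the two claims (unidirectionality and planarity preservation), with a short check that the hypothesis keeps things non-degenerate.

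\textbf{Step 1: the induced motion is unidirectional.} Fix any interior vertex $v$ and write its initial position as $v_0 = \alpha x_0 + \beta y_0 + \gamma z_0$ with $\alpha+\beta+\gamma = 1$, as guaranteed since $v_0$ lies inside $\triangle x_0 y_0 z_0$. The hypothesis gives $x_t = x_0 + t k_x \bar{\ell}$, and analogously for $y$ and $z$. The position prescribed for $v$ at time $t$ is then
\[
v_t \;=\; \alpha x_t + \beta y_t + \gamma z_t \;=\; v_0 + t\bigl(\alpha k_x + \beta k_y + \gamma k_z\bigr)\bar{\ell},
\]
so $v$ moves at uniform speed along a line parallel to $\bar{\ell}$. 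Since this holds for every vertex (the outer vertices by hypothesis, the inner vertices by the computation above), the whole morph is $L$-directional, where $L$ is any line in the direction of $\bar{\ell}$.

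\textbf{Step 2: planarity is preserved at every time.} For each $t \in [0,1]$, let $\phi_t$ be the map sending a point with barycentric coordinates $(\alpha,\beta,\gamma)$ relative to $(x_0, y_0, z_0)$ to the point with the same barycentric coordinates relative to $(x_t, y_t, z_t)$. By construction, $\phi_t$ sends each vertex of $\Gamma_0$ to its position in $\Gamma_t$, and since a map between triangles that agrees on the three vertices and is linear in barycentric coordinates is an affine transformation, $\phi_t$ is affine. By the hypothesis that the clockwise order of $x, y, z$ is preserved throughout the motion, the triangle $\triangle x_t y_t z_t$ is non-degenerate for every $t$, so $\phi_t$ is a non-singular affine bijection from $\triangle x_0 y_0 z_0$ onto $\triangle x_t y_t z_t$. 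Non-singular affine maps send distinct points to distinct points and straight segments to straight segments, and they preserve the combinatorial pattern of crossings and non-crossings. Hence $\Gamma_t = \phi_t(\Gamma_0)$ is a straight-line planar drawing, topologically equivalent to $\Gamma_0$, for every $t$.

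\textbf{Main (mild) obstacle.} The only point that needs attention is guaranteeing non-singularity of $\phi_t$ for all $t \in [0,1]$: if the triangle were to collapse or flip at some intermediate time, $\phi_t$ would fail to be a bijection and planarity could break. This is exactly ruled out by the hypothesis that the clockwise orientation of $x, y, z$ is preserved throughout the motion, so the proof reduces to the algebra of Step 1 and the standard affine-map reasoning of Step 2.
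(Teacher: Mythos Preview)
Your proof is correct and follows essentially the same approach as the paper: a direct barycentric-coordinate computation for unidirectionality, followed by the observation that the time-$t$ map is a non-degenerate affine transformation and hence preserves planarity. The paper phrases the planarity step slightly differently (an orientation-preserving affine map preserves the orientation of every triangle, hence planarity of the triangulation), but this is the same argument in different clothing.
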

\begin{proof}
Suppose that point $p$ is defined by the convex combination 
$\lambda_1 x + \lambda_2 y + \lambda_3 z$ where $\sum \lambda_i = 1$
and $\lambda_i \ge 0$. 
  Suppose the morph is indexed by $t \in [0,1]$ and that the positions
  of the vertices at time $t$ are $x_t, y_t, z_t, p_t$.  Suppose that
  $x$ moves by $k_1 \bar \ell$, $y$ moves by $k_2 \bar \ell$, and $z$
  moves by $k_3 \bar \ell$.  Thus $x_t = x_0 + tk_1 \bar{\ell}$ etc.
  Then $$p_t = \lambda_1 x_t + \lambda_2 y_t + \lambda_3 z_t \\ =
  \lambda_1 x_0 + \lambda_2 y_0 + \lambda_3 z_0 + t(\lambda_1 k_1 +
  \lambda_2 k_2 + \lambda_3 k_3) \bar{\ell} \\= p_0 + tk \bar{\ell}$$ where
  $k = \lambda_1 k_1 + \lambda_2 k_2 + \lambda_3 k_3$.  Thus $p$ also
  moves linearly in direction $\bar \ell$.
  
The fact that planarity is preserved follows from far more general results: The transformation of points $x$, $y$, $z$ determines an affine transformation of the plane, and by hypothesis, this affine transformation preserves the orientation of triangle $xyz$.   Affine transformations preserve convex combinations---thus our definition of the movement of any interior point $p$ is the same as applying the affine transformation to $p$.  An affine transformation that preserves the orientation of one triangle preserves the orientations of all triangles.  This implies that the drawing is planar at all time points of the morph. 
\end{proof}

\remove{ 
A \emph{(planar linear) morphing step} \mmorph{\Gamma_1,\Gamma_2}, also referred to as
\emph{linear morph} or \emph{step}, of two straight-line planar drawings $\Gamma_1$ and
$\Gamma_2$ of a plane graph $G$ is a continuous transformation of $\Gamma_1$
into $\Gamma_2$ such that: 
\begin{itemize}
\item all the vertices simultaneously start moving from their positions in $\Gamma_1$; 
\item each vertex moves at constant speed along a straight-line trajectory; 
\item all the vertices simultaneously stop at their positions in $\Gamma_2$;
\item no crossing occurs between any two edges during the transformation; and
\item no two vertices are mapped to the same point during the transformation.
\end{itemize}

A \emph{morph} \mmorph{\Gamma_s,\dots,\Gamma_t} of two straight-line planar
drawings $\Gamma_s$ into $\Gamma_t$ of a plane graph $G$ is a finite sequence of
morphing steps that transforms $\Gamma_s$ into $\Gamma_t$.
}

\smallskip
\noindent{\bf Geometry and Triangulations.}
We will assume that our input graph drawings have vertices in general position, that is, no three vertices lie on the same line.  
We can achieve this property by a linear number of preliminary unidirectional morphing steps that slightly perturb the positions of the vertices.

A \emph{triangulation} is a straight-line planar drawing of a maximal planar graph.  Every face in a triangulation (including the outer face) is a triangle.  The three vertices of the outer face are called \emph{boundary vertices} and the others are called \emph{internal} or \emph{non-boundary} vertices.

If $v$ is an internal vertex of a triangulation, we use $\Delta(v)$ to denote the polygon formed by the neighbors of $v$.
For a simple polygon in the plane, the \emph{kernel} of the polygon consists of the points inside the polygon  from which the whole polygon is visible.  Note that the kernel of any polygon is convex.
The following result
was noted by Cairns and can be proved by simple case analysis.

\begin{lemma}
If $P$ is a polygon with four or five vertices (a quadrilateral or a pentagon) then at least one vertex of $P$ is contained in the kernel of $P$.
\label{lem:5-kernel}
\end{lemma}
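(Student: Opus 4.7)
The plan is to reduce the lemma to a purely combinatorial visibility question: I will show that some vertex $v$ of $P$ sees every \emph{other} vertex of $P$ through the closed polygon. This suffices, because once $v$ sees all other vertices, the segments from $v$ to the non-adjacent vertices give a fan triangulation of $P$; any point of $P$ lies in one of the triangles of this fan, hence is visible from $v$, so $v$ is in the kernel. If $P$ is convex, every vertex already works, so I may assume $P$ has at least one reflex vertex.

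For a non-convex quadrilateral there is exactly one reflex vertex $v$. The two neighbors of $v$ are visible from $v$ along polygon edges, and the fourth vertex is connected to $v$ by the unique diagonal of the quadrilateral that lies inside $P$ (this follows from the reflex angle at $v$ exceeding $180^\circ$, which forces the other diagonal to exit the polygon). Hence $v$ sees every other vertex of $P$ and so lies in the kernel.

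For a non-convex pentagon I would use the combinatorial observation that \emph{any two non-crossing diagonals of a pentagon share an endpoint}. This is a one-line enumeration: on the $5$-cycle $v_1v_2v_3v_4v_5$ there are exactly five diagonals, and inspecting each pair shows the non-crossing pairs always share a vertex. Since every simple polygon can be triangulated (by the Two Ears theorem, for instance), $P$ admits a triangulation consisting of two non-crossing diagonals; by the observation, these two diagonals meet at some common vertex $v$. The two diagonals from $v$ reach precisely the two vertices non-adjacent to $v$, and the remaining two vertices are neighbors of $v$, so $v$ sees every other vertex of $P$ and hence lies in the kernel.

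The main obstacle is really just the little combinatorial fact about non-crossing diagonals of a pentagon; once that is verified by the tiny case enumeration, combining it with the existence of a triangulation and the kernel/visibility equivalence from the first paragraph produces the required vertex immediately.
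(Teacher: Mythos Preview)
Your argument is correct. The paper does not actually prove this lemma; it only states that the result ``was noted by Cairns and can be proved by simple case analysis,'' so there is no detailed proof to compare against. What you have written is one clean way to carry out that case analysis: reduce ``$v$ lies in the kernel'' to ``$P$ admits a fan triangulation from $v$,'' and then produce such a $v$ from an arbitrary triangulation of $P$. The quadrilateral case is standard; for the pentagon, the observation that the two diagonals of any triangulation of a $5$-gon must share a vertex (since vertex-disjoint chords of a $5$-cycle always interleave) is exactly the right shortcut and avoids enumerating reflex-vertex patterns directly.

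One implicit step you may want to spell out: when you pass from ``the two triangulation diagonals do not cross as segments inside $P$'' to your combinatorial enumeration on the abstract $5$-cycle, you are using that two interior chords of a simple polygon whose endpoints alternate on the boundary must cross. This is immediate (the first chord separates the polygon into two pieces, one containing each endpoint of the second chord), but as written you invoke it silently.
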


We will use the result in the following form: If $v$ is an internal vertex of degree at most $5$ in a triangulation, then $\Delta(v)$ has a vertex in its kernel.

\smallskip
\noindent{\bf Contractions and Pseudomorphs.}
A main tool we use in our morphing algorithm is vertex contraction in a triangulation. 
Contracting edge $uv$ in a graph has the standard meaning, namely we replace $u$ and $v$ by a new vertex adjacent to all the neighbors of $u$ and $v$. 
We now define contraction in a triangulation.
Let $\Gamma$ be a drawing of a maximal planar graph $G$.  Let $v$ be an internal vertex and let 
$u$ be a neighbor of $v$ that lies in the kernel of $\Delta(v)$.  \emph{Contracting} $v$ to $u$ means moving $v$ linearly from its original position to $u$ while all other vertices remain fixed.  Because the kernel is convex, every intermediate drawing is straight-line planar.  Thus, this is a morph (and in fact a unidirectional morph) except for the final drawing in which $v$ becomes coincident with $u$.  By our general position assumption the final drawing is a straight-line planar drawing of the graph formed by contracting edge $uv$.

\remove{
Let $\Gamma$ be a planar straight-line drawing of a plane graph $G$.
The \emph{kernel} of a vertex $v$ of $G$ in $\Gamma$ is the open convex region of the plane such that: for each point $p$ of the region, placing $v$ in $p$ while maintaining unchanged the position of any other vertex of $G$ yields a planar straight-line drawing of $G$.  \rednote{AL. Emphasize that the embedding must be preserved.} See Figure~\ref{fig:k-no-xcontr}. Note that:
\begin{inparaenum}[$(i)$]
\item the kernel of $v$ in $\Gamma$ is non-empty;
\item $v$ is the unique vertex of $G$ lying in the interior of its kernel; and
\item the boundary of the kernel of $v$ might not contain, in general, any of the neighbors of $v$, as in Figure~\ref{fig:k-no-xcontr}.
\end{inparaenum}

If a neighbor $x$ of $v$ lies on the boundary of the kernel of $v$ in $\Gamma$, we say that $v$ is \emph{\xc}. \rednote{AL.  Let's define contractible the more obvious way.}
Further, we define the \emph{contraction of $v$ onto $x$} in $\Gamma$ as the operation resulting in:
\begin{enumerate}[$(i)$]
\item a simple graph $G'= G/(v,x)$ obtained from $G$ by removing $v$ and by replacing each edge $vw$, where $w \neq x$, with an edge $xw$ (possible copies of the same edge are removed); and
\item a planar straight-line drawing $\Gamma'$ of $G'$ such that each vertex different
from $v$ is mapped to the same point as in $\Gamma$.
\end{enumerate}
Observe that, as the vertices of $G$ are in general position in $\Gamma$, $\Gamma'$ is guaranteed to be planar. Also, note that if $v$ is \xc, then by the convexity of the kernel of $v$, the open straight-line segment representing edge $vx$ entirely lies in the kernel of $v$. It follows that no crossing occurs if $v$ moves towards $x$ along this segment.

Consider a drawing $\Gamma''$ of $G'=G/(v,x)$.
We define the \emph{uncontraction of $v$ from $x$} in $\Gamma''$ as the operation resulting in a planar straight-line drawing $\Gamma^*$ of $G$ such that each vertex of $G'$ has in $\Gamma^*$ the same position as in $\Gamma''$. Further, we define the \emph{uncontraction kernel} of $v$ in $\Gamma''$ as the open (convex) region of the plane where $v$ can be placed when performing the uncontraction from $x$. See Figure~\ref{fig:k-unc}.

Note that, the uncontraction kernel of $v$ in $\Gamma''$ coincides with the kernel of $v$ in $\Gamma^*$ (see Figure~\ref{fig:k-xcontr}). Also, since vertex $x$ is adjacent in $G'$ to all the neighbors of $v$ in $G$, $x$ lies on the boundary of the uncontraction kernel of $v$ in $\Gamma''$, and hence on the boundary of the kernel of $v$ in $\Gamma^*$. We will extensively exploit this property in Section~\ref{se:geometry}.

\begin{figure}[tb]
\centering
\def\sf{1.2}
\subfigure[]{\includegraphics[scale=\sf]{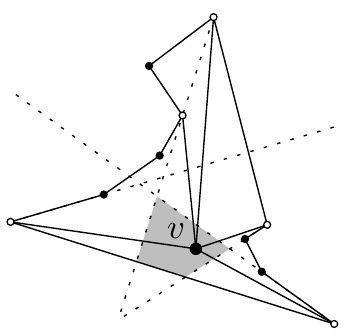}\label{fig:k-no-xcontr}}\hfill
\subfigure[]{\includegraphics[scale=\sf]{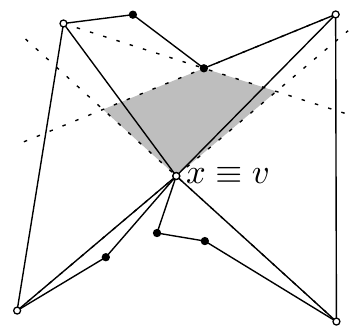}\label{fig:k-unc}}\hfill
\subfigure[]{\includegraphics[scale=\sf]{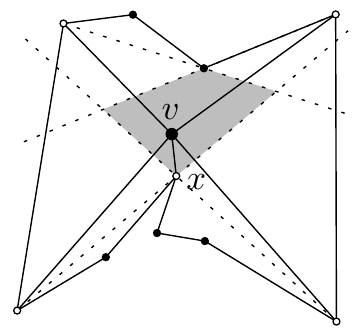}\label{fig:k-xcontr}}
\caption{\subref{fig:k-no-xcontr} The kernel of a vertex $v$ in a planar straight-line drawing of $G$. \subref{fig:k-unc}~Uncontraction kernel of $v$ in a planar straight-line drawing $\Gamma'$ of $G'=G/(v,x)$. \subref{fig:k-xcontr}~Kernel of $v$ in a drawing $\Gamma^*$ of $G$ obtained by uncontracting $v$ from $x$ in $\Gamma'$.\label{fig:kernel}}
\end{figure}
} 

\remove{ 
As in the works of Cairns~\cite{c-dplc-44} and Thomassen~\cite{t-dpg-83}, our approach makes an extensive use of contractions, uncontractions, and recursion. Namely, we aim at
recursively computing a morph of the input drawings by contracting, at each step, a vertex $v$ onto the same neighbor $x$ in both the source and the target drawings $\Gamma_s$ and $\Gamma_t$, respectively. Since a vertex $v$ that can be contracted onto the same neighbor $x$ in $\Gamma_s$ and $\Gamma_t$ might not exist, we may need to suitably morph $\Gamma_s$ and $\Gamma_t$ into two drawings $\Gamma_s^x$ and $\Gamma_t^x$, respectively, in which such a contraction is possible. In order to make these transformations possible, as in~\cite{c-dplc-44} and in~\cite{t-dpg-83}, we focus on low-degree ``candidate'' vertices, leveraging on their topological and geometric properties.

Namely, we say that a vertex $v$ is a \emph{\cand to the contraction} (or simply a \emph{\cv}) if: 
\begin{enumerate}[$(i)$]
\item $\deg(v)\le 5$; and
\item if two of its neighbors $u$ and $w$ are connected by an edge, then \face{u,v,w} is a simple face of $G$. 
\rednote{AL. ``Simple'' does not quite capture what we want here.}
\end{enumerate}

In Section~\ref{se:candidate_existence} we show that it is always possible to find a \cv in a plane graph while, in Section~\ref{se:xcontractible}, we describe how to compute a morph that makes \xc a \cv.
}


Our algorithm for morphing between two triangulations $\Gamma_1$ and $\Gamma_2$ works by contracting some vertex $v$  to the same neighbor $u$ in both drawings, and then recursively morphing between the two smaller triangulations.  
Expressing this as a transformation from $\Gamma_1$ to $\Gamma_2$, we contract $v$ to $u$ in $\Gamma_1$, apply the recursively computed morph, and then reverse the contraction of $v$ to $u$ in $\Gamma_2$.  We call this last step ``uncontraction'' of $v$.
The complete transformation is called a \emph{pseudomorph}.  It is not a morph    
because $v$ becomes coincident with $u$. 
One of our main technical contributions is to show that every pseudomorph that is composed of unidirectional morphs can be converted to a ``true'' morph with the same number of unidirectional steps.

Our model of computing is a real RAM. 

\remove{
Let $\Gamma_1$ and $\Gamma_2$ be two drawings of the same plane graph $G$ in which $v$ is contractible onto the same neighbor $x$. 
We define a \emph{pseudo-morph} of $\Gamma_1$ into $\Gamma_2$ as a sequence of operations composed of:
\rednote{AL.  Define a pseudomorph to be a sequence of contraction, uncontraction, morph steps (as in the original SODA version).  Although we will only use one <contraction, morph, uncontraction> in the main loop, it is useful to have the richer set for the 4-gon and 5-gon convexification.  Define the number of steps of a pseudomorph to count contraction and uncontraction as steps.} 
\begin{enumerate}[$(i)$]
\item the contraction of $v$ onto $x$ in $\Gamma_1$, resulting in a drawing $\Gamma_1'$ of $G'=G/(v,x)$;
\item the morph of $\Gamma_1'$ into a drawing $\Gamma_2'$ of $G'$; and
\item the uncontraction of $v$ in $\Gamma_2'$ resulting in $\Gamma_2$.
\end{enumerate}

Section~\ref{se:geometry} describes how to compute a morph from $\Gamma_1$ to $\Gamma_2$ starting from a pseudo-morph from $\Gamma_1$ to $\Gamma_2$.
}

\section{Overview of the Algorithm}
\label{se:overview}

In this section we describe all the ingredients of our algorithm and how they fit together.

Our most basic building block is an algorithm to morph a triangulation so that a given quadrilateral formed by two adjacent triangles becomes convex.  One chord of the quadrilateral will lie inside the quadrilateral.  A necessary condition is that the other chord not be part of the triangulation. Specifically, we solve the following problem:

\begin{problem}[Quadrilateral Convexification]
Given an $n$-vertex triangulation $\Gamma$ and given a quadrilateral $abcd$ in $\Gamma$ with no vertex inside it  and such that neither $ac$ nor $bd$ is an edge outside of $abcd$ (i.e.,~$abcd$ does not have external chords), 
morph $\Gamma$ so that $abcd$ becomes convex.
\label{prob:quad-convex}
\end{problem}

We solve Quadrilateral Convexification in Section~\ref{sec:quad-convexification} giving the following result: 

\def\quadconvex{Quadrilateral Convexification can be solved via a 
single unidirectional morph.
Furthermore, 
such a morph can be found in $O(n^2)$ time.}
\begin{theorem}
\quadconvex
\label{thm:quad-convex}
\end{theorem}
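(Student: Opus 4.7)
The plan is to solve Quadrilateral Convexification by reducing it to a known result about hierarchical planar convex drawings of triconnected hierarchical planar graphs, as the excerpt foreshadows. I would fix a direction $\bar{\ell}$ for the unidirectional morph and let $\lambda(v)$ denote the coordinate of each vertex $v$ perpendicular to $\bar{\ell}$. Along a $\bar{\ell}$-directional morph, $\lambda$ is an invariant, so it acts as a level assignment. I would pick $\bar{\ell}$ in a generic direction so that $\lambda$ is injective on vertices and no edge of $\Gamma$ is perpendicular to $\bar{\ell}$; almost every direction qualifies, and an explicit choice can be computed in $O(n^2)$ time by avoiding the $O(n^2)$ directions determined by pairs of vertices.

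With $\bar{\ell}$ fixed, $\Gamma$ becomes a hierarchical planar straight-line drawing in which every edge is $\lambda$-monotone. Let $G^-$ be the plane graph obtained from $G$ by deleting the internal chord $ac$ of the quadrilateral, so that $abcd$ becomes a single face of $G^-$. Since $\Gamma$ is a triangulation and the hypothesis forbids $bd$ from being an external chord, $G^-$ together with its hierarchical structure satisfies exactly the connectivity conditions needed for the hierarchical convex drawing theorem to apply with $abcd$ as the designated face. I would invoke this theorem to obtain a hierarchical planar drawing $\Gamma^-$ of $G^-$ that preserves the levels $\lambda$ and draws every face, including $abcd$, as a convex polygon. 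Re-inserting the chord $ac$ with the same endpoint positions yields a straight-line planar drawing $\Gamma'$ of $G$ in which $abcd$ is convex.

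The final step---and the main technical obstacle---is to show that the linear morph $\morph{\Gamma, \Gamma'}$ is straight-line planarity preserving. Both endpoints are hierarchical planar drawings of the same hierarchical graph with identical level function $\lambda$, and all motion is along $\bar{\ell}$, so at every time $t \in [0,1]$ the snapshot $\Gamma_t$ has the same levels as $\Gamma$. I expect planarity of $\Gamma_t$ to follow because each face of $G^-$ is bounded by an upper $\lambda$-monotone path and a lower $\lambda$-monotone path; if the upper path lies strictly above the lower path at $t = 0$ and at $t = 1$, then a convex-combination argument in the spirit of Lemma~\ref{lemma:convex-comb}, applied level-by-level to the $\bar{\ell}$-coordinates, forces the same inequality to hold at every intermediate $t$. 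The claimed $O(n^2)$ running time will then follow from solving the sparse linear system of size $O(n)$ underlying the construction of the hierarchical convex drawing via Gaussian elimination.
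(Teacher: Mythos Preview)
Your overall strategy matches the paper's: delete the internal chord, invoke Hong--Nagamochi's theorem on hierarchical convex drawings to redraw the resulting graph with convex faces while keeping all levels fixed, re-insert the chord, and argue that the straight linear interpolation between the two drawings is planar and (trivially) unidirectional. The 3-connectivity argument and the level-by-level sidedness argument for planarity of the morph are also what the paper does.

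The gap is in your choice of $\bar\ell$. You assert that a generic direction suffices, but Hong--Nagamochi's theorem requires the hierarchical plane graph to be an \emph{st-graph}: every face boundary must decompose into two $\lambda$-monotone paths from a source to a sink. Triangular faces satisfy this automatically once $\lambda$ is injective, but the quadrilateral face $abcd$ created by deleting the chord $ac$ need not. For instance, if $c$ is the reflex vertex, then $c$ lies inside triangle $abd$ and $\lambda(c)$ is always strictly between the extremes of $\lambda(a),\lambda(b),\lambda(d)$; a generic direction can make $a$ the minimum and $b$ the maximum, and then the boundary walk $a\text{--}d\text{--}c\text{--}b$ is monotone only if $\lambda(d)<\lambda(c)$, which nothing in your setup enforces. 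In that situation $(G^-,\lambda)$ is not an st-graph, Hong--Nagamochi does not apply, and your planarity argument---which explicitly presupposes the upper/lower monotone-path decomposition of each face---collapses for the same reason.

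The paper repairs exactly this point by choosing $\bar\ell$ \emph{non}-generically: it rotates the frame so that the deleted chord is almost parallel to $\bar\ell$, and further so that no vertex has level strictly between the levels of the two chord endpoints. Since the remaining two quadrilateral vertices lie on opposite sides of the chord, they become the unique $\lambda$-minimum and $\lambda$-maximum of the face, and the two boundary paths through the chord endpoints are automatically monotone. With this single correction to the choice of direction, your plan becomes exactly the paper's proof.
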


We will prove our main result, Theorem~\ref{th:main}, by finding a morph that consists of 
$O(n)$ calls to Quadrilateral Convexification plus $O(n)$ further unidirectional morphs.
Together with the above Theorem~\ref{thm:quad-convex}, this gives a total bound of $O(n)$ unidirectional morphs.

%
%

The proof of Theorem~\ref{th:main} has two parts.
In Section~\ref{sec:triangulate} we reduce the problem to the case of triangulations.  Specifically, we show that given two topologically equivalent straight-line planar drawings of a graph $G$ on $n$ vertices, 
we can enclose each drawing with a triangle and then morph and add (the same) edges to triangulate both drawings, using $O(n)$ unidirectional morphs; this results in two triangulations which are topologically equivalent drawings of the same maximal planar graph. Note that a morph between these augmented drawings provides a morph of the originals by ignoring the added edges.
  
In Section~\ref{sec:morph-triangulations} we prove Theorem~\ref{th:main} for the case of triangulations, using $O(n)$ unidirectional morphs. Thus the two sections together prove Theorem~\ref{th:main}.

In all cases when we say that we find a morph, we actually find a pseudomorph (as defined in the previous section) and rely on the following theorem (which is proved in Section~\ref{sec:geometry}) to convert the pseudomorph to a true morph:

\def\pseudomorph{Let $\Gamma_1$ and $\Gamma_2$ be two triangulations that are topologically equivalent drawings of an $n$-vertex maximal planar graph $G$.
Suppose that  there is a pseudomorph from $\Gamma_1$ to $\Gamma_2$ 
in which we contract an internal vertex $v$ of degree at most 5, perform $k$ unidirectional morphs, and then uncontract $v$.
Then there is a morph $\cal M$ from $\Gamma_1$ to $\Gamma_2$ that consists of $k+2$ unidirectional morphs.  Furthermore, given the sequence of $k+1$ drawings that define the pseudomorph, we can modify them to obtain the sequence of drawings that define $\cal M$ in $O(k+n)$ time.}
\begin{theorem}
\pseudomorph
\label{thm:pseudomorph}
\end{theorem}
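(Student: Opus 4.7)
My plan is to lift the pseudomorph to a true morph by placing $v$ at a carefully chosen position near $u$ in each snapshot. Write the pseudomorph as a sequence $\Gamma'_1,\ldots,\Gamma'_{k+1}$ of drawings of $G'=G/(v,u)$, with each $\langle\Gamma'_i,\Gamma'_{i+1}\rangle$ a unidirectional morph in some direction $\bar{\ell}_i$, where $\Gamma'_1$ (resp.\ $\Gamma'_{k+1}$) comes from contracting $v$ to $u$ in $\Gamma_1$ (resp.\ $\Gamma_2$). I will construct a drawing $\Gamma^i$ of $G$ for each $i$, identical to $\Gamma'_i$ on every vertex except $v$, which is placed at $v^i=u^i+\delta^i$ for a small offset $\delta^i$ to be determined. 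The true morph will then be
\[
\mathcal{M}=\langle\Gamma_1,\Gamma^1,\Gamma^2,\ldots,\Gamma^{k+1},\Gamma_2\rangle,
\]
a sequence of $k+2$ linear morphs. The first and the last move only $v$, so they are unidirectional by default; the $k$ middle morphs should recover the unidirectional steps of the pseudomorph, with $v$ now a genuine moving vertex.

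For the middle morph from $\Gamma^i$ to $\Gamma^{i+1}$ to be unidirectional in direction $\bar{\ell}_i$, I need $v^{i+1}-v^i$ parallel to $\bar{\ell}_i$. Since $u^{i+1}-u^i$ is already parallel to $\bar{\ell}_i$, this reduces to requiring that $\delta^{i+1}-\delta^i$ be parallel to $\bar{\ell}_i$. For planarity of $\Gamma^i$, I need $v^i$ to lie in the open kernel of the polygon $\Delta(v)$ formed by $v$'s neighbors in $\Gamma'_i$; because $u$ was a valid contraction target throughout, $u^i$ already lies in the closure of this kernel, giving a non-trivial open cone of admissible offset directions at $u^i$. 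Planarity along each linear interpolation then follows from convexity of the kernel together with continuity of its boundary as a function of the vertex positions, provided the offsets are chosen small enough.

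I will construct the $\delta^i$'s inductively. Pick $\delta^1$ of magnitude $\epsilon$ in a direction interior to the admissible cone at $u^1$. For each $i=1,\ldots,k$, pick $\delta^{i+1}$ on the line through $\delta^i$ in direction $\bar{\ell}_i$ at a point that still lies in the admissible cone at $u^{i+1}$. With $\epsilon$ sufficiently small and the initial direction of $\delta^1$ chosen generically with respect to the finite set $\{\bar{\ell}_1,\ldots,\bar{\ell}_k\}$, these line--cone intersections are non-empty and yield valid offsets.

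The main obstacle I expect is verifying that the admissible cone is truly $2$-dimensional along the \emph{entire} trajectory of $u$ within each unidirectional step---not just at the two endpoints---so that the required line--cone intersections survive throughout the step. This is where I anticipate invoking both the degree-at-most-$5$ hypothesis (bounding the complexity of $\Delta(v)$ and hence the shape of its kernel) and the fact that the pseudomorph itself preserves planarity after the contraction (so the kernel containing $u$ never degenerates during a step). Once the offsets are fixed, producing the $k+3$ drawings that define $\mathcal{M}$ from the $k+1$ snapshots of the pseudomorph is local work at each snapshot, yielding the claimed $O(k+n)$ running time, where the $+n$ term accounts for reading/writing $\Gamma_1$ and $\Gamma_2$.
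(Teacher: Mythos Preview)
Your high-level plan---place $v$ at an offset $\delta^i$ from $u$ in each snapshot, with $\delta^{i+1}-\delta^i$ parallel to $\bar\ell_i$---is exactly what the paper does. But the forward greedy construction you propose has a real gap.

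The problem is that choosing each $\delta^i$ without looking ahead can paint you into a corner. Concretely, put $u$ at the origin, take $\bar\ell_i$ horizontal, and suppose the sector of admissible offsets at step $i$ is wide (say between the rays at $45^\circ$ and $135^\circ$), while at step $i{+}1$ it has collapsed to a narrow sliver close to horizontal (say between $5^\circ$ and $10^\circ$). This can happen: the two neighbours of $u$ on the pentagon $\Delta(v)$ move horizontally during the step and may end up nearly collinear with $u$. If you picked $\delta^i$ near the top of the wide sector, at height close to $\varepsilon$, the horizontal line through $\delta^i$ lies entirely above the narrow sector at step $i{+}1$, and there is no admissible $\delta^{i+1}$. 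Shrinking $\varepsilon$ does not help (the whole picture scales), and a ``generic'' choice of $\delta^1$ does not help either---the obstruction comes from the geometry of later steps, not from a measure-zero degeneracy.

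The paper fixes this by running the induction \emph{backwards}. Declare every interior point of $S_{k+1}$ to be \emph{nice}, and say a point of $S_i$ is nice if the line through it in direction $\bar\ell_i$ meets a nice point of $S_{i+1}$. The key lemma is that the nice set in each $S_i$ is an $\bar\ell_i$-truncation of $S_i$ (the intersection of $S_i$ with an open slab parallel to $\bar\ell_i$ containing $u$), hence non-empty. The proof needs a case split according to whether $\bar\ell_i$ passes through the interior of the sector (two-sided) or not (one-sided). Once nice sets are known to be non-empty, you pick $v_1$ nice and step forward; now every forward step succeeds by construction.

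There is a second gap you flag but do not close: why does $v$ stay in the kernel \emph{during} each step, not just at its two snapshots? ``Convexity of the kernel plus continuity'' is not enough on its own, because the kernel boundary moves nonlinearly even in a unidirectional morph. The paper proves a separate lemma: if $v_i\in S_i$, $v_{i+1}\in S_{i+1}$, and $v_iv_{i+1}$ is parallel to $\bar\ell_i$, then the moving point stays in the sector at every intermediate time. This reduces (again via the one-sided/two-sided split) to a sidedness-preservation fact: a point moving linearly in direction $\bar\ell_i$ stays on the same side of a line through two other points also moving linearly in direction $\bar\ell_i$, provided it is on that side at both endpoints.

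Finally, the degree hypothesis is used concretely, not just as a complexity bound. For $\deg(v)\in\{3,4\}$ the paper avoids all of the above: $v$ is placed at a fixed convex combination of its neighbours (the three triangle vertices, or the two endpoints of the diagonal through $u$), which is automatically unidirectional by Lemma~\ref{lemma:convex-comb}. The sector machinery is invoked only for $\deg(v)=5$, where $\Delta(v)$ is a pentagon and the kernel near $u$ is exactly the sector described above.
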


Note that although the proofs of Theorems~\ref{thm:quad-convex} and~\ref{thm:pseudomorph} are deferred until the last two sections of the paper, they come first in terms of the dependency of results.

\remove{
\begin{algorithm}[tb]
\renewcommand{\thealgorithm}{\texttt{Morph}} 
\caption{\PROC{}{$\Gamma_s$,$\Gamma_t$}}\label{algo:morph}
\begin{algorithmic}[1]
\algsetup{indent=1em,linenosize=\small,linenodelimiter=.}
\medskip
\REQUIRE{$\Gamma_s$ and $\Gamma_t$ are two straight-line planar drawings of a connected plane graph $G$}\\
\medskip
\COMMENT{preprocessing}
\STATE{morph and augment so that both drawings are triangulations with fixed outer triangle (Section~\ref{sec:triangulate})}\\
\COMMENT{base case}
\IF{$G$ is a triangle}
\STATE{return the identity morph}\label{line:morph2}
\ENDIF\vspace{.3em}
\COMMENT{recursive step}
\STATE $v \gets$ an internal vertex of degree $\le 5$\label{line:morph4}
\STATE $u \gets$ a neighbor of $v$ to which $v$ can be contracted in $\Gamma_t$\label{line:morph5}
\STATE morph $\Gamma_s$ so that $v$ can be contracted to $u$ \label{line:morph6} 
\STATE $\Gamma_s^u \gets$ contract $v$ to $u$ in $\Gamma_1$\label{line:morph7}
\STATE $\Gamma_t^u \gets$ contract $v$ to $u$ in $\Gamma_2$\label{line:morph8}
\STATE ${\cal M} \gets$ {\sc Morph}$(\Gamma_s',\Gamma_t')$\label{line:morph9}\vspace{.3em}\\
\RETURN PseudoMorph-to-Morph(<$\Gamma_1, \Gamma_1^u$> $ {\cal M}$ <$\Gamma_2^u, \Gamma_2$ >)
\end{algorithmic}
\end{algorithm}
}

\section{Morphing to Find a Compatible Triangulation}
\label{sec:triangulate}

In this section we show how to morph two topologically equivalent straight-line planar drawings of a graph $G$ so that after the morph both drawings can be triangulated by adding the same edges.
We allow $G$ to be disconnected.

\begin{theorem}
\label{thm:compatible-triangulation}
Let $G$ be a planar graph with $n$ vertices and $c$ connected components.
Given two topologically equivalent straight-line planar drawings of $G$, we can enclose each drawing in a triangle $z_1 z_2 z_3$ and then morph and add edges to create two  
triangulations that are topologically equivalent drawings of a maximal planar graph on vertex set $V(G) \cup \{z_1,z_2,z_3\}$. The morph consists of $O(c)$ unidirectional morphs plus $O(n)$ calls to Quadrilateral Convexification, for a total bound of $O(n)$ unidirectional morphs.
The algorithm takes time $O(n^3)$.
\end{theorem}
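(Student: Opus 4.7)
The plan is to augment both drawings with the same set of new vertices and edges, using morphs, so that they become topologically equivalent triangulations of the same maximal planar graph. The construction proceeds in three phases.

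\emph{Phase 1 (Outer triangle).} I would place three new vertices $z_1, z_2, z_3$ far enough outside both input drawings that $z_1 z_2 z_3$ encloses their bounding boxes. This adds three vertices and three edges without requiring any morphing of existing vertices. After this step the outer face of every component is contained inside the triangle in both drawings.

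\emph{Phase 2 (Connecting the $c$ components).} Since the drawings are topologically equivalent, the nesting forest of the components is identical in both. I would process the components in a traversal of this nesting forest and, for each component $C$, add a single edge from a vertex of $C$'s outer face to a vertex on the boundary of its enclosing face (which is either $z_1 z_2 z_3$ or a previously-processed component). If the proposed edge is not directly valid (it creates a crossing) in one of the drawings, I would precede it by a single unidirectional morph that translates $C$ rigidly inside its enclosing face to a position where the edge is valid in both drawings simultaneously---such a position exists because the enclosing face is simply connected (in each drawing) and has room to accommodate $C$. This contributes $O(c)$ unidirectional morphs in total, and yields two topologically equivalent straight-line planar drawings of a connected plane graph.

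\emph{Phase 3 (Compatible triangulation of faces).} For each face $f$ with $k>3$ vertices, I would add $k-3$ \emph{compatible} diagonals (i.e.\ valid in both drawings) to triangulate $f$. Since $\sum_f (|f|-3)=O(n)$ in any planar graph, the total number of diagonals to add is $O(n)$, so it suffices to add each diagonal using $O(1)$ calls to Quadrilateral Convexification. To add one diagonal I would: (i) pick four consecutive vertices $v_{i-1},v_i,v_{i+1},v_{i+2}$ on the current boundary of $f$; (ii) temporarily triangulate a local neighborhood around them in each drawing so that $v_{i-1}v_iv_{i+1}v_{i+2}$ is a two-triangle quadrilateral meeting the preconditions of Problem~\ref{prob:quad-convex} (no vertex inside, no external chord---using general position and the freedom to choose the local triangulation); (iii) invoke Theorem~\ref{thm:quad-convex} in each drawing to morph that quadrilateral to be convex; (iv) add the diagonal $v_{i-1}v_{i+1}$ (which now lies inside the convex quadrilateral, hence inside $f$) to both drawings, cutting off an ear at $v_i$ and reducing $f$'s size by one; (v) discard the temporary local triangulation edges. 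Recursing on the remaining $(k-1)$-vertex face triangulates $f$ using $O(k)$ Quadrilateral Convexifications.

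The main obstacle is Phase 3---specifically, arguing that $O(1)$ Quadrilateral Convexifications per added diagonal suffice. The technical difficulty is that the two drawings of a face may have very different convex/reflex vertex patterns, so a vertex that is an ear in one drawing need not be an ear in the other. Morphing the chosen four-vertex sub-polygon to be convex in both drawings resolves this uniformly, independently of the global geometry. The subtleties to verify are that the temporary local triangulation can always be chosen so that the quadrilateral has no internal vertex and no external chord, and that the morphs delivered by Theorem~\ref{thm:quad-convex} in the two drawings do not interfere (they act on disjoint interior regions of the two drawings and use a single direction each). The total morph count is $O(c)+O(n)=O(n)$ unidirectional morphs, as claimed. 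For running time, Phase 1 is $O(1)$, Phase 2 is $O(cn)=O(n^2)$ (each translation requires recomputing at most $n$ coordinates), and Phase 3 dominates with $O(n)$ calls to Theorem~\ref{thm:quad-convex} at $O(n^2)$ each, yielding $O(n^3)$ overall.
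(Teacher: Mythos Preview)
Your proposal has genuine gaps in both Phase~2 and Phase~3.

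\textbf{Phase~2.} Your claim that the enclosing face is ``simply connected'' and that a single rigid translation of $C$ suffices is unjustified. When several sibling components $C,C',C'',\ldots$ lie inside the same face, each is an obstacle to the others, and their relative positions may be completely different in the two drawings. A single translation of $C$ cannot in general produce a crossing-free edge to the outer boundary in both drawings simultaneously; indeed in one drawing $C'$ may lie between $C$ and every point of the outer boundary. The paper handles this with a much more elaborate mechanism (its Part~B): it first triangulates \emph{within} each component so that every inner component has a triangular outer boundary, then shrinks each inner component into a cell of a fine triangular grid, and finally uses a ``road network'' of empty grid rows to permute the components into a canonical linear order along one side of the enclosing triangle. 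Only then can connecting edges be added compatibly. This costs $O(c')$ unidirectional morphs for a face with $c'$ inner components, not $O(1)$ per component.

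\textbf{Phase~3.} Your quadrilateral $v_{i-1}v_iv_{i+1}v_{i+2}$ cannot in general be realised as a two-triangle quadrilateral meeting the preconditions of Problem~\ref{prob:quad-convex}. The side $v_{i-1}v_{i+2}$ is not an edge of $G$ and the straight segment between them may cross other edges of the face (or other parts of the drawing), so no temporary triangulation contains it; even when it can be drawn, the quadrilateral may enclose other vertices of the face. ``Freedom to choose the local triangulation'' does not help here, since the obstruction is the geometry of the four existing vertices. The paper avoids this by introducing a \emph{new} temporary vertex $r$ placed very close to $v_i$, forming the tiny quadrilateral $v_{i-1}\,r\,v_{i+1}\,v_i$; this is guaranteed to be empty and to satisfy the preconditions, and after convexifying it the diagonal $v_{i-1}v_{i+1}$ can be added. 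You also omit the argument (present in the paper) that one can always choose three consecutive vertices $u,v,w$ on a facial walk with $uw\notin E(G)$, which is needed to ensure the added diagonal is not already a chord.

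In short, the high-level plan is reasonable, but both non-trivial phases require ideas you have not supplied, and the paper's proof uses substantially different (and more intricate) constructions at exactly those points.
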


\begin{proof}
Let $\Gamma_1$ and $\Gamma_2$ be the two topologically equivalent drawings of $G$.  We begin by adding a large triangle $z_1 z_2 z_3$ that encloses each drawing.  

Our algorithm has two parts.  In part A we morph and add edges within connected components so that in any connected component of three or more vertices, each facial walk is a triangle.
In part B we add edges to connect the disconnected components and complete the triangulation. 

\medskip\noindent
{\bf Part A.}
Suppose we have a facial walk that has four or more vertices.  We will find two consecutive edges
 $uv$ and $vw$ of the facial walk so that edge $uw$ can be added to the graph, i.e.~such that $u\neq w$ and $uw$ is not an edge of the graph. Suppose first that the facial walk has two consecutive edges $x_1 x_2$ and $x_2 x_3$ that belong to different biconnected components of the graph, i.e., the removal of $x_2$ disconnects the connected component $x_2$ belongs to. Then $x_1\neq x_3$ and edge $x_1 x_3$ does not belong to the graph, hence it can be added inside the face. If two such edges $x_1 x_2$ and $x_2 x_3$ do not exist, then the facial walk is a simple cycle. Then consider four consecutive vertices $x_1$, $x_2$, $x_3$, and $x_4$ along the cycle. By planarity, edge $x_1x_3$ or edge $x_2 x_4$ does not belong to the graph, hence it can be added inside the face.

 
At this point we have two consecutive edges $uv$ and $vw$ of the facial walk so that edge $uw$ can be added to the graph while preserving the planarity of the graph, as in Figure~\ref{fig:v_noedge}. We will morph drawings $\Gamma_1$ and $\Gamma_2$ so that edge $uw$ can be added as a straight-line segment preserving planarity. 
 (In case our facial walk bounds an inner face that contains no disconnected components, we could have chosen $uw$ to be an ear of a triangulation of the face in one of the drawings, thus avoiding the need to morph that drawing; but in the general case we must morph both drawings.) 
The argument is the same for both drawings, so we describe it only for $\Gamma_1$.  
Vertices $u$ and $w$ are consecutive neighbors of $v$.  Add a new neighbor $r$ of $v$ between $u$ and $w$ in cyclic order, and add  edges $rv$, $ru$, and $rw$.  It is possible to place $r$ close enough to $v$ in $\Gamma_1$ so that the resulting drawing is straight-line planar and  \face{v,r,u} and \face{v,r,w} are faces. 
See Figure~\ref{fig:v_dummy} for an example.

\begin{figure}[htb]
\centering
\def\sf{1.2}

{\hfill
\subfigure[]{\includegraphics[scale=\sf]{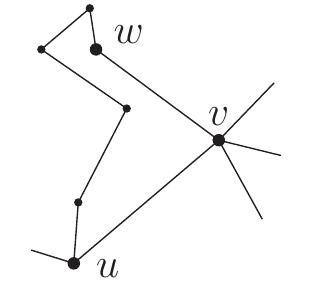}\label{fig:v_noedge}}\hfill
\subfigure[]{\includegraphics[scale=\sf]{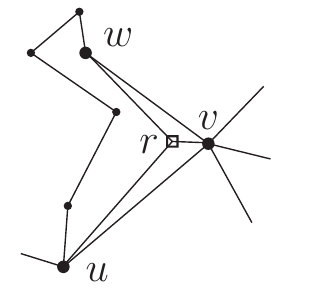}\label{fig:v_dummy}}\hfill }

{\hfill
\subfigure[]{\includegraphics[scale=\sf]{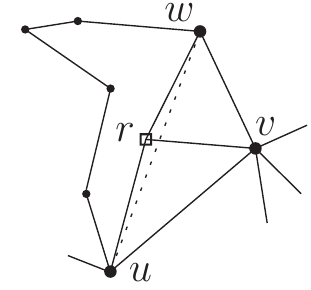}\label{fig:v_convex}}\hfill
\subfigure[]{\includegraphics[scale=\sf]{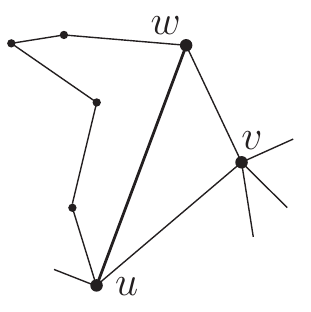}\label{fig:v_edge}}\hfill}
\caption{Morphing to add edge $uw$.   
\subref{fig:v_noedge}~Vertices $u, v$ and $w$ are consecutive around a facial walk, and the graph does not currently contain the edge $uw$. 
\subref{fig:v_dummy}~A vertex $r$ is added suitably close to $v$ and connected to $v$, $u$, and $w$. \subref{fig:v_convex}~After convexifying the quadrilateral $u r w v$. 
\subref{fig:v_edge}~Vertex $r$ and its incident edges can be removed in order to insert edge $uw$.\label{fig:conv_face}}
\end{figure}

We will now morph the resulting drawing 
to make the quadrilateral $urwv$ convex, as in Figure~\ref{fig:v_convex}.  To do this, we temporarily triangulate the drawing\footnote{Any polygonal domain with $n$ vertices can be triangulated in time $O(n \log n)$~\cite{deBerg}.}  and then apply Quadrilateral Convexification to $urwv$.

There is one slight complication: it might happen that when we triangulate the drawing, we add the edge $uw$, which would make it impossible to convexify the quadrilateral $urwv$.  In this case, we remove $uw$ and retriangulate the resulting quadrilateral by adding a new vertex $p$ placed at any internal point of line segment $uw$, and adding straight-line edges from $p$ to the four vertices of the quadrilateral.
See Figure~\ref{fig:subdivision}.  

\begin{figure}[htb]
\centering
\hfill
\subfigure[]{\includegraphics[scale=1.5]{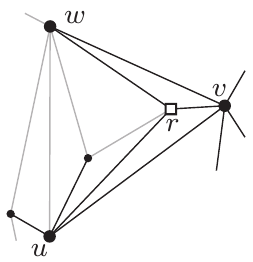}\label{fig:subd_pre}}\hfill
\subfigure[]{\includegraphics[scale=1.5]{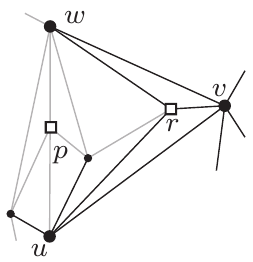}\label{fig:subd_post}}\hfill
\caption{If edge $uw$ has been added when triangulating the drawing (a), then we subdivide $uw$ with the insertion of a vertex $p$ and we triangulate the two faces $p$ is incident to (b).}
\label{fig:subdivision}
\end{figure}

After convexifying the quadrilateral $urwv$, we remove the temporary triangulation edges, remove $r$, and add the edge $uw$. See Figure~\ref{fig:v_edge}.  

Thus with two calls to Quadrilateral Convexification (one for each drawing), we have added one  edge to both drawings. Moreover, the two drawings are still topologically equivalent drawings of the same planar graph. We can continue until every facial cycle of every connected component (except an isolated vertex or edge) is a triangle.  Since planar graphs have $O(n)$ edges, 
in total we use $O(n)$ calls to Quadrilateral Convexification.  

\medskip\noindent
{\bf Part B.} At this point every connected component is either an edge, a vertex, or a triangulation.  We will now morph and add edges to connect the components together. Once all of the components have been connected we will again appeal to Part A to triangulate the full connected graph.

Consider a face that has $c'$ connected components drawn inside it.  The outer boundary of the face is a triangle $abc$, and each inner component is either a single vertex or edge, or is bounded by a triangle.
The high-level idea is to shrink all of the inner components so that there is freedom to move them around within $abc$. We will use this freedom to line up the inner components in the same order in both drawings, starting from a position near $a$. It will then be straightforward to connect the outer component with the first inner component via an edge from $a$ and add edges between consecutive inner components to connect the rest.

In order to determine the size to which we must shrink the inner components, it will be helpful to have triangular boundaries for each inner component (even for the ones that are single vertices or edges) that are disjoint, lie within $abc$, and have positive area. 


To this end, consider each inner component that is a vertex $u$. In each drawing a sufficiently small disk centered at $u$ contains no part of the drawing other than vertex $u$ itself. Then a positive-area triangle 
formed by vertex $u$ and two new ``dummy" vertices
can be inserted in the disk to become the triangular boundary for that component. 
Similarly, consider each inner component that is an edge $uv$. In each drawing a sufficiently thin rectangle having $uv$ as a side contains no part of the drawing other than edge $uv$ itself. Then a positive-area triangle 
formed by edge $uv$ and a new ``dummy'' vertex
can be inserted in the rectangle to become the triangular boundary for that component.


\remove{
In both drawings separately, we compute (likely different) temporary triangulations of the area inside $abc$ but outside of the inner components.

For each inner component that is a triangulation, its triangular boundary is exactly the boundary we require.

For each inner component that is a vertex $u$, we choose any cell of the triangulation adjacent to $u$. Say the cell is $uvw$ in the first drawing. We add dummy vertices $v'$ and $w'$ to the drawing so that they are one third of the way along the edges $uv$ and $uw$, respectively. We add dummy edges to form the triangle $uv'w'$, which is the boundary we require for inner component $u$. We do the same for the second drawing, making sure to use the same labels for the dummy vertices in such a way that the drawings remain topologically equivalent.

For each inner component that is an edge $uv$, we choose cells of the triangulations on the same side of $uv$ in both drawings and add the same dummy vertex $w$ at the centroids of the cells. We add dummy edges $uw$ and $vw$ so that $uvw$ is the required boundary for inner component $uv$.

It is clear that these triangular boundaries lie within $abc$ and have positive area. Further, it is easy to see that the placement of the dummy vertices ensures that the bounding triangles of any two inner components are disjoint. See Figure~\ref{fig:partb:bounds}.


\begin{figure} 
	\centering
	\includegraphics[scale=0.5]{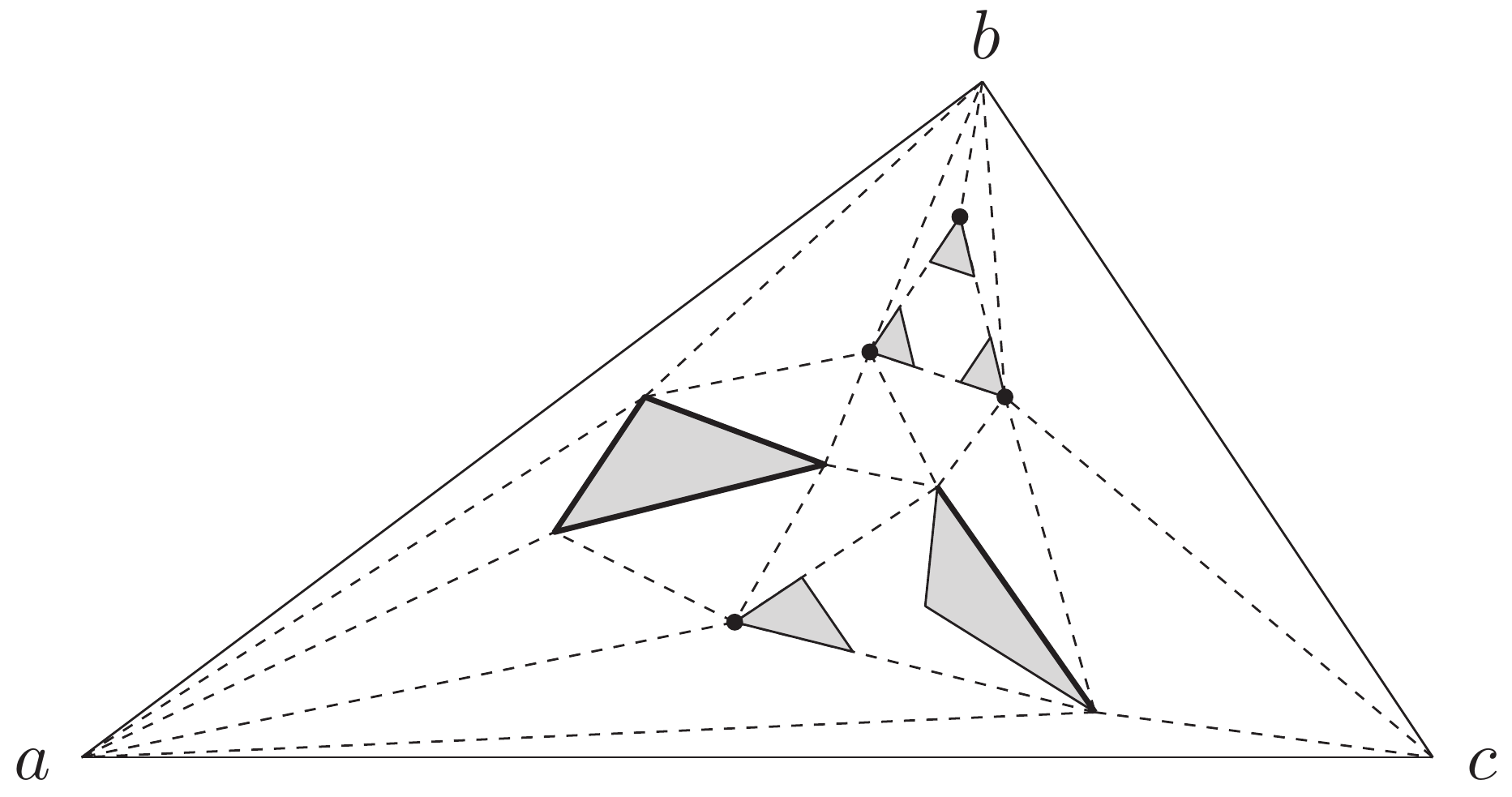}
	\caption{Bounding triangles (shaded) within triangulation cells (dashed) for each inner component (bold) within triangular face $abc$.}
	\label{fig:partb:bounds}
\end{figure}
}

Now, consider a uniform triangular grid such that $a$ lies on a vertex of the grid and each grid cell is either a homothetic copy of $abc$ (called an \emph{upward} cell) or a homothetic copy of the triangle obtained by rotating $abc$ by $\pi$ radians (called a \emph{downward} cell). Let every second row of cells in any of the three directions determined by the sides of $abc$ be a \emph{road}. Let any cell that is not in any road be a \emph{home}. We fix the parities of the rows such that the upward cell adjacent to $a$ in $abc$ is a home. Note that all homes are thus upward cells. See Figure~\ref{fig:partb:homes}. We have yet to specify the size of a grid cell. We require a grid size such that the following conditions are satisfied: 1) there are at least $c'$ homes along $ac$ and 2) each bounding triangle of an inner component contains a home.

\begin{figure} 
	\centering
	\includegraphics[scale=0.5]{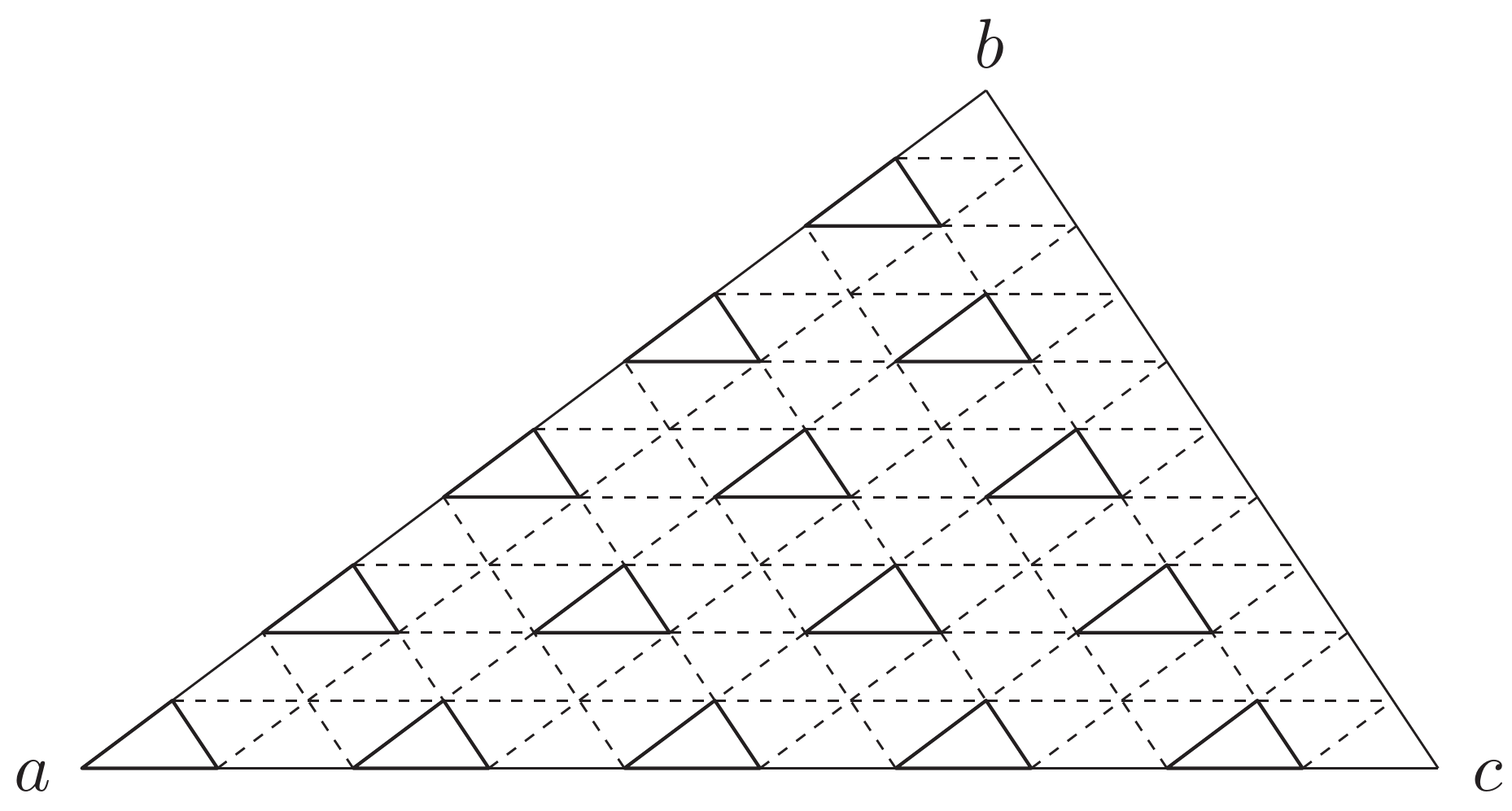}
	\caption{Roads (bounded by dashed lines) and homes (small solid triangles) within triangular face $abc$.}
	\label{fig:partb:homes}
\end{figure}

For each inner component, let its \emph{home region} be the largest homothetic copy of an upward cell that can be inscribed in the component's triangular boundary. Let $d$ be the diameter of $abc$. We can, for example, set the grid size such that the diameter of a grid cell is the minimum of: 1) $d/(2c')$ and 2) a fifth of the diameter of the smallest home region. If the diameter of a grid cell is at most the first value, then there must be at least $2c'$ upward cells along $ac$, half of which are homes. If the diameter of a grid cell is at most the second value, then every home region is at least five times the size of an upward cell. Every home region of this size must contain a home. 
%
%
In fact, every home region of this size contains a homothetic copy of $abc$ whose side lengths are three times the side lengths of an upward cell; further, at least one of the six upward cells contained in this homothetic copy of $abc$ is a home. 

\remove{
Say an edge of the home region lies in a row $r$. Then the intersection of the home region and the next non-road row after $r$ is long enough to accommodate at least three consecutive upward cells. Since the home region is not necessarily aligned to the triangular grid, we only have the guarantee that at least two consecutive upward cells are contained in the intersection. One of two adjacent upward cells in a non-road row must be a home. See Figure~\ref{fig:partb:home-region}.

\begin{figure} 
	\centering
	\includegraphics[scale=0.5]{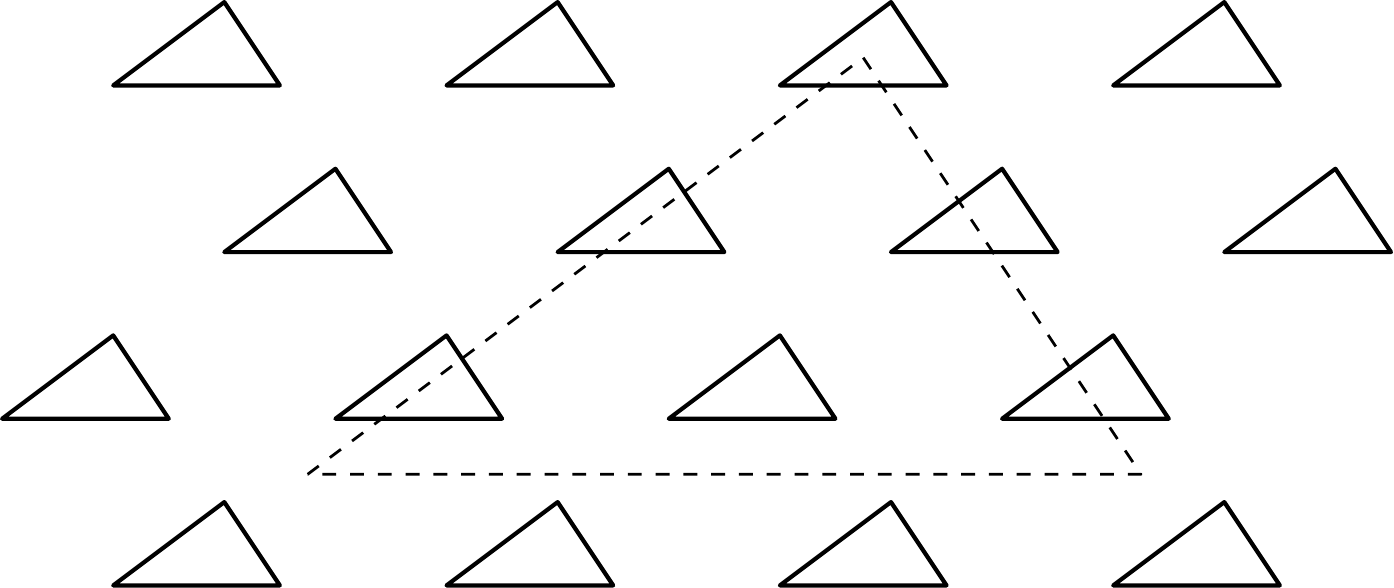}
	\caption{A home region (dashed) five times the size of a home (solid) must contain a home.}
	\label{fig:partb:home-region}
\end{figure}
}

Consider an inner component bounded by triangle $uvw$ with a home region that contains a home $h$. We choose a non-dummy vertex of $uvw$ and call it the component's \emph{connector}. Our goal is to morph the inner component into $h$ such that its connector is near the vertex of $h$ that is farthest from $ac$. We will do so with $O(1)$ morphs of $uvw$ via linear movements of single vertices of $uvw$, during which the contents of $uvw$ move along linearly. By Lemma~\ref{lemma:convex-comb}, these are each unidirectional morphs. 

We first morph $uvw$ to occupy exactly the component's home region with a linear movement of each vertex. We then rotate the positions of the vertices of $uvw$ so that the connector is in the position farthest from $ac$. This can be done with six linear movements of the vertices, using the home region's midpoint polygon as an intermediate. Finally, with a linear movement of each vertex, we shrink $uvw$ into $h$ such that $uvw$ becomes a slightly smaller homothetic copy of $h$ and does not intersect the boundary of $h$. There is no interference between inner components since $uvw$ remains in its original bounding triangle for the duration of the morph. See Figure~\ref{fig:partb:shrink}. Observe that we performed $O(1)$ unidirectional morphs for each of the $c'-1$ connected components.

\begin{figure} 
	\centering
	\includegraphics[scale=0.7]{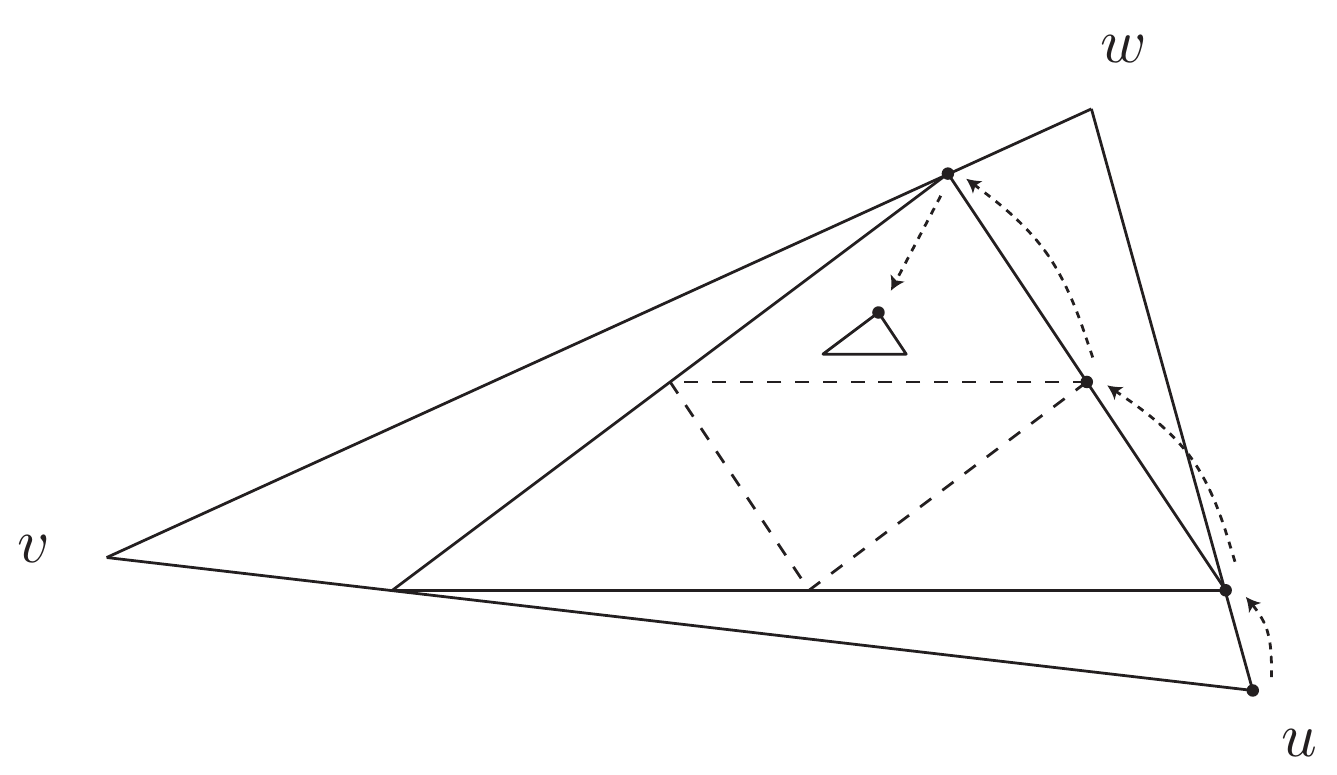}
	\caption{Movements of the connector of $uvw$ used to shrink and rotate $uvw$ into $h$. The solid triangles are, from outermost to innermost: $uvw$, the home region of $uvw$, the final position of $uvw$ within $h$. The dashed triangle is the midpoint polygon of the home region of $uvw$.}
	\label{fig:partb:shrink}
\end{figure}

We can now arrange the $c'$ inner components in any arbitrary order in the homes along edge $ac$ using $c'$ home swaps. Since all of the inner components are in homes, the road network is unimpeded and the roads are wide enough to allow passage of any inner component. By navigating along the road network, it is straightforward to swap the homes of two inner components using $O(1)$ linear movements of the components. Thus, in $O(c')$ unidirectional morphs, we can arrange the inner components into the homes along $ac$ in the same order in both drawings. Finally we add edges between the connectors of adjacent components and add an edge from $a$ to the connector of the first component and we are done connecting the components. See Figure~\ref{fig:partb:connected}.

\begin{figure} 
	\centering
	\includegraphics[scale=0.8]{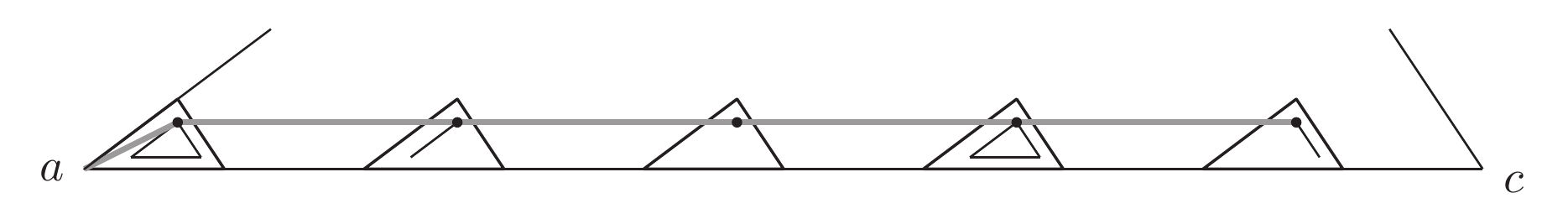}
	\caption{Edges (thick grey lines) linking the connectors (dots) of the inner components that lie in the homes along $ac$.}
	\label{fig:partb:connected}
\end{figure}

Once all components are connected, we remove all dummy vertices and edges and appeal to part A to triangulate the full connected graph. This completes the description of part B.

\medskip
Altogether, we use $O(c)$ unidirectional morphs to connect the $c$ components of the graph. We also add $O(n)$ edges in Part A, and each addition takes two calls to Quadrilateral Convexification.   
Furthermore, each call to Quadrilateral Convexification uses $O(1)$ unidirectional morphing steps by Theorem~\ref{thm:quad-convex}.  Therefore, in total we use $O(n)$ unidirectional morphing steps.

We now analyze the running time of the algorithm.  We call Quadrilateral Convexification $O(n)$ times, and each call takes $O(n^2)$ time.  This dominates the other work done by the algorithm (e.g.~we find $O(n)$ temporary triangulations, each of which takes
$O(n \log n)$ time).
Thus the total running time is $O(n^3)$.    
\end{proof}

\section{Morphing Between Two Triangulations}
\label{sec:morph-triangulations}

In this section we prove our main result, Theorem~\ref{th:main}, for the case of triangulations.
Since the previous section reduced the general case to the case of triangulations, this will complete the proof of Theorem~\ref{th:main}.

\begin{theorem}
Let $\Gamma_1$ and $\Gamma_2$ be two triangulations that are topologically equivalent drawings of an $n$-vertex maximal planar graph $G$. There is a morph from $\Gamma_1$ to $\Gamma_2$ that is composed of $O(n)$ unidirectional morphs.  More precisely, the morph uses $O(n)$ unidirectional morphs plus $O(n)$ calls to Quadrilateral Convexification.  Furthermore, the morph can be constructed in $O(n^3)$ time.
\end{theorem}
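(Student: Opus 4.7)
The plan is induction on $n$. The base case $n = 3$ is immediate since the outer face is fixed and $G$ is just the outer triangle, so $\Gamma_1 = \Gamma_2$. For the inductive step with $n \geq 4$, I first observe that there must exist an internal vertex of degree at most $5$: since $2|E| = 6n - 12$ and each of the three outer vertices has degree at least $3$, the sum of the degrees of the $n - 3$ internal vertices is at most $6n - 21$, so the average internal degree is strictly less than $6$. Pick such an internal vertex $v$. Applying Lemma~\ref{lem:5-kernel} to $\Delta(v)$ in $\Gamma_2$ (a polygon on at most $5$ vertices), I obtain a neighbor $u$ of $v$ that lies in the kernel of $\Delta(v)$ in $\Gamma_2$, so $v$ can be contracted to $u$ in $\Gamma_2$ while preserving planarity.

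The heart of the argument, and the step I expect to be the main obstacle, is to morph $\Gamma_1$ into a drawing $\Gamma_1'$ in which $u$ also lies in the kernel of $\Delta(v)$. Since $\deg(v) \leq 5$, the polygon $\Delta(v)$ has at most $5$ vertices. The key geometric observation is that two consecutive triangles $v p_{i-1} p_i$ and $v p_i p_{i+1}$ around $v$ form a quadrilateral $v p_{i-1} p_i p_{i+1}$ of the triangulation, and convexifying it via Theorem~\ref{thm:quad-convex} forces the interior angle of $\Delta(v)$ at $p_i$ to drop below $\pi$. By handling the cases $\deg(v) = 3, 4, 5$ separately and choosing a constant number of such quadrilaterals carefully (so that later convexifications do not undo earlier ones), I can arrange for $\Delta(v)$ to be star-shaped from $u$ in $\Gamma_1'$. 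This preprocessing uses $O(1)$ calls to Quadrilateral Convexification and hence $O(1)$ unidirectional morphs.

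With $u$ in the kernel of $\Delta(v)$ in both $\Gamma_1'$ and $\Gamma_2$, I contract $v$ to $u$ in each drawing, obtaining topologically equivalent triangulations $\Gamma_1^u$ and $\Gamma_2^u$ of a maximal planar graph on $n - 1$ vertices. By the inductive hypothesis there is a morph $\mathcal{M}$ from $\Gamma_1^u$ to $\Gamma_2^u$ composed of $O(n - 1)$ unidirectional morphs. Prepending the contraction of $v$ to $u$ in $\Gamma_1'$ and appending the uncontraction in $\Gamma_2$ produces a pseudomorph from $\Gamma_1'$ to $\Gamma_2$. Theorem~\ref{thm:pseudomorph} then converts this pseudomorph into a true morph at the cost of two additional unidirectional morphs. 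Combining with the $O(1)$ preprocessing morphs applied to $\Gamma_1$, the recurrence is $T(n) = T(n - 1) + O(1)$, which solves to $T(n) = O(n)$ unidirectional morphs, with $O(n)$ calls to Quadrilateral Convexification among them.

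For the time bound, each level of the recursion performs $O(1)$ Quadrilateral Convexifications, each costing $O(n^2)$ by Theorem~\ref{thm:quad-convex}, together with an $O(k + n)$ pseudomorph-to-morph conversion by Theorem~\ref{thm:pseudomorph}; the $O(n^2)$ term dominates. Summing over the $n$ recursion levels yields a total running time of $O(n^3)$, matching the claimed bound.
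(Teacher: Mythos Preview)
Your overall scheme matches the paper's: find an internal vertex $v$ of degree at most $5$, morph $\Gamma_1$ with $O(1)$ steps so that $v$ can be contracted to the same neighbor $u$ as in $\Gamma_2$, contract, recurse, and convert the resulting pseudomorph via Theorem~\ref{thm:pseudomorph}. The recurrence and the $O(n^3)$ running-time analysis are also right. Two issues remain.

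\textbf{Minor.} Your base case is not correct as stated: for $n=3$ the two drawings are topologically equivalent triangles, but nothing forces $\Gamma_1=\Gamma_2$. One needs an $O(1)$-step morph between two like-oriented triangles; this is easy but should be said.

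\textbf{Substantive.} The preprocessing step is where your argument diverges from the paper and where it is incomplete. You propose to apply Quadrilateral Convexification to quadrilaterals of the form $v\,p_{i-1}\,p_i\,p_{i+1}$ (two adjacent triangles of the fan around $v$) to push the interior angle of $\Delta(v)$ at $p_i$ below $\pi$. Two things are missing. First, Problem~\ref{prob:quad-convex} requires that the ``other'' diagonal $p_{i-1}p_{i+1}$ not be an edge outside the quadrilateral, and nothing rules this out: $p_{i-1}p_{i+1}$ is a perfectly legal edge of $G$ lying outside $\Delta(v)$. Second, each convexification is a global unidirectional morph of the whole triangulation, so your parenthetical ``so that later convexifications do not undo earlier ones'' is carrying real weight. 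Concretely, for $\deg(v)=4$ with $p_2$ reflex, convexifying your quadrilateral at $p_2$ could leave $p_4$ reflex afterwards; convexifying next at $p_4$ could make $p_2$ reflex again. You have not shown this loop terminates in $O(1)$ steps.

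The paper sidesteps both difficulties by \emph{first contracting $v$} to a vertex $x$ already in the kernel of $\Delta(v)$ in $\Gamma_1$ (such $x$ exists by Lemma~\ref{lem:5-kernel}), and only then applying Quadrilateral Convexification to sub-quadrilaterals of $\Delta(v)$ itself, now triangulated from $x$ with $v$ absent. For degree $4$ one convexifies $\Delta(v)$ directly; for degree $5$ at most two convexifications suffice, and in each case the ``other diagonal'' is incident to $u$, which is known to have no external chords (because $v$ contracts to $u$ in $\Gamma_2$, and the drawings are topologically equivalent). This preprocessing is itself a pseudomorph, converted to a true morph by a further appeal to Theorem~\ref{thm:pseudomorph}. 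The idea of contracting $v$ before convexifying is the missing ingredient in your sketch.
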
  

As described in Subsection~\ref{sec:main-result}, our approach is to find an internal vertex $v$ of degree at most five and then morph the drawings using $O(1)$ unidirectional morphs so that $v$ can be contracted to the same neighbor $u$ in both drawings.  We will prove a time bound of $O(n^2)$ for this step.  More details are given below.  

After this step, we contract $v$ to $u$ in both drawings and apply recursion to find a morph $\cal M$ between the two smaller triangulations.  The result is a pseudomorph:  we contract $v$ to $u$ in $\Gamma_1$, apply $\cal M$, and then reverse the contraction of $v$ to $u$ in $\Gamma_2$. 
The number of steps satisfies the recurrence relation $S(n) = S(n-1) + O(1)$, which solves to $S(n) \in O(n)$.  
Finally, we apply Theorem~\ref{thm:pseudomorph} to convert the pseudomorph to a morph with the same number of unidirectional morphs.
Each appeal to Theorem~\ref{thm:pseudomorph} takes time $O(n)$ since that is the bound on the number of unidirectional morphs.  Thus the total time bound is given by the recurrence relation $T(n) = T(n-1) + O(n^2)$, which solves to $T(n) \in O(n^3)$.

We now fill in the details about finding vertex $v$ and contracting it to a common neighbor in both drawings. If there are no internal vertices, then we only have the outer triangle and a unidirectional morph with $O(1)$ morphing steps is easily computed in time $O(1)$. Otherwise, we claim that there is an internal vertex $v$ of degree at most 5.  To see this, note that the triangulation has $3n-6$ edges, and thus the  sum of the degrees is $6n-12$.  Every vertex has degree at least 3, and if all the internal vertices had degree at least 6, the sum of the degrees would be at least $3(3) + 6(n-3) = 6n - 9$.

If $v$ has degree 3 then we can contract $v$ to the same neighbor $u$ in both  drawings.  
If $v$ has degree 4 or 5 then let $u$ be a neighbor of $v$ to which $v$ can be contracted in $\Gamma_2$.  (The existence of $u$ is guaranteed by applying Lemma~\ref{lem:5-kernel} to the polygon $\Delta(v)$ in $\Gamma_2$.)   
If $v$ can be contracted to $u$ in $\Gamma_1$ we are done.  Otherwise, we will 
morph $\Gamma_1$ to make this possible.  Note that there are no external chords from $u$ to any vertex of $\Delta(v)$, because the two drawings are topologically equivalent, and $\Gamma_2$ has no such chords.

If $v$ has degree 4 then the neighbors of $v$ must form a non-convex quadrilateral $Q$ with vertices $abcd$, where $d$, say, is the reflex vertex. We contract $v$ to $d$ and apply Quadrilateral  Convexification to $Q$, noting that the quadrilateral has no external chords.  Specifically, there is no chord $ac$ because $v$ cannot be contracted to $u$, so $u$ must be one of $a, c$, and, as noted above, there are no external chords incident to $u$.   
After convexification, we move $v$ slightly into the interior of $\Delta(v)$ to obtain a drawing in which vertex $v$ can be contracted to $u$.   The result is a pseudo-morph, which we convert to a morph using Theorem~\ref{thm:pseudomorph}.  The number of unidirectional steps is $O(1)$ and the time bound is $O(n^2)$ by Theorem~\ref{thm:quad-convex}. 

If $v$ has degree 5 then the neighbors of $v$ form a pentagon $abcde$ where we want to contract $v$ to $u=a$, say.  We use a similar method, 
morphing the pentagon until it is ``almost'' convex by making a few calls to Quadrilateral Convexification.  More formally:

\begin{lemma} 
Let $\Gamma$ be an $n$-vertex triangulation. Suppose $v$ is a non-boundary vertex of degree 5, with neighbors forming a pentagon $abcde$, and suppose that $a$ is not incident to any external chords of the pentagon. 
Then we can morph $\Gamma$, keeping the outer boundary fixed, so that $v$ can be contracted to $a$.
The morph consists of at most two calls to Quadrilateral Convexification plus a constant number of unidirectional morphing steps.   Furthermore, the morph can be found in $O(n^2)$ time.
\end{lemma}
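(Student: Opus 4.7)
My plan is to morph $\Gamma$ so that the pentagon $P = abcde$ formed by the neighbors of $v$ becomes convex. Once $P$ is convex, every point in the closed region bounded by $P$ (in particular the vertex $a$) lies in the kernel of $P$, so $v$, which lies in the interior of $P$, can be contracted along the straight segment to $a$ without leaving the kernel. All morphs that I will perform only move vertices lying inside $P$, so the outer boundary of $\Gamma$ is kept fixed as required.

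The key observation for convexifying $P$ is that for each vertex $x$ of $P$ with pentagon-neighbors $x^{-}$ and $x^{+}$, the two triangles $v x^{-} x$ and $v x x^{+}$ of $\Gamma$ share the edge $vx$ and together form a quadrilateral $Q_x = v x^{-} x x^{+}$ whose interior contains no other vertex of $\Gamma$, and whose internal angle at $x$ coincides with the internal angle of $P$ at $x$. Hence invoking Quadrilateral Convexification (Theorem~\ref{thm:quad-convex}) on $Q_x$ makes $Q_x$ convex via a single unidirectional morph, and in particular makes the pentagon angle at $x$ non-reflex. Since the interior angles of a pentagon sum to $3\pi$ and each reflex angle exceeds $\pi$, $P$ has at most two reflex vertices, so this procedure is invoked at most twice, after which $P$ is convex.

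The main obstacle is verifying the ``no external chord'' precondition of Theorem~\ref{thm:quad-convex} for each $Q_x$ we invoke: the non-internal diagonal of $Q_x$, namely $x^{-} x^{+}$, must not appear as an edge of $\Gamma$ drawn outside $Q_x$. When the reflex vertex $x \in \{b, e\}$ is adjacent to $a$ in $P$, the diagonal $x^{-}x^{+}$ is $ac$ or $ad$, and the hypothesis that $a$ is incident to no external chord of $P$ rules these out immediately. When $x \in \{c, d\}$, the diagonal is $bd$ or $ce$, which does not involve $a$; here I argue that such a hypothetical external edge, together with the five radial edges from $v$ and the surrounding triangulation, would either produce a crossing or force a vertex to lie inside a face of $\Gamma$, contradicting the structure that the interior of $P$ is triangulated solely using $v$. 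The remaining delicate case is when $a$ itself is reflex in $P$, in which case the quadrilateral is $Q_a = v e a b$ with non-internal diagonal $eb$; I handle this by a similar local-topology argument, possibly preceded by a preliminary unidirectional single-vertex morph of $v$ that does not itself require convexification -- this is where the constant number of additional unidirectional morphs allowed in the lemma statement is spent.

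Putting the phases together, the total cost is at most two calls to Quadrilateral Convexification plus $O(1)$ unidirectional morphs, and the running time is $O(n^2)$, dominated by the two invocations of Theorem~\ref{thm:quad-convex}.
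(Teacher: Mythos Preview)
Your approach has two genuine gaps.

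\textbf{The external-chord precondition can fail.} Your argument that $x^-x^+$ cannot be an edge of $\Gamma$ when $x\in\{c,d\}$ (or when $x=a$) is incorrect. Take for instance $v=(0,0)$, $a=(-2,-1)$, $b=(-1,1)$, $c=(0,0.5)$, $d=(1,1)$, $e=(2,-1)$. All five triangles $vab,\dots,vea$ are consistently oriented, so $v$ lies in the kernel of the pentagon; the interior angle at $c$ exceeds $\pi$; and the straight segment $bd$ lies entirely outside the pentagon, so nothing prevents $bd$ from being an edge of $\Gamma$. In that situation you cannot invoke Quadrilateral Convexification on $Q_c=vbcd$, because its non-interior diagonal $bd$ is an external edge. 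The hypothesis only rules out external chords incident to $a$, and your ``local-topology'' sketch does not produce a contradiction: the radial edges from $v$ are inside the pentagon while the hypothetical edge $bd$ is outside, so there is no crossing.

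\textbf{Two calls need not suffice.} Quadrilateral Convexification is a single unidirectional morph of the \emph{entire} triangulation (via the Hong--Nagamochi redrawing), not a local move confined to the interior of $P$ as you assert. After you convexify $Q_x$, all of $a,b,c,d,e$ may have moved, and a pentagon vertex that was convex before can become reflex afterwards. So the observation ``a pentagon has at most two reflex vertices'' does not bound the number of iterations; you have no argument that the process terminates in two calls.

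The paper proceeds differently. It first contracts $v$ to some neighbour $x$ in whose kernel it already lies (Lemma~\ref{lem:5-kernel}). The pentagon then collapses to a fan at $x$, and the quadrilaterals to which Quadrilateral Convexification is applied are formed from four of the five \emph{pentagon} vertices, chosen so that the non-interior diagonal is always incident to $a$ (namely $ad$); the hypothesis then directly supplies the precondition. Moreover, after each convexification the paper shows explicitly that a specific vertex sees the whole pentagon, so one reaches the goal in at most two calls regardless of how the global morph reshuffles the drawing.
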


\begin{proof}
Because $v$'s neighbors form a pentagon, $v$ can be contracted to some neighbor $x$ while preserving planarity (Lemma~\ref{lem:5-kernel}).  
If $x=a$ we are done, so suppose they are different.
Contract $v$ to $x$.
We will divide into two cases depending on whether $x$ is adjacent to $a$ on the pentagon or not.    

First consider the case where $x$ is not adjacent to $a$.  Without loss of generality, suppose that $x=c$.
See Figure~\ref{fig:pentagon-case1}.
We use Quadrilateral Convexification to convexify $acde$.  This is possible since $ce$ is an inner chord and $ad$ is not an external chord.   Now $a$ sees all the quadrilateral $acde$ as well as the triangle $abc$.  Thus $v$ can now be contracted to $a$.

\begin{figure}[htb]
\centering
\def\sf{1.5}
\hfill\subfigure[]{\includegraphics[scale=\sf]{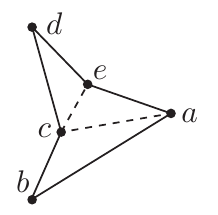}\label{fig:pentagon-case1}}\hfill
\subfigure[]{\includegraphics[scale=\sf]{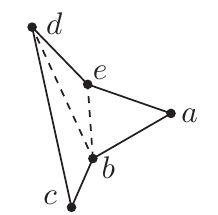}\label{fig:pentagon-case2}}\hfill
\caption{\subref{fig:pentagon-case1}~Vertex $v$ has been contracted to $x=c$ and we wish to contract $v$ to $a$. 
\subref{fig:pentagon-case2}~Vertex $v$ has been contracted to $x=b$ and we wish to contract $v$ to $a$. 
}
\end{figure}

Next consider the case where $x$ is adjacent to $a$.  Without loss of generality, suppose that $x=b$. 
See Figure~\ref{fig:pentagon-case2}.  Use Quadrilateral Convexification to convexify $abde$.  This is possible since $be$ is an inner chord and $ad$ is not an external chord.  Now $d$ sees all the quadrilateral $abde$ as well as the triangle $bcd$.  Move vertex $v$ in a straight line from $x=b$ to $d$.   This puts us in the first case.  We use Theorem~\ref{thm:pseudomorph} to convert the resulting pseudomorph to a morph.

The morph consists of at most two calls to Quadrilateral Convexification plus a constant number of unidirectional morphing steps.   The morph can be found in $O(n^2)$ time.
\end{proof}

\section{Quadrilateral Convexification}
\label{sec:quad-convexification}

In this section we give an algorithm for Problem~\ref{prob:quad-convex}, to 
morph a triangulation in order to convexify a quadrilateral. The main result of the section is as follows:

\rephrase{Reminder of Theorem~\ref{thm:quad-convex}}{\quadconvex}

In the rest of this section we will prove Theorem~\ref{thm:quad-convex}. Recall that we have an $n$-vertex triangulation $\Gamma$ and a quadrilateral $abcd$ in $\Gamma$ with no vertex inside it and such that neither $ac$ nor $bd$ is an edge outside of $abcd$. Our goal is to morph $\Gamma$ so that $abcd$ becomes convex using a single unidirectional morph.

We first describe the main idea. If $abcd$ is convex, then we are done. Assume without loss of generality that $d$ is the reflex vertex of the non-convex quadrilateral, i.e., the angle at $d$ internal to quadrilateral $abcd$ is larger than $\pi$ radians. This implies that $b$ is the tip of the arrowhead shape and that the triangulation contains the edge $bd$ (see Figure~\ref{fig:4-gon}). Change the frame of reference so that $bd$ is ``almost'' horizontal. We will use one unidirectional morph that moves vertices along horizontal lines, i.e.~preserving their $y$-coordinates.  Our main tool will be a result about re-drawing a plane graph to have convex faces while keeping all vertices at the same $y$-coordinate---this is a result by Hong and Nagamochi~\cite{hn-cdhpgcpg-10} expressed in terms of level planar drawings of hierarchical graphs.  To complete the proof we will show that the linear motion from the original drawing to the new drawing in fact preserves planarity, and therefore is a unidirectional morph.

\begin{figure}[h]
\centering
\includegraphics[width=0.4\textwidth]{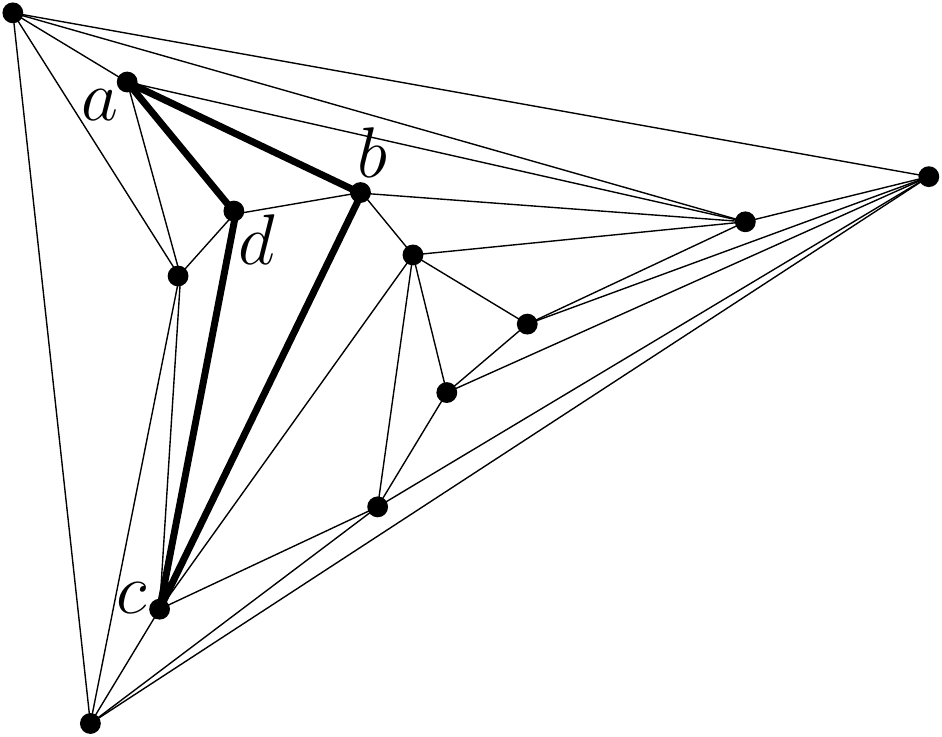}
\caption{Triangulation $\Gamma$, containing a non-convex quadrilateral $abcd$ with a reflex vertex $d$, with no vertex inside it, with edge $bd$ inside it, and with no chord outside it.}
\label{fig:4-gon}
\end{figure}

We introduce some definitions and terminology that are needed for Hong and Nagamochi's result. A hierarchical graph is a graph with vertices assigned to layers.  More formally,
a {\em hierarchical graph} is a triple $(G,L,\gamma)$ such that $G$ is a graph, $L$ is a set of horizontal lines (sometimes called {\em layers}), and $\gamma$ is a function mapping each vertex of $G$ to a line in $L$ in such a way that, if an edge $(u,v)$ belongs to $G$, then $\gamma(u)\neq \gamma(v)$. (It is conventional to assume the layers are $1, 2, \ldots k$, i.e.,~that successive horizontal lines are distance 1 apart, but that turns out to be unnecessary for Hong and Nagamochi's result, as the fact that the horizontal lines are equally spaced is nowhere used in their proof.) With the lines in $L$ ordered from bottom to top, $\gamma$ represents a partial order on the vertices of $G$, and we write  $u\prec_{\gamma} v$ if the line $\gamma(u)$ is below the line $\gamma(v)$.

A {\em level drawing} of a hierarchical graph $(G,L,\gamma)$ maps each vertex $v$ of $G$ to a point on the line $\gamma(v)$ and each edge to a $y$-monotone curve. A {\em level planar drawing} is a level drawing in which no two curves representing edges cross; such a drawing is \emph{straight-line} if edges are drawn as straight-line segments, and \emph{convex} if faces are delimited by convex polygons.

In our situation, we have a straight-line level planar drawing of a hierarchical graph and we want a straight-line convex level planar drawing that ``respects'' the embedding.  
More abstractly and more generally, Hong and Nagamochi define a
{\em hierarchical plane graph} to be a hierarchical graph together with a combinatorial embedding (a rotation system and outer face) corresponding to some level planar drawing.  In other words, a hierarchical plane graph is an equivalence class of drawings, and a drawing of a hierarchical plane graph must be a member of the equivalence class.


In order to guarantee a convex level planar drawing, Hong and Nagamochi require some conditions. Given a hierarchical plane graph $(G,L,\gamma)$, an {\em st-face} of $G$ is a face delimited by two paths $(s=u_1,u_2,\dots,u_k=t)$ and $(s=v_1,v_2,\dots,v_l=t)$ such that $u_i \prec_{\gamma} u_{i+1}$, for every $1\leq i\leq k-1$, and such that $v_i\prec_{\gamma} v_{i+1}$, for every $1\leq i\leq l-1$.  We say that $(G,L,\gamma)$ is a {\em hierarchical plane st-graph} if every face of $G$ is an st-face. 
Hong and Nagamochi give an algorithm that constructs a convex straight-line level planar drawing of any hierarchical plane st-graph~\cite{hn-cdhpgcpg-10}.
Here we explicitly formulate a weaker version of their main theorem.\footnote{We make some remarks. First, the result in~\cite{hn-cdhpgcpg-10} proves that a convex straight-line level planar drawing of $(G,L,\gamma)$ exists even if a convex polygon representing the cycle delimiting the outer face of $G$ is arbitrarily prescribed. More precisely, Hong and Nagamochi show a necessary and sufficient condition for a convex polygon (possibly with flat angles) to be the polygon delimiting the outer face of a convex straight-line level planar drawing of $(G,L,\gamma)$. However, this condition is always satisfied if every internal angle of the polygon is convex. Second, the result holds for a super-class of the $3$-connected planar graphs, namely for all the graphs that admit a convex straight-line drawing~\cite{cyn-lacdp-84,t-prg-84}. 
}

\begin{theorem} \label{th:hong-nagamochi} (Hong and Nagamochi~\cite{hn-cdhpgcpg-10})
Every $3$-connected hierarchical plane st-graph $(G,L,\gamma)$ admits a convex straight-line level planar drawing.
\end{theorem}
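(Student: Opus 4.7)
The plan is to adapt Tutte's classical convex drawing theorem~\cite{Tutte} to the layered setting. Since the $y$-coordinate of each vertex $v$ is predetermined by $\gamma(v)$, only the $x$-coordinate of each vertex remains to be chosen, and the problem reduces to a one-dimensional barycentric solve.

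First I would fix the outer face. Because $(G,L,\gamma)$ is a hierarchical plane st-graph, the outer face is an st-face bounded by two $y$-monotone paths between a common bottom vertex $s$ and a common top vertex $t$. I would place these two paths on two vertical lines, with $s$ and $t$ at their prescribed $y$-coordinates, so that the outer boundary is drawn as a (weakly) convex polygon respecting the level structure. This gives legal $x$-coordinates for every boundary vertex.

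Next, for each internal vertex $v$ I would prescribe its $x$-coordinate to be a convex combination of the $x$-coordinates of its neighbors with strictly positive weights (uniform weights $1/\deg(v)$ suffice). Together with the fixed boundary positions this yields a linear system in the unknown $x$-coordinates. Standard Tutte-type arguments, exploiting the $3$-connectedness of $G$, show that the matrix is invertible, so there is a unique solution, and hence a unique candidate drawing.

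The main step, and where I expect the main obstacle, is to show that the resulting straight-line level drawing is planar and has all faces convex. I would mimic Tutte's proof: no internal vertex can be a local $x$-extremum, so as one traverses the boundary of any internal face the $x$-coordinate attains exactly one local maximum and exactly one local minimum. Combining this with the st-face hypothesis, which guarantees a unique bottom vertex and a unique top vertex per face and thus a natural split of each facial walk into a left and a right $y$-monotone path, one can conclude that every face is bounded by a convex polygon and that distinct faces do not overlap. The st-face condition is exactly what rules out the pathological configurations that would otherwise obstruct a Tutte-style argument when vertices are constrained to horizontal lines; once this verification is complete, convexity and planarity follow, proving the theorem.
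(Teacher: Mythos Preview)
The paper does not prove this theorem at all: it is quoted from Hong and Nagamochi~\cite{hn-cdhpgcpg-10} and used as a black box. So there is no ``paper's own proof'' to compare against; any argument you supply would be an independent reproof of an external result.

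As for the Tutte-style sketch itself, there is a genuine gap in the convexity step. The one-dimensional barycentric condition only tells you that an internal vertex $v$ is not an $x$-extremum among \emph{all} of its neighbors; it does \emph{not} tell you that $v$ is not an $x$-extremum among its two neighbors on a \emph{particular} facial walk. In Tutte's two-dimensional argument this is exactly where the fact ``$v$ lies in the interior of the convex hull of its neighbors'' is used to conclude that every facial angle at $v$ is less than $\pi$; with only the $x$-coordinate averaged and the $y$-coordinate fixed externally, $v$ is generally \emph{not} a convex combination of its neighbors in the plane, and the ``one local max, one local min in $x$ per face'' claim does not follow. The st-face hypothesis controls $y$-monotonicity along each facial boundary, but you still need an argument that rules out reflex angles in $x$, and the barycentric equation alone does not supply it. A second, smaller issue: putting the two outer monotone paths on vertical lines creates flat angles along the outer boundary, so the outer polygon is only weakly convex; the footnote in the paper already flags that Hong and Nagamochi need a nontrivial condition on the prescribed outer polygon when flat angles are present, and your outline does not address this.

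Hong and Nagamochi's actual proof is not a Tutte/barycentric argument but a recursive one in the spirit of the Chiba--Yamanouchi--Nishizeki convex-drawing algorithm, decomposing along separating paths of the $3$-connected graph; that machinery is what handles both the face-convexity and the boundary-shape issues that your sketch leaves open.
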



We now proceed to prove Theorem~\ref{thm:quad-convex} and hence to solve the Quadrilateral Convexification problem with a single unidirectional morph. First, we rotate the frame of reference so that edge $bd$ is horizontal and then we rotate it a bit more, so that no two vertices have the same $y$-coordinate and so that the strip delimited by the horizontal lines through $b$ and $d$ contains no vertex of $\Gamma$, except for the presence of $b$ and $d$ on its boundary. Remove edge $bd$ from $\Gamma$ obtaining a planar straight-line drawing $\Gamma'$ of a planar graph $G'$. Draw a horizontal line through each vertex; let $L$ be the set of these lines and let $\gamma$ be the function that maps each vertex to the unique line in $L$ through it; observe that, by the assumptions, $\gamma$ represents a total order on the vertices of $G'$. We have the following. 

\begin{lemma} \label{le:orientation}
$(G',L,\gamma)$ is a $3$-connected hierarchical plane st-graph.
\end{lemma}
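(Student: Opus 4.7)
My plan is to establish the lemma in two parts: first, that every face of $G'$ is an st-face (the hierarchical plane st-graph condition), and second, that $G'$ is 3-connected. I expect the 3-connectivity part to be the main obstacle, since it requires a structural argument leveraging a diagonal-flip symmetry that has not yet been used.

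For the st-face part, I will start by noting that removing edge $bd$ from $\Gamma$ affects only the two triangular faces $abd$ and $bcd$, which merge into a single quadrilateral face with boundary cycle $abcda$; every other face of $G'$ is an unchanged triangular face of $\Gamma$. Any triangular face has three vertices with distinct $y$-coordinates (by construction of the rotation), so sorting them immediately yields the required pair of $y$-monotone boundary paths, making the face an st-face. For the quadrilateral face, I will use the geometric setup: since $d$ is the reflex vertex of the arrowhead and $bd$ is the internal diagonal, the third vertices $a$ and $c$ of the two subtriangles $abd$ and $bcd$ lie on opposite sides of the line through $b$ and $d$. Combined with the strip assumption---no vertex other than $b$ and $d$ sits strictly between the horizontal lines through $b$ and $d$---this forces $a$ strictly above and $c$ strictly below the strip, up to swapping labels. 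The boundary of the quadrilateral then decomposes into two $y$-monotone paths $c \to b \to a$ and $c \to d \to a$, certifying it as an st-face with source $c$ and sink $a$.

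For 3-connectivity, my plan is to exploit a diagonal-flip symmetry. First, I argue $n \geq 5$: otherwise $\Gamma$ would have $3n-6 = 6$ edges on $4$ vertices, i.e., $\Gamma = K_4$, which contains $ac$, contradicting the hypothesis; hence $\Gamma$ is a 3-connected planar triangulation. Next, I introduce $H$, the graph obtained from $G'$ by adding the diagonal $ac$ inside the quadrilateral face; this splits the face into the two triangles $abc$ and $acd$, so $H$ is itself a planar triangulation on $n \geq 5$ vertices, hence also 3-connected. I then assume for contradiction that $X \subseteq V(G')$ is a cut with $|X| \leq 2$. If $b \in X$ or $d \in X$, then $\Gamma - X = G' - X$, which is connected by 3-connectivity of $\Gamma$, contradicting that $X$ cuts $G'$; hence $b, d \notin X$. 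By the symmetric argument applied to the pair $(H, ac)$, also $a, c \notin X$. But then all four cycle edges $ab, bc, cd, da$ survive in $G' - X$, placing $\{a, b, c, d\}$ in a single component of $G' - X$; on the other hand, since $\Gamma - X$ is connected while $G' - X$ is not, the edge $bd$ must bridge $\Gamma - X$, so $b$ and $d$ lie in different components of $G' - X$---a contradiction. The crucial point enabling the symmetric step is that $H$ is a bona fide planar triangulation, which relies on the hypothesis that $ac$ is not already an edge of $\Gamma$ (so that adding $ac$ to $G'$ creates no multi-edge).
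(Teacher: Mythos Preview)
Your proof is correct. The st-face argument is essentially the paper's: both observe that all triangular faces are automatically st-faces under a total $y$-order, and that the single quadrilateral face $abcd$ is an st-face because the rotated reference frame places $a$ and $c$ on opposite sides of the thin horizontal strip through $b$ and $d$.

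Your 3-connectivity argument, however, takes a genuinely different route. The paper argues directly: assuming a 2-cut $\{x,y\}$ of $G'$, it uses 3-connectivity of $\Gamma$ to force $b$ and $d$ into different components, then the two internally disjoint $b$--$d$ paths $b\,a\,d$ and $b\,c\,d$ in $G'$ pin down $\{x,y\}=\{a,c\}$; finally it invokes a planar-embedding fact (a separating pair of a 2-connected plane graph lies on at least two common faces) to conclude that $a$ and $c$ share a second, necessarily triangular, face, yielding the forbidden external chord $ac$. You instead build the ``flipped'' triangulation $H=G'+ac$, observe that the no-external-chord hypothesis makes $H$ simple and hence a 3-connected maximal planar graph just like $\Gamma$, and then run a symmetric case split: if the putative cut $X$ meets $\{b,d\}$ use $\Gamma$, if it meets $\{a,c\}$ use $H$, and if it meets neither, the surviving 4-cycle $abcd$ contradicts that $bd$ is a bridge of $\Gamma-X$. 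Your approach avoids the face-sharing lemma entirely, trading it for the clean observation that the diagonal flip preserves triangulation status; the paper's approach stays closer to the drawing and does not need to introduce an auxiliary graph. Both are short, and each illuminates a different structural reason why the external-chord hypothesis is exactly what is needed.
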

\begin{proof}
By construction, $\Gamma'$ is a straight-line level planar drawing of $(G',L,\gamma)$, hence $(G',L,\gamma)$ is a hierarchical plane graph. 

Further, every face of $G'$ is an st-face. This is trivially true for all faces delimited by triangles in $\Gamma'$, since $\gamma$ represents a total order on the vertices of $G$. Moreover, this is true for the face delimited by $abcd$, since by the choice of the reference frame $a$ is below both horizontal lines $\gamma(b)$ and $\gamma(d)$ and $c$ is above both of them, or vice versa. It follows that $(G',L,\gamma)$ is a hierarchical plane st-graph.





\begin{figure}[htb]
 \centering 
\includegraphics[scale=1]{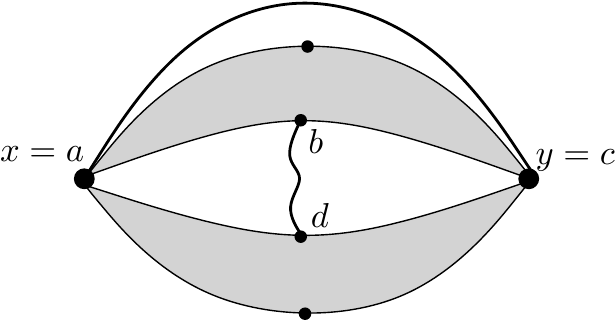}\label{fig:3-connected-proof-2components}
 \caption{Illustration for the proof of Lemma~\ref{le:orientation}.}
 \label{fig:3-connected-proof}
\end{figure}


Finally, we prove that $G'$ is $3$-connected. Refer to Figure~\ref{fig:3-connected-proof}. Suppose, for a contradiction, that $G'$ is not $3$-connected. Then there exist two vertices $x$ and $y$ whose removal from $G'$ results in a disconnected graph $G''$. Since $\Gamma$ is $3$-connected, adding the edge $bd$ to $G''$ reconnects this graph. Therefore, $b$ and $d$ must be in different components of $G''$. Since $G'$ contains paths $bad$ and $bcd$, it follows that $\{x,y\} = \{a,c\}$.  Since removing $a,c$ disconnects $G'$, there is another face of $G'$ that contains $a$ and $c$.  This face is delimited by three edges (as the one delimited by quadrilateral $abcd$ is the only face of $G'$ which is not triangular), therefore contains the edge $ac$. However, this implies that $xy=ac$ is a chord external to polygon $abcd$ in $\Gamma$, thus contradicting the assumptions.
\end{proof}

By Lemma~\ref{le:orientation}, $(G',L,\gamma)$ is a $3$-connected hierarchical plane st-graph. By Theorem~\ref{th:hong-nagamochi}, a convex straight-line level planar drawing $\Lambda'$ of $(G',L,\gamma)$ exists. In particular, polygon $abcd$ is convex in $\Lambda'$. Construct a straight-line planar drawing $\Lambda$ of $G$ from $\Lambda'$ by drawing edge $bd$ as an open straight-line segment. Due to the initial choice of the reference frame, this introduces no crossing in the drawing. To solve Problem Quadrilateral Convexification we use a single linear morph, $\langle \Gamma,\Lambda \rangle$, from $\Gamma$ into $\Lambda$. See Figure~\ref{fig:convexifier-morphing}. We now prove the following lemma.

\begin{figure}[htb]
 \centering 
 \includegraphics[width = 0.8\textwidth]{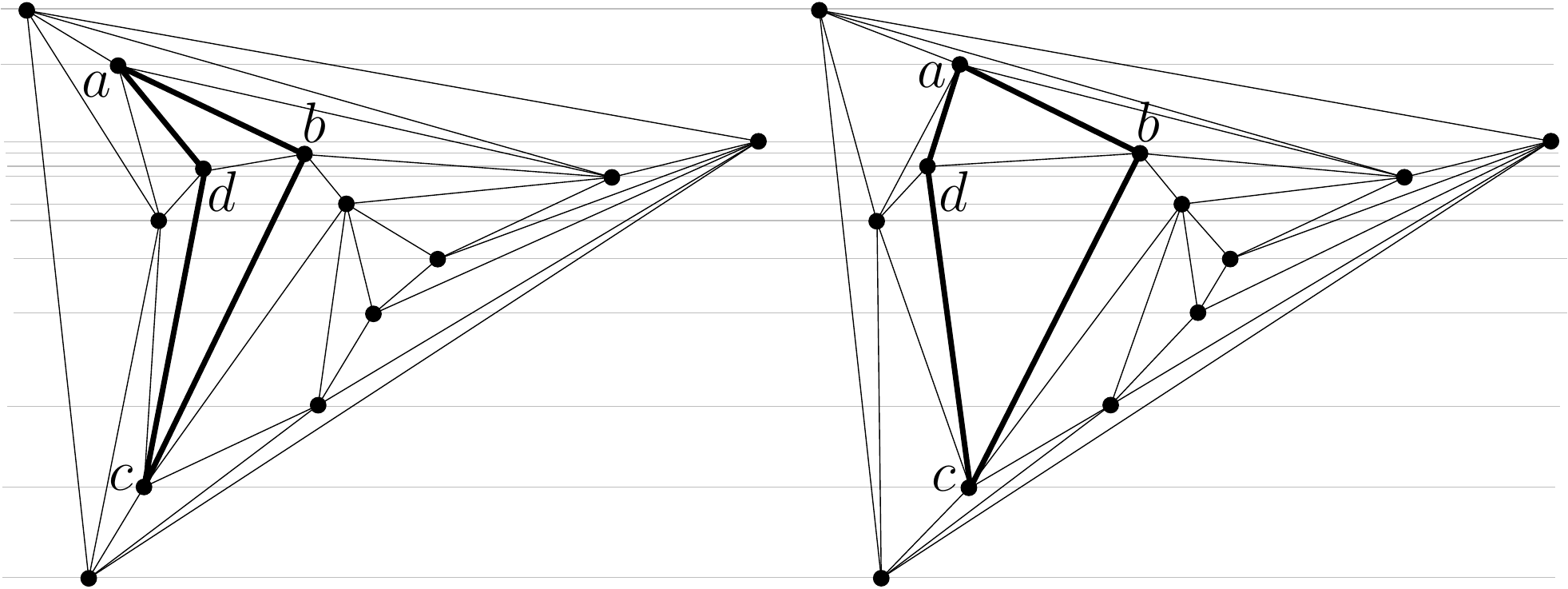}
 \caption{The linear morph $\langle \Gamma,\Lambda \rangle$.}
 \label{fig:convexifier-morphing}
\end{figure}

\begin{lemma} \label{le:unidirectional}
The linear morph \mmorph{\Gamma,\Lambda} is planar and unidirectional.
\end{lemma}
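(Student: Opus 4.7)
\textbf{Unidirectionality.} The morph is unidirectional by inspection. In $\Lambda$ each vertex $v$ lies on the horizontal line $\gamma(v)$, which is exactly the horizontal line through $v$'s position in $\Gamma$. Hence every vertex moves along a horizontal line at uniform speed, so the morph is $L$-directional for the horizontal direction.

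\textbf{Planarity — setup.} First observe that $\Lambda$ itself is straight-line planar: $\Lambda'$ is planar by Theorem~\ref{th:hong-nagamochi}, and the segment $bd$ added to get $\Lambda$ is the diagonal of the (convex) quadrilateral $abcd$, which bounds a face of $G'$ containing no other vertices. Next, observe that at every time $t\in[0,1]$ the intermediate drawing $\Gamma_t$ is still a level drawing of the hierarchical graph obtained from $(G,L,\gamma)$ by declaring $\gamma(b)$ and $\gamma(d)$ the layers of $b$ and $d$. In particular, because of the initial choice of frame (no two vertices share a $y$-coordinate, and the horizontal strip between $b$ and $d$ is vertex-free), every edge of $G$, including $bd$, is $y$-monotone in $\Gamma_t$ throughout the morph.

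\textbf{The key linear-interpolation identity.} Fix any horizontal line $y=y_0$ and any edge $uv$ that crosses it. Let $\lambda=(y_0-y(u))/(y(v)-y(u))$, a constant because $y(u),y(v)$ do not change during the morph. If $u$ and $v$ have positions $u_t=(1-t)u_0+tu_1$ and $v_t=(1-t)v_0+tv_1$, then the $x$-coordinate of the intersection of $uv$ with $y=y_0$ at time $t$ is
\[
x(y_0,t)=(1-\lambda)x(u_t)+\lambda x(v_t)=(1-t)\,x(y_0,0)+t\,x(y_0,1).
\]
Thus the intersection point of any edge with any horizontal line moves linearly in $t$ between its position in $\Gamma$ and its position in $\Lambda$.

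\textbf{No crossings.} Since two $y$-monotone segments cross iff they cross on some horizontal line, it suffices to show that for every horizontal line $y=y_0$ the left-to-right order of the intersection points of the edges crossing $y=y_0$ is preserved for all $t\in[0,1]$. At $t=0$ this order is determined by the planar embedding of $\Gamma$, and at $t=1$ by that of $\Lambda$; since both drawings have the same planar embedding, the two orders agree. Linearity of $x(y_0,t)$ in $t$ then gives $x_1(y_0,t)<x_2(y_0,t)$ for every $t\in[0,1]$ whenever this strict inequality holds at $t=0$ and $t=1$. Hence no two edges cross at any time, and $\Gamma_t$ is planar for every $t\in[0,1]$.

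\textbf{Main obstacle.} The only subtlety is to make sure all edges are truly $y$-monotone at every intermediate time — this is why we perturbed the frame after making $bd$ horizontal so that no two vertices share a $y$-coordinate, and why we isolated the strip between $b$ and $d$; otherwise the edge $bd$ would become horizontal and the convex-combination argument on a horizontal line through $b$ or $d$ would break down.
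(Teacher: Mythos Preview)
Your proof is correct and follows essentially the same approach as the paper: both argue unidirectionality from the fact that every vertex lies on the same horizontal line in $\Gamma$ and $\Lambda$, and both establish planarity by showing that on each horizontal line the left-to-right order of edge intersection points agrees at $t=0$ and $t=1$ and is preserved throughout by linearity. The only cosmetic differences are that the paper defers the linearity step to a forward reference (Lemma~\ref{lem:move_boundary}) while you compute the convex-combination identity explicitly, and the paper cites the st-face property to justify that the two orders coincide whereas you invoke ``same planar embedding'' (which suffices once $y$-monotonicity is in hand).
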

\begin{proof} 
The morph is certainly unidirectional, since each vertex is on the same horizontal line in the initial drawing $\Gamma$ and the final drawing $\Lambda$.  To prove planarity we will argue that no vertex crosses an edge during the motion.  The tool we need is Lemma~\ref{lem:move_boundary}, which proves that if two points $p$ and $q$ each move at uniform speed along a horizontal line and $q$ is to the right of $p$ in their initial and final positions, then $q$ is to the right of $p$ at every time instant. Lemma~\ref{lem:move_boundary} will be stated and proved in Section~\ref{sec:sectors}. Thus it suffices to show that for each horizontal line $\ell$ in $L$, the left-to-right ordering of vertices lying on $\ell$ and points where edges cross $\ell$ is the same in $\Gamma$ as in $\Lambda$.   This follows directly from the fact that both drawings have the same faces, and every face is an st-face.  
\end{proof}

In order to complete the proof of Theorem~\ref{thm:quad-convex}, it remains to discuss the running time of the algorithm that solves Quadrilateral Convexification. Removing and re-inserting edge $bd$ takes $O(1)$ time. Other than that, we only need to apply Hong and Nagamochi's algorithm, which takes $O(n^2)$ time~\cite{hn-cdhpgcpg-10}. Thus the total running time is $O(n^2)$.

\section{Converting a Pseudo-Morph to a Morph}\label{se:geometry}
\label{sec:geometry}

In this section, we show how to convert a pseudomorph consisting of unidirectional morphing steps into a true morph of unidirectional morphing steps, assuming that the graph is triangulated and the vertices we contract have degree at most 5.
%
Specifically, we prove the following.

\rephrase{Reminder of Theorem~\ref{thm:pseudomorph}}{\pseudomorph}

Suppose that the given pseudo-morph  consists of the contraction
of an internal vertex $v$ with $\Deg(v)\le 5$ to a vertex $a$, followed by a
morph 
 $\mm=\langle\Gamma_{1},\ldots,\Gamma_{k+1}\rangle$ of the reduced
graph, and then the uncontraction of $v$ from $a$. Suppose that $\langle \Gamma_{i},\Gamma_{i+1}\rangle$ is an
$L_{i}$-directional morph, for $1\leq i\leq k$.
We assume that the drawings $\Gamma_{1},\ldots,\Gamma_{k+1}$ are given to us, and show how to update the sequence of drawings to those of $\cal M$ in time $O(n+k)$.

Specifically, we will show how to add $v$ and its incident edges back into each drawing of the morph $\cal M'$ keeping each step unidirectional.
We will preserve planarity by placing $v$ at an interior point of
the kernel of $\Delta(v)$.  Call the resulting morph $\cal M''$.
We will perform the modifications from $\cal M'$ to $\cal M''$ in time $O(k)$.
To obtain the final morph $\cal M$, we
replace the original contraction of $v$ to $a$ by a unidirectional morph that
moves $v$ from its initial position to its position at the start of
$\cal M''$, then follow the steps of $\cal M''$, and then replace the
uncontraction of $v$ by a unidirectional morph that moves $v$ from its
position at the end of $\cal M''$ to its final position.  The result is a
true morph that consists of $k+2$ unidirectional morphing steps.
It takes $O(n)$ time to add the two extra unidirectional morphs to the sequence, since we must add two drawings of an $n$-vertex graph. 

Thus our main task is to modify the morph $\cal M'$  by adding vertex $v$ and its incident edges back into each drawing of the morph sequence in constant time per drawing,  preserving planarity and 
maintaining the property that each step of the morph sequence is unidirectional.
We can ignore everything outside the polygon $P = \Delta(v)$. 
Note that vertex $a$ remains in the kernel of $P$ throughout the morph.
We distinguish cases depending on the degree of $v$.  
Section~\ref{sec:triangles_and_quads} proves Theorem~\ref{thm:pseudomorph} for the easy case where the degree of $v$ is 3 or 4.  Section~\ref{sec:penta} proves Theorem~\ref{thm:pseudomorph} for the more complicated case where $v$ has degree 5.



\remove{
This section is structured as follows. 
Section~\ref{sec:triangles_and_quads} deals with the case where $P$ has 3 or 4 vertices.
We consider separate cases depending on the degree of vertex $v$. For the case where
$\Deg(v)\in \{3,4,5\}$ we assume the neighborhood of $v$ is triangulated.  
In Section~\ref{sec:triangles_and_quads} we show how to
obtain the true morph from $\MM$ for the case where
$\Deg(v)\in\{3,4\}$. The case where $\Deg(v)=5$ is more involved and
it is treated separately in Section~\ref{sec:penta}.  
}


\subsection{Vertex \boldmath{$v$} of Degree 3 or 4}
\label{sec:triangles_and_quads}

In this section we prove Theorem~\ref{thm:pseudomorph} for the case where the contracted vertex $v$ has degree 3 or 4, i.e.~$P$ is a triangle or quadrilateral.
\remove{
\begin{lemma}\label{lem:tri_and_quad}
  Let $\MM$ be a planar pseudo-morph from $\Gamma$ to $\Gamma'$ that
  contracts vertex $v$ and consists of $O(k)$ unidirectional steps.
  Suppose that that $\Deg(v)\in\{3,4\}$ with the neighborhood of $v$
  being triangulated. Then we can obtain a planar morph $M$ from
  $\Gamma$ to $\Gamma'$ consisting of the same number of
  unidirectional steps in $O(k)$ time.
\end{lemma}
\begin{proof}
}
%
  If $P$ is
  a triangle then by Lemma~\ref{lemma:convex-comb} we can place $v$ at
  a fixed convex combination of the triangle vertices in all the
  drawings $\Gamma_i$.

  If $P$ is a quadrilateral $abcd$ then the line segment $ac$ stays in
  the kernel of $P$ because vertex $a$ stays in the kernel of $P$.  Thus, we can place $v$ at a fixed convex combination of $a$
  and $c$ in all the drawings $\Gamma_i$ (using the degenerate version
  of Lemma~\ref{lemma:convex-comb} where the triangle collapses to a
  line segment).
  
 The coordinates of $v$ in each $\Gamma_i$ can be computed in constant time, so the total time bound is $O(k)$. 
%


\subsection{Vertex \boldmath{$v$} of Degree 5}
\label{sec:penta}

In this section we prove Theorem~\ref{thm:pseudomorph} for the case where the contracted vertex $v$ has degree 5.
Our goal will be to place vertex $v$ very close to the vertex $a$ to which it was contracted in the pseudo-morph.
\remove{
In this section we show how to obtain a true planar morph from a
planar pseudomorph $\MM$ where vertex $v$ of degree 5 is contracted
and consists only of unidirectional steps. We state this formally in
the following lemma.

\begin{lemma}\label{lem:penta}
  Let $\MM$ be a planar pseudo-morph from $\Gamma$ to $\Gamma'$ that
  contracts vertex $v$ and consists of $O(k)$ unidirectional steps.
  Suppose that that $\Deg(v)=5$ with the neighborhood of $v$ being
  triangulated. Then we can obtain a planar morph $M$ from $\Gamma$ to
  $\Gamma'$ consisting of the same number of unidirectional steps in
  $O(k)$ time.
\end{lemma}
}
%
Let $P= \Delta(v)$ be the pentagon $abcde$ labelled clockwise. 
We use the notation that $b$ is at point $b_i$ in drawing $\Gamma_i$, etc.

We may assume that vertex $a$ is fixed throughout the entire morph. 
%
This is not a loss of generality because if vertex $a$ moves, we can
translate the whole drawing to move it back: the morph in which every vertex $v$ moves $k_v$ units along direction $\bar \ell$ is planar if and only if the morph in which every vertex moves $k_v-k_a$ units along direction $\bar \ell$ is planar; and note that vertex $a$ stays fixed in the latter morph.

We want to place $v$ within distance $\varepsilon$ of $a$.  
We want $\varepsilon$ small enough so that 
at any time instant $t$ during morph $\langle
\Gamma_1,\dots,\Gamma_{k+1} \rangle$ the intersection between the disk $D$ centered at $a$ with
radius $\varepsilon$ and the kernel of polygon $P$ consists of a
positive-area sector $S$ of $D$.  
Since the morph consists of $k$ linear morphs, we can compute 
a value for $\varepsilon$ as follows. 
For $1\leq i\leq k$ let $\varepsilon_{i}$ be the minimum 
during the unidirectional morph $\langle\Gamma_{i},\Gamma_{i+1} \rangle$ of 
the distance from $a$ to any of the edges $bc, cd, de$.  
We claim that $\varepsilon_{i}$ can be computed in constant time on our real RAM model of computation.
It suffices to compute the minimum distance between $a$ and each of the three infinite lines through $bc, cd,$ and $de$.  This is because we can test if the minimum occurs inside the relevant line segment, and we can compute the minimum distance from $a$ to each of the moving endpoints $b,c,d,e$.
Consider the line through points $p,q$, where $pq = bc, cd,$ or $de$, and points $p$ and $q$ each move on parallel lines at uniform speed during the morph $\langle\Gamma_{i},\Gamma_{i+1} \rangle$.
As a function of time, $t$, the square of the distance from $a$ to the line through $pq$ is of the form $f(t)/g(t)$ where $f$ and $g$ are  quadratic polynomials in $t$.  We want to solve for the derivative being zero.  The derivative has a cubic polynomial in the numerator, and a cubic equation can be solved in constant time on a real RAM model.
%
%

After computing each $\varepsilon_i$, 
let $\varepsilon=\min_{1\leq i\leq k}\{\varepsilon_{i}\}$.
Then $\varepsilon$ can be computed in time $O(k)$.

Fix $D$ to be the disk of radius $\varepsilon$ centered at $a$.
In case $a$ is a convex vertex of $P$, the sector $S$ is bounded by the line segments $ab$ and $ae$ and we call it a \emph{positive} sector.  See Figure~\ref{fig:def-convex-disk}.
In case $a$ is a reflex vertex of $P$, the sector $S$ is bounded by the extensions of line segments $ab$ and $ae$ and we call it a \emph{negative} sector.  
See Figure~\ref{fig:def-reflex-disk}.
More precisely, let $b'$ and $e'$ be points so that
$a$ is the midpoint of the segments $bb'$ and $ee'$ respectively.  The negative sector is bounded by the segments $ae'$ and $ab'$.
Note that when an $L$-directional morph is applied to $P$, the points $b'$ and $e'$ also move at uniform speed in direction $L$.

\remove{ 
Observe that irrespective of whether $a$ is a convex or reflex vertex,
the sector $S$ is the intersection of $D$, the points that are to the
left of the line through $bb'$ (as traversed from $b$ to $b'$) and the
points that are to the right of the line through $ee'$ (as traversed
from $e$ to $e'$). This can be informally rephrased as the set of
points that are in $D$ and ``between'' lines $bb'$ and $ee'$. 
}

\begin{figure}[htb]
 \centering 
 \subfigure[]{\includegraphics[scale=1.7]{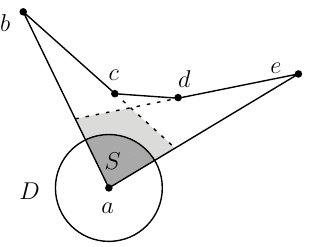}\label{fig:def-convex-disk}}
 \hspace{20pt}
 \subfigure[]{\includegraphics[scale=1.7]{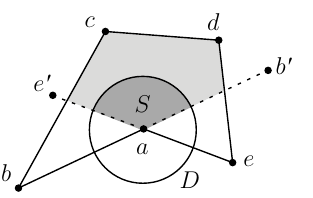}\label{fig:def-reflex-disk}}
 \caption{A disk $D$ centered at $a$ whose intersection with the
   kernel of $P$ (the lightly shaded polygonal region) is a non-zero-area
   sector $S$ (darkly shaded).  \subref{fig:def-convex-disk} Vertex $a$ is convex and $S$ is a
   positive sector.  \subref{fig:def-reflex-disk} Vertex $a$ is reflex and $S$ is a negative
   sector.}
 \label{fig:def-sectors}
\end{figure}

The important property we use from now on is that any point in the sector $S$ lies in the kernel of polygon $P$. This property immediately follows from the choice of $\varepsilon$. Let the sector in drawing $\Gamma_i$ be $S_i$, for $ i=1, \ldots,  k+1$.  
Recall our convention that 
$\langle \Gamma_i, \Gamma_{i+1} \rangle$ is an $L_i$-directional morph. 

Our task is to choose for each $i$ a position $v_i$ for vertex $v$ inside sector $S_i$ so that all the $L_i$-directional morphs keep $v$ inside the sector at all times.    
We will separate the proof into two parts.  One part is to show that there exist points $v_i$ in $S_i$ so that the line segment $v_i v_{i+1}$ is parallel to $L_i$.  This is in Section~\ref{sec:nice_pts}.
The other part of the proof is to show that such points $v_i$ ensure that $v$ is inside sector $S$ throughout the morph.  This is in Section~\ref{sec:sectors}. 


For both proofs, we will distinguish the following two possibilities for the relationship between $S_i$ and the line $L_i$ translated to go through point $a$.

\smallskip\noindent{\bf One-sided case.} 
Points $b_i$ and $e_i$ lie in the same closed
half-plane determined by $L_i$.  In this case, whether the sector
$S_i$ is positive or negative, $L_i$ does not intersect the interior
of $S_i$.  See Figure~\ref{fig:one-side}.  An $L_i$-directional morph
keeps $b_i$ and $e_i$ on the same side of $L_i$ so if $S_i$ is
positive it remains positive and if $S_i$ is negative it remains
negative.  Observe that $v$ remains inside the sector if and only if it
    remains inside $D$ and between the two lines $ab$ and
    $ae$.
\begin{figure}[!ht]
  \centering 
  \subfigure[\label{fig:one-side-normal}]{\includegraphics[scale=1.5]{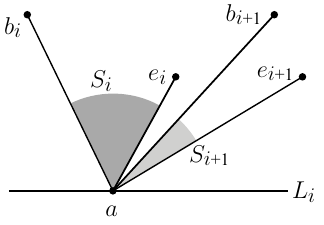}} \hspace{40pt} 
  \subfigure[\label{fig:one-side-sub}]{\includegraphics[scale=2]{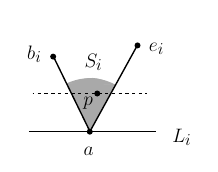}}
  \caption{The one-sided case where $S_i$ lies to one side of $L_i$,
    illustrated for a positive sector $S_i$.  (a) An $L_i$-directional
    morph to $S_{i+1}$.  (b) $v$ remains inside the sector if and only if it
    remains inside $D$ and between the two lines $ab$ and
    $ae$. 
    }
  \label{fig:one-side} 
\end{figure}

\smallskip\noindent{\bf Two-sided case.} 
Points $b_i$ and $e_i$ lie on opposite sides of
$L_i$.  In this case $L_i$ intersects the interior of the sector
$S_i$.  See Figure~\ref{fig:two-side}.  During an $L_i$-directional
morph the sector $S_i$ may remain positive, or it may remain negative,
or it may switch between the two, although it can only switch once.
Observe that $v$ remains
    inside the sector if and only if it remains inside $D$ and on the same side
    of the lines $bb'$ and $ee'$. 
\begin{figure}[!ht]
  \centering 
  \subfigure[\label{fig:inverting}]{\includegraphics[scale=1.7]{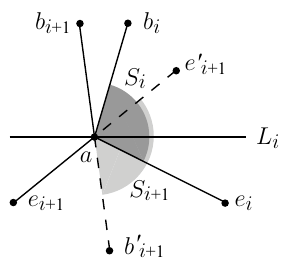}} \hspace{40pt}
  \subfigure[\label{fig:two-side-sub}]{\includegraphics[scale=2]{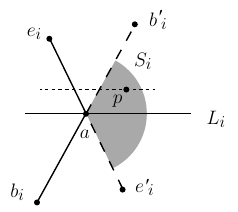}}
  \caption{The two-sided case where $S_i$ contains points on both
    sides of $L_i$.  \subref{fig:inverting} An $L_i$-directional morph from the positive
    sector $S_i$ bounded by $ b_i a e_i$ to the negative sector
    $S_{i+1}$ bounded by $ e'_{i+1} a b'_{i+1}$.  \subref{fig:two-side-sub} $v$ remains
    inside the sector if and only if it remains inside $D$ and on the same side
    of the lines $bb'$ and $ee'$.  
      }
  \label{fig:two-side}
\end{figure}

\remove{
\rednote{Fidel's version}
To prove 
the Theorem 
our aim will be to
place $v$ inside a sector $S$ of a small disk $D$ around $a$
throughout the pseudo-morph $\MM$. The sector $S$ will be delimited by
the lines through $ab$ and $ae$. As the unidirectional morphs defining
$\MM$ act on the sequence of drawings, the sector $S$ will change. We
will show that there exists a set of ``nice points'' which is a subset
of $S$ where $v$ can be placed so that we obtain a true planar
morph. In Section~\ref{sec:sectors} we define the disk $D$ and the
sector $S$ in which we will focus our attention. We also show an
important lemma related to the points remaining inside sector $S$
during a unidirectional morph. In Section~\ref{sec:nice_pts} we
formally define what the set of ``nice points'' is and we show that it
is not empty. We will see that showing that the set of ``nice points''
is non empty implies
Lemma~\ref{lem:penta}. 
Finally, in Section~\ref{sec:runtime} we show that obtaining a true morph from
a pseudo-morph consisting of $O(k)$ steps takes $O(k)$ time.
}

\subsubsection{Sectors and the betweenness property}
\label{sec:sectors}

In this subsection we show that if we can choose points $v_i$ in sector $S_i$ such that the line $v_i v_{i+1}$ is parallel to $L_i$ then $v$ remains inside the sector throughout the morph.

\remove{
Let us consider the pseudo-morph $\MM$ as defined in the statement of
Lemma~\ref{lem:penta}. Let $\mm=\langle
\Gamma_{1},\ldots,\Gamma_{k}\rangle$ denote the morph of the reduced
graph after contracting $v$ onto $a$. Denote the neighbors of $v$ by
$a,b,c,d$ and $e$ in clockwise order. We use $P$ to denote the
pentagon defined by the neighbors of $v$. Let $D$ be a disk centered
at $a$ of radius $\varepsilon>0$. The radius $\varepsilon$ of $D$ is
chosen as follows. Take $\varepsilon$ small enough so that the set of
points in $D$ that can see all the vertices of $P$ defines a
positive-area sector $S$ of $D$.

We can see that the sector $S$ will be delimited by the boundary of
$D$ and the lines $ab$ and $ae$. In fact, the boundary of $S$ will
depend on the angle of $P$ at $a$. For the case where $a$ is a convex
vertex of $P$ the sector $S$ will be delimited by the segments $ab$
and $ae$. In this case we call $S$ a \emph{positive} sector. See
Figure~\ref{fig:convex-disk}.  If $a$ is a reflex vertex, then $S$
will be delimited by the segments $ab'$ and $ae'$, which are
extensions of the segments $ab$ and $ae$ respectively. In this case we
call $S$ a \emph{negative} sector. See
Figure~\ref{fig:reflex-disk}. 
\begin{figure}[htb]
 \centering 
 \subfigure[]{\includegraphics[scale=1.7]{fig/a-convex-disk.pdf}\label{fig:convex-disk}}
 \hspace{20pt}
 \subfigure[]{\includegraphics[scale=1.7]{fig/a-reflex-disk.pdf}\label{fig:reflex-disk}}
 \caption{A disk $D$ centered at $a$ whose intersection with the
   kernel of $P$ (the lightly shaded polygonal region) is a non-zero-area
   sector $S$ (darkly shaded).  (a) Vertex $a$ is convex and $S$ is a
   positive sector.  (b) Vertex $a$ is reflex and $S$ is a negative
   sector.}
 \label{fig:sectors}
\end{figure}

Observe that irrespective of whether $a$ is a convex or reflex vertex,
the sector $S$ is the intersection of $D$, the points that are to the
left of the line through $bb'$ (as traversed from $b$ to $b'$) and the
points that are to the right of the line through $ee'$ (as traversed
from $e$ to $e'$). This can be informally rephrased as the set of
points that are in $D$ and ``between'' lines $bb'$ and $ee'$. 
We now
present our main tool, which is a result proving that the
``sidedness'' property on a line $L$ is preserved during an
$L$-directional morph.
}

Our main tool is the following lemma proving that ``sidedness'' on line $L$ is preserved in an $L$-directional morph.

\begin{lemma}\label {lem:move_boundary} 
  Let $L$ be a horizontal line and $p_{0},p_{1},q_{0},q_{1}$ be points
  in $L$. Consider a point $p$ that moves at constant speed from
  $p_{0}$ to $p_{1}$ in one unit of time. If $q_{i}$ is to the right
  of $p_{i}$, $i=0,1$, and $q$ is a point that moves at constant speed
  from $q_{0}$ to $q_{1}$ in one unit of time, then $q$ is to the right of $p$ during their entire movement. Note that $p_0$ may lie to the   right or left of $p_1$ and likewise for $q_0$ and $q_1$.
\end{lemma}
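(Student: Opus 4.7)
The plan is to parameterize both moving points linearly in time and show that the horizontal difference $q(t)-p(t)$ is a convex combination of two positive scalars, hence positive throughout the motion. This reduces the geometric statement to a one-line algebraic inequality; there is no real obstacle, and the result should be stated as a direct observation about linear interpolation.

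More precisely, since $p$ moves at constant speed from $p_0$ to $p_1$ and $q$ moves at constant speed from $q_0$ to $q_1$ in one unit of time, we can write, for $t \in [0,1]$,
\[
p(t) = (1-t)\,p_0 + t\,p_1, \qquad q(t) = (1-t)\,q_0 + t\,q_1.
\]
Both points remain on the horizontal line $L$ at all times, so ``to the right of'' is determined by the (signed) horizontal displacement. Taking the difference,
\[
q(t) - p(t) = (1-t)\,(q_0 - p_0) + t\,(q_1 - p_1).
\]

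By hypothesis, $q_0$ lies to the right of $p_0$ and $q_1$ lies to the right of $p_1$, so $q_0 - p_0 > 0$ and $q_1 - p_1 > 0$. For every $t \in [0,1]$ the coefficients $(1-t)$ and $t$ are nonnegative and not both zero, so $q(t) - p(t) > 0$. Hence $q(t)$ lies strictly to the right of $p(t)$ throughout the motion, regardless of whether $p_0$ lies to the left or right of $p_1$ (and similarly for $q_0$ and $q_1$), which concludes the proof.
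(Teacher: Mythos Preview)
Your proof is correct and essentially identical to the paper's own argument: both parameterize the points linearly, compute the difference $q(t)-p(t)$ as a convex combination of the two positive initial displacements, and conclude positivity. The paper introduces the notation $\delta_i = q_i - p_i$ before substituting, while you compute the difference directly, but the content is the same.
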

\begin{proof}
  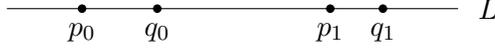
\begin{figure}[!h] \centering 
    \usetikzlibrary{decorations,decorations.markings,decorations.text,calc,arrows}

\begin{tikzpicture} [
  _vertex/.style ={circle,draw=black, fill=black,inner sep=1pt},
__vertex/.style ={circle,draw=black, fill=white,inner sep=1pt},
c_vertex/.style ={circle, double=white, draw=black, fill=black,inner sep=1pt},
gr_vertex/.style ={circle,draw=gray, fill=gray,inner sep=1pt},
dgr_vertex/.style ={circle,draw=black!60, fill=black!60,inner sep=1pt},
  r_vertex/.style={circle,draw=red,  fill=red,inner sep=1pt},
  g_vertex/.style={circle,draw=green,fill=green,inner sep=1pt},
  b_vertex/.style={circle,draw=blue, fill=blue,inner sep=1pt},
  _edge/.style={black,line width=0.5pt},
  gr_edge/.style={gray,line width=0.3pt},
  dgr_edge/.style={black!60,line width=0.3pt},
  r_edge/.style={red,line width=0.3pt},
  g_edge/.style={green,line width=0.3pt},
  b_edge/.style={blue,line width=0.3pt},
 every edge/.style={draw=black,line width=0.3pt}]

\draw (0,0)--(6,0) node[label=east:$L$] {};

\node[_vertex,label=south:$p_0$] (p0) at (1,0) {};
\node[_vertex,label=south:$q_0$] (q0) at (2,0) {};
\node[_vertex,label=south:$p_1$] (p1) at (4.3,0) {};\node[_vertex,label=south:$q_1$] (q1) at (5,0) {};

\end{tikzpicture}

    \caption{Points $p$ and $q$ move from $p_{0}$ to $p_{1}$ and from
      $q_{0}$ to $q_{1}$ respectively.}
    \label{fig:move_boundary}
  \end{figure}
  Let $p_{i}$ and $q_{i}$, $i=0,1$, be points as described above, see
  Figure~\ref{fig:move_boundary}.  Denote by $p_{t}$ and $q_{t}$ the
  positions of $p$ and $q$, for $0\leq t\leq 1$. First note that
  \begin{equation}
    q_{i}=p_{i}+\delta_{i}\label{eq:yi}
  \end{equation}
  for $i=0,1$, with $\delta_{i}>0$. Since $p$ and $q$ are moving at
  constant speed, we have $p_{t}=(1-t)p_{0}+tp_{1}$ and
  $q_{t}=(1-t)q_{0}+tq_{1}$. Now, using equation~\eqref{eq:yi} in the
  expression for $q_{t}$ we have
  \begin{align*}
    q_{t}&=(1-t)(p_{0}+\delta_{0})+t(p_{1}+\delta_{1})\\
&=p_{t}+(1-t)\delta_{0}+t\delta_{1},
  \end{align*}
  where $(1-t)\delta_{0}+t\delta_{1}>0$.
\end{proof}

Clearly, Lemma~\ref{lem:move_boundary} generalizes to a directed line $L$ that is not necessarily horizontal, if we interpret ``to the right of'' as ``further in the direction of'' $L$.


\begin{cor} 
  \label{cor:line-side} Consider an $L$-directional morph acting on
points $p$, $r$ and $s$.  If $p$ is to the right of the line through
$rs$ at the beginning and the end of the $L$-directional morph, then
$p$ is to the right of the line through $rs$ throughout the
$L$-directional morph.
\end{cor}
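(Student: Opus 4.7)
The plan is to reduce this two-dimensional sidedness statement to the one-dimensional claim of Lemma~\ref{lem:move_boundary}. After rotating coordinates so that $L$ is horizontal, each of $p$, $r$, and $s$ has a fixed $y$-coordinate during the morph and moves at constant speed in the $x$-direction. Let $L_p$ be the horizontal line through $p$, and, assuming for the moment that $r_y\neq s_y$, let $q(t)$ denote the intersection of the line through $r(t)s(t)$ with $L_p$. On the horizontal line $L_p$, the condition that $p$ lies strictly to the right of the directed line $rs$ translates into $p(t)$ lying on a specific side of $q(t)$, where that side depends only on the sign of $s_y - r_y$ and therefore does not vary with $t$.

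The main step is to verify that $q(t)$ itself moves at uniform speed along $L_p$, so that Lemma~\ref{lem:move_boundary} can be applied to the pair $p(t), q(t)$. Writing $q(t) = (1-\lambda)r(t) + \lambda s(t)$ and imposing $(1-\lambda)r_y + \lambda s_y = p_y$ yields $\lambda = (p_y - r_y)/(s_y - r_y)$, which is independent of $t$ because the $y$-coordinates do not change during an $L$-directional morph. Hence $q(t)$ is a fixed affine combination of two points moving at constant horizontal speed, so $q$ also moves at constant horizontal speed along $L_p$. Since $p$ lies on the specified side of $q$ at $t=0$ and at $t=1$, the directed-line version of Lemma~\ref{lem:move_boundary} stated immediately above the corollary yields the same sidedness throughout $[0,1]$.

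It remains to handle the degenerate case $r_y = s_y$. In that case the line through $rs$ is horizontal for every $t$, and the hypothesis that $p$ is strictly to the right of it at the endpoints forces $p_y \neq r_y$; since $p_y$, $r_y$, and $s_y$ are all preserved by the $L$-directional morph, $p$ stays strictly above or strictly below the horizontal line $y = r_y$ for all $t$, which is exactly the desired conclusion. The only mild obstacle is the orientation bookkeeping when $r$ and $s$ swap relative $x$-positions during the morph, but this is absorbed into the generalized form of Lemma~\ref{lem:move_boundary}, since sidedness along $L_p$ is determined by the sign of $p_x(t) - q_x(t)$, and the lemma applies to that one-dimensional comparison regardless of whether $p_0$ and $p_1$ (or $q_0$ and $q_1$) happen to lie in the same order on $L_p$.
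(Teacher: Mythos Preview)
Your proof is correct and follows essentially the same approach as the paper: take the line through $p$ parallel to $L$, let $q$ be its intersection with the line through $rs$, and apply Lemma~\ref{lem:move_boundary} to $p$ and $q$. You supply details that the paper leaves implicit, namely the verification that $q$ moves at uniform speed (because the affine parameter $\lambda$ depends only on the fixed $y$-coordinates) and the degenerate case where the line through $rs$ is parallel to $L$.
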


To prove Corollary~\ref{cor:line-side}, consider a line $L'$ through
$p$ that is parallel to $L$ and apply Lemma~\ref{lem:move_boundary} to
points $p$ and $q$, where $q$ is the intersection of $L'$ and the line
through $r$ and $s$.

We now present our main result about the relative positions of points
$v_i$ and $v_{i+1}$.

\begin{lemma}
  \label{lemma:p-move} If point $v_i$ lies in sector $S_i$ and point
$v_{i+1}$ lies in sector $S_{i+1}$ and the line segment $v_i v_{i+1}$ is
parallel to $L_i$ then an $L_i$-directional morph from $S_i, v_i$ to
$S_{i+1}, v_{i+1}$ keeps the point in the sector at all times.
\end{lemma}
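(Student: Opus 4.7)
\medskip
\noindent\textbf{Proof proposal for Lemma~\ref{lemma:p-move}.}

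The plan is to decompose the condition ``$v$ lies in sector $S$'' into two pieces: (i) $v$ lies in the disk $D$ of radius $\varepsilon$ centered at $a$, and (ii) $v$ lies on the correct side of each of the two bounding rays/lines of the sector. Because we have arranged that the vertex $a$ is fixed throughout the pseudo-morph, the disk $D$ does not move. Since $v_i,v_{i+1}\in S_i\cup S_{i+1}\subseteq D$ and $D$ is convex, the entire segment $v_iv_{i+1}$ lies in $D$. Hence at every intermediate time $v$ is within distance $\varepsilon$ of $a$, giving~(i).

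For~(ii), the whole point of the one-sided/two-sided case split in the preceding discussion is to identify the right pair of lines that bound the sector during the morph. In the \emph{one-sided case} the sector cannot change type during the $L_i$-directional morph, so being in the sector is equivalent to being inside $D$ and strictly between the two rays $ab$ and $ae$ (if $S_i$ is positive) or between the rays $ab'$ and $ae'$ (if $S_i$ is negative). In the \emph{two-sided case} the sector may switch from positive to negative, but the two full lines $bb'$ and $ee'$ through $a$ remain the boundary, and being in the sector is equivalent to being inside $D$ and on the correct side of each of these lines. Thus in both cases we are left to verify that $v$ stays on the correct side of two lines, each of which is a line through $a$ and one of the moving points $b,e$ (or the mirror points $b',e'$, which also move at uniform speed in direction $L_i$ because $a$ is fixed).

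This is exactly what Corollary~\ref{cor:line-side} delivers: apply it with $p=v$ and with $(r,s)$ equal in turn to $(a,b)$, $(a,e)$, $(a,b')$, or $(a,e')$, whichever pair bounds the current sector. The three points in each application all undergo $L_i$-directional motion ($a$ trivially, with zero displacement), and by hypothesis $v_i$ is on the correct side of the relevant line at time $0$ and $v_{i+1}$ is on the correct side at time $1$; the corollary then guarantees the correct side at every intermediate time. Combining with~(i), $v$ stays in the sector throughout.

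The main subtlety I expect is the two-sided case, where the sector flips orientation. The key insight that makes it work is that the two bounding lines of the sector are the \emph{full} lines through $a,b$ and $a,e$, not just rays, so the ``correct side'' condition is the same for a positive and a negative sector of that geometry; this is precisely the reason the paper phrases the two-sided condition in terms of the lines $bb'$ and $ee'$ rather than the rays. Once that observation is in hand, the argument reduces uniformly in both cases to repeated invocations of Corollary~\ref{cor:line-side}.
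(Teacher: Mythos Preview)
Your proposal is correct and follows essentially the same approach as the paper: decompose membership in the sector into membership in the fixed disk $D$ (handled by convexity) plus sidedness with respect to the two bounding lines, split into the one-sided and two-sided cases, and apply Corollary~\ref{cor:line-side} to each bounding line. The paper's proof is organized identically, with the same observation that in the two-sided case the relevant constraint is sidedness with respect to the full lines $bb'$ and $ee'$ rather than rays.
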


\begin{proof} 
  First consider the one-sided case.  Suppose $S_i$ is a positive
  sector (the case of a negative sector is similar). Observe that $v$ remains in the sector during an $L_i$-directional morph if
  and only if it remains in the disk $D$ and remains between the lines
  $ab$ and $ae$.  See Figure~\ref{fig:one-side-sub}.  Because $v_i$ and
  $v_{i+1}$ both lie in $D$, the line segment between them
  also lies in the disk, and $v$ remains in $D$ throughout the morph.
  In the initial configuration, $v_i$ lies between the lines $ab_i$
  and  $ae_{i}$, and in the final configuration $v_{i+1}$ lies between
  the lines $ab_{i+1} $ and $ae_{i+1}$.  Therefore, by
  Corollary~\ref{cor:line-side}, $v$ remains between the lines
  throughout the $L_i$-directional morph.  Thus $v$ remains inside the
  sector throughout the morph.

  Now consider the two-sided case.  Observe that a point $v$ remains
  in the sector during an $L_i$-directional morph if and only if it
  remains on the same side of the lines $bb'$ and $ee'$.  Note that
  this is true even when the sector changes between positive and
  negative, since the points in $S$ are those in $D$ that are between
  lines $bb'$ and $ee'$.  See Figure cionzo~\ref{fig:two-side-sub}.  As in the
  one-sided case, $v$ remains in the disk throughout the morph.  Also,
  $v$ is on the same side of the lines $b b'$ and $e e'$ in the
  initial and final configurations, and therefore by
  Corollary~\ref{cor:line-side}, $v$ remains on the same side of the
  lines throughout the morph.  Thus $v$ remains inside the sector
  throughout the morph.
\end{proof}

\subsubsection{Nice points}
\label{sec:nice_pts}

In this subsection we prove that there exist points $v_i$ in $S_i$ such that the line segment $v_i v_{i+1}$ is parallel to $L_i$.
We call the possible positions for $v_i$
inside sector $S_i$ the \emph{nice} points, defined formally as
follows:
\begin{itemize}
\item All points in the interior of $S_{k+1}$ are nice.
\item For $1 \leq i\leq k$, a point $v_i$ in the interior of $S_i$
is nice if there is a nice point $v_{i+1}$ in $S_{i+1}$ such that $v_i
v_{i+1}$ is parallel to $L_i$.
\end{itemize}

It suffices to show that all the sets of nice points
are non-empty.  We will in fact characterize the sets. Given a line
$L$, an $L$-\emph{truncation} of a sector $S$ is the intersection of $S$ with
an open slab that is bounded by two lines parallel to $L$ and
contains $a$. Observe that this implies that the open slab will
contain all points of $S$ in a small neighborhood of $a$.  In
particular, an $L$-truncation of a sector is non-empty.
\begin{lemma}
  \label{lemma:nice} The set of nice points in $S_i$ is an
$L_i$-truncation of $S_i$ for $i=1, \ldots, k$.
\end{lemma}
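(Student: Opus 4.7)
The plan is to argue by downward induction on $i$, starting from the observation that, by definition, the set $T_{k+1}$ of nice points in $S_{k+1}$ is the whole interior $\mathrm{int}(S_{k+1})$. At each step I will show that $T_i$ has the form $\mathrm{int}(S_i)\cap \Sigma_i$, where $\Sigma_i$ is an open slab bounded by two lines parallel to $L_i$ and containing $a$ in its interior, which is exactly the $L_i$-truncation condition. The inductive hypothesis will carry along the slightly stronger property that $T_{i+1}$ is open, bounded, convex, and contains a whole neighborhood of $a$ within $\mathrm{int}(S_{i+1})$. In the base case this is immediate since $T_{k+1}=\mathrm{int}(S_{k+1})$, and in the inductive step it follows automatically from the fact that the slab $\Sigma_{i+1}$ contains $a$ in its interior.

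For the inductive step, I would first use the definition of nice points to rewrite $T_i = \mathrm{int}(S_i) \cap \pi^{-1}(\pi(T_{i+1}))$, where $\pi$ denotes orthogonal projection onto the line through $a$ perpendicular to $L_i$. Because $T_{i+1}$ is open, bounded, and convex, $\pi(T_{i+1})$ is an open bounded interval, so $\pi^{-1}(\pi(T_{i+1}))$ is an open slab $\Sigma'$ bounded by two lines parallel to $L_i$. The remaining task is to show that $\Sigma'$, possibly after a slight enlargement, can be taken to contain $a$ in its interior, so that $T_i$ qualifies as a bona fide $L_i$-truncation of $S_i$.

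I would split this into the two-sided and one-sided cases from the previous subsection. In the two-sided case, where $b_i$ and $e_i$ lie on opposite sides of $L_i$, the interior of $S_{i+1}$ contains points strictly on both sides of the translate $L_i^a$ of $L_i$ through $a$; since $T_{i+1}$ contains a whole neighborhood of $a$ within $\mathrm{int}(S_{i+1})$, its projection $\pi(T_{i+1})$ will contain $\pi(a)$ in its interior, and $\Sigma_i := \Sigma'$ already works.

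The main obstacle is the one-sided case, in which both $\mathrm{int}(S_i)$ and $\mathrm{int}(S_{i+1})$ lie strictly on the same side of $L_i^a$. Here $\pi(a)$ will be the boundary endpoint of the interval $\pi(T_{i+1})$ rather than an interior point, so the naive slab $\Sigma'$ has $a$ only on its boundary and fails the definition of a truncating slab. My fix will be to extend $\Sigma'$ across $L_i^a$ by an arbitrary positive amount to obtain a larger open slab $\Sigma_i$ that contains $a$ in its interior; since $\mathrm{int}(S_i)$ contributes no points on the far side of $L_i^a$, the intersection $\mathrm{int}(S_i)\cap \Sigma_i$ is still exactly $T_i$. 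This will complete the inductive step and hence the lemma.
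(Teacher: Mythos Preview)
Your proof is correct and follows essentially the same downward-induction strategy as the paper: both arguments identify $N_i$ as $\mathrm{int}(S_i)$ intersected with the slab swept out by lines parallel to $L_i$ through $N_{i+1}$, and then split into the one-sided and two-sided cases to verify that this slab contains $a$. Your projection formulation $T_i=\mathrm{int}(S_i)\cap\pi^{-1}(\pi(T_{i+1}))$ is just a clean restatement of what the paper calls ``all lines parallel to $L_i$ that go through a point of $N_{i+1}$.'' The one place where you are actually more careful than the paper's text is the one-sided case: you explicitly note that the natural slab $\Sigma'$ has $a$ only on its boundary and must be extended across $L_i^a$ (harmlessly, since $\mathrm{int}(S_i)$ lies entirely on the near side), whereas the paper's prose glosses over this and only reveals the extension in a figure caption (``a parallel line just below $L_i$'').
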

\begin{proof} 
  Let $N_i$ denote the nice points in $S_i$.  The proof is by
  induction as $i$ goes from $k+1$ to 1.  All the points in the interior
  of $S_{k+1}$ are nice.  Suppose by induction that $N_{i+1}$ is an
  $L_{i+1}$-truncation of $S_{i+1}$.

  Consider the one-sided case.  See
  Figure~\ref{fig:transition_case_1_2}.  The slab determining $N_i$
  consists of all lines parallel to $L_i$ that go through a point of
  $N_{i+1}$. 
  The slab is non-empty since $N_{i+1}$ contains all
  of $S_{i+1}$ in a small neighborhood of $a$.  Thus the slab contains
  all points of $S_i$ in a neighborhood of $a$, and thus $N_i$ is an
  $L_{i}$-truncation of $S_i$.

  \begin{figure}[!h] 
    \centering
    \includegraphics{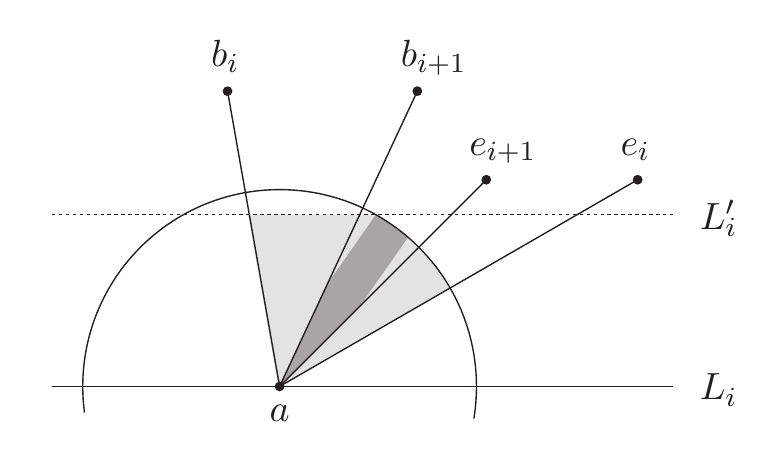}
    \caption{$N_i$ (lightly shaded) is an $L_i$-truncation of $S_i$ in
      the one-sided case.  $N_{i+1}$ is darkly shaded. 
      The slab boundaries for $N_i$ consist of $L'_i$ and a parallel line just below $L_i$. 
    }
    \label{fig:transition_case_1_2}
  \end{figure}
  
  Consider the two-sided case.  See
  Figure~\ref{fig:pie_transition_case_2_1}.  The slab determining
  $N_i$ consists of all lines parallel to $L_i$ that go through a
  point of $N_{i+1}$.  The slab contains $a$ in its interior and thus
  $N_i$ is an $L_{i}$-truncation of $S_i$.
\end{proof}
  \begin{figure}[!ht] 
    \centering 
    \subfigure[]{\includegraphics[scale=.9]{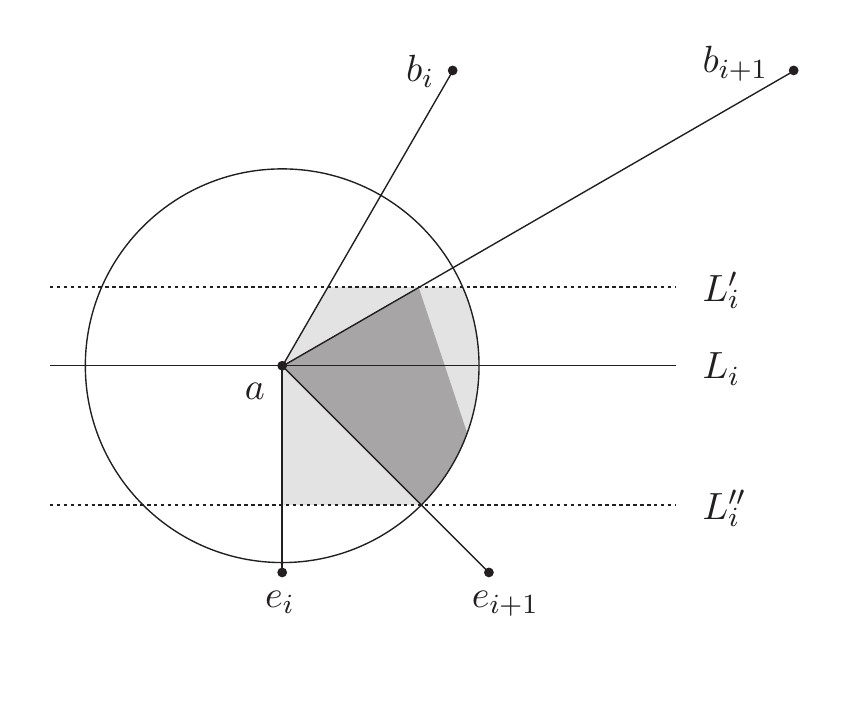}}
    \hspace{40pt} 
    \subfigure[]{\includegraphics[scale=.9]{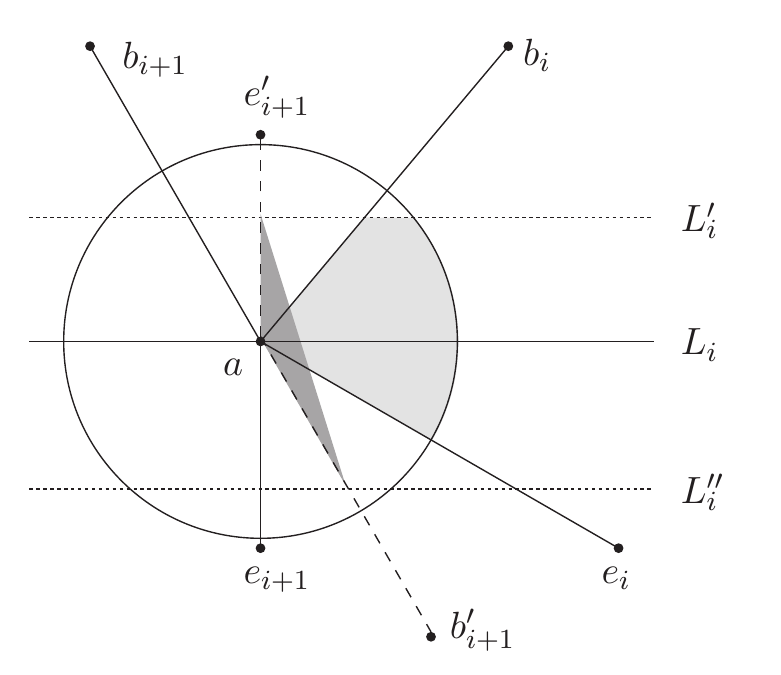}}
    \caption{$N_i$ (lightly shaded) is an $L_i$-truncation of $S_i$ in
      the two-sided case.  $N_{i+1}$ is darkly shaded.  $L'_i$ and
      $L''_i$ are the slab boundaries for $N_i$. 
      }
    \label{fig:pie_transition_case_2_1}
  \end{figure}

Lemma~\ref{lemma:nice} implies in particular that the set of nice
points is non-empty.

The last thing we need to do to complete the proof of Theorem~\ref{thm:pseudomorph} for the case where $v$ has degree 5, is to show that the above procedure to compute the $v_i$'s has a running time of $O(k)$.
It suffices to show how to compute $N_i$ from $N_{i+1}$ in constant time. 
To do this, we just compute the maximum and minimum points of $S_{i+1}$ in the direction perpendicular to $L_i$.  The slab boundaries for $N_i$ go through these points.  Then $N_i$ is the intersection of the slab with sector $S_i$.

\remove{
\rednote{runtime}
Now it suffices to show that it takes constant time to compute the set
of nice points $N_{i}$ in $\Gamma_{i}$, for $0\leq i\leq k$. The
argument we provide is inductive as $i$ goes from $k$ to $0$. Note
that $N_{k}=S_{k}$ which is clearly computable in constant time. Now,
given $N_{i+1}$ we may obtain the points in $N_{i}$ in constant time
since $N_{i}$ is an $L_{i}$-truncation of $S_{i}$. This truncation can
be computed in constant time since we just require to obtain the
intersection of $S_{i}$ with two half-planes defined by lines parallel
to $L_{i}$. Hence we require $O(k)$ time to obtain the sets of nice
points. Therefore the overall time required to obtain a position for
$v$ in the intermediate drawings of $\MM$ is $O(k)$ by just choosing a
point in $N_{i+1}$ and have it follow the $L_{i}$-directional morph to
reach $N_{i}$, for $1\leq i\leq k$. This provides the last ingredient
in the proof of Lemma~\ref{lem:penta}.
}


\remove{
\subsubsection{Running time}\label{sec:runtime}

\rednote{Fix $k$ to $k+1$.}
In this section we obtain an upper bound on the running time of the
algorithm that results from Lemma~\ref{lem:penta}. Suppose
$\MM=\langle \Gamma_{0},\Gamma_{1},\ldots, \Gamma_{k} \rangle$ is a
planar pseudo-morph that contracts vertex $v$ and consists of
unidirectional steps. Let us denote by $M$ the planar morph obtained
from $\MM$. First we discuss how to obtain the radius of the disk $D$
defining sector $S$ in $O(k)$ time. Then we describe how the set of
nice points can be obtained from $S$ throughout the pseudomorph in
$O(k)$ time.

Note that as a first step we must define the disk $D$ centered at $a$
of radius $\varepsilon$ that defines the sector $S$. To achieve this
it suffices to choose $\varepsilon>0$ small enough so that the set of
points in $D$ that can see all the vertices of $P$ defines a
positive-area sector $S$ of $D$. Let
$\varepsilon_{i}=\min_{x\in\{b,c,d,e\}}\{\Vert a-x \Vert\}/2$ during
the morph $\langle\Gamma_{i-1},\Gamma_{i} \rangle$, for
$1\leq i\leq k$. Observe that for each pair of consecutive
intermediate drawings $\Gamma_{i-1}$ and $\Gamma_{i}$,
$\varepsilon_{i}$ can be computed in constant time since we are just
computing the minimum distance from $a$ to the $4$ line segments
$b_{i-1}b_{i}$, $c_{i-1}c_{i}$, $d_{i-1}d_{i}$ and
$e_{i-1}e_{i}$. Finally note that it suffices to take
$\varepsilon=\min_{1\leq i\leq s}\{\varepsilon_{i}\}$ to define
$D$. Thus requiring $O(k)$ time.

Now it suffices to show that it takes constant time to compute the set
of nice points $N_{i}$ in $\Gamma_{i}$, for $0\leq i\leq k$. The
argument we provide is inductive as $i$ goes from $k$ to $0$. Note
that $N_{k}=S_{k}$ which is clearly computable in constant time. Now,
given $N_{i+1}$ we may obtain the points in $N_{i}$ in constant time
since $N_{i}$ is an $L_{i}$-truncation of $S_{i}$. This truncation can
be computed in constant time since we just require to obtain the
intersection of $S_{i}$ with two half-planes defined by lines parallel
to $L_{i}$. Hence we require $O(k)$ time to obtain the sets of nice
points. Therefore the overall time required to obtain a position for
$v$ in the intermediate drawings of $\MM$ is $O(k)$ by just choosing a
point in $N_{i+1}$ and have it follow the $L_{i}$-directional morph to
reach $N_{i}$, for $1\leq i\leq k$. This provides the last ingredient
in the proof of Lemma~\ref{lem:penta}.
}

\remove{
\subsection{The exceptional cases}
\label{sec:exceptional_cases}

In this section we show how to obtain a true planar morph from a
planar pseudo-morph $\MM$ where vertex $v$ of degree at most 2 is
contracted and consists only of unidirectional steps. Before stating
this formally, let us prove a technical lemma which will be useful.

\begin{lemma}\label{lemma:dummy_zs}
  Let $\mm=\langle\Gamma_{1},\ldots,\Gamma_{k}\rangle$ be a planar
  morph of $G$ consisting of unidirectional steps. Suppose
  $\langle\Gamma_{i},\Gamma_{i+1}\rangle$ is an $L_{i}$-directional
  morph for $1\leq i\leq k-1$. If $ab$ is an edge of $G$, then there
  exist points $c_{1},\ldots,c_{k}$ and $\alpha>0$ such that:
  \begin{itemize}
    \item $c_{i}c_{i+1}$ is parallel to $L_{i}$ for $1\leq i\leq k-1$, 
    \item the size of the clockwise angle $c_{i}a_{i}b_{i}$ is $\alpha$ for
      $1\leq i\leq k$
    \item apart from $ab$ no other edge incident to $a$ is in the
      sector determined by $cab$ throughout $\mm$.
  \end{itemize}
\end{lemma}
\begin{proof}
  We begin by defining $\alpha$. Let $\alpha_{i}$ and $\alpha_{i}'$ be
  the angle formed by $a_{i}b_{i}$ and $L_{i}$ and $a_{i+1}b_{i+1}$
  and $L_{i}$ respectively, $1\leq i \leq k-1$. If $\Deg(a)\geq 2$ let
  $d$ be the neighbor of $a$ that follows $b$ in clockwise order and
  let $\beta_{0}$ be the minimum size of the angle $dab$ during $\mm$,
  otherwise let $\beta_{0}$ be $\pi$. Among the angles $\alpha_{i}$
  and $\alpha_{i}'$ that are positive, $1\leq i\leq k-1$, let
  $\alpha_{0}$ and $\alpha_{0}'$ denote the respective minima. We take
  $\alpha$ to be the half the minimum of $\alpha_{0}$, $\alpha_{0}'$
  and $\beta_{0}$, that is
  $\alpha=1/2\min\{\alpha_{0},\alpha_{0}',\beta_{0}\}$.

  We now define the points $c_{1},\ldots,c_{k}$. We will assume that
  $a_{i}$ is fixed by the $L_{i}$-directional morph. Let $c_{1}$ be
  any point so that $c_{1}a_{1}b_{1}$ define a clockwise angle of size
  $\alpha$. Suppose we have defined the point $c_{i}$, let us show how
  to define point $c_{i+1}$. If $L_{i}$ is parallel to $a_{i}b_{i}$
  then we let $c_{i+1}$ be $c_{i}$. Clearly $c_{i}c_{i+1}$ is parallel
  to $L_{i}$ and the angle $c_{i+1}a_{i+1}b_{i+1}$ remains of size
  $\alpha$. Now, suppose $L_{i}$ is not parallel to
  $a_{i}b_{i}$. Since $\alpha<\alpha_{i}$ it follows that both $b_{i}$
  and $c_{i}$ are on the same half-plane of $L_{i}$. Let $L_{i}'$ be
  the line parallel to $L_{i}$ through $c_{i}$ and let $L$ be the line
  through $a$ forming an angle of $\alpha$ with $a_{i+1}b_{i+1}$. The
  fact that $\alpha<\alpha_{i}'$ guarantees that the intersection of
  $L$ and $L_{i}'$ remains on the same side of the segment
  $ab$. Observe that $L$ and $L_{i}'$ are not parallel. The point of
  intersection of $L$ and $L_{i}'$ is in the same half-plane of
  $L_{i}$ as $c_{i}$ since $L_{i}'$ is contained in that
  half-plane. See Figure~\ref{fig:new_dummy}. We define $c_{i+1}$ to
  be the point of intersection of $L_{i}'$ and $L$. Now we clearly
  have that $c_{i}c_{i+1}$ is parallel to $L_{i}$ and that the size of
  the angle $c_{i+1}a_{i+1}b_{i+1}$ is $\alpha$.
  \begin{figure}[h]
    \centering

\begin{tikzpicture}[
  _vertex/.style ={circle,draw=black, fill=black,inner sep=1pt},
__vertex/.style ={circle,draw=black, fill=white,inner sep=1pt},
c_vertex/.style ={circle, double=white, draw=black, fill=black,inner sep=1pt},
gr_vertex/.style ={circle,draw=gray, fill=gray,inner sep=1pt},
  r_vertex/.style={circle,draw=red,  fill=red,inner sep=1pt},
  g_vertex/.style={circle,draw=green,fill=green,inner sep=1pt},
  b_vertex/.style={circle,draw=blue, fill=blue,inner sep=1pt},
  _edge/.style={black,line width=0.5pt},
  gr_edge/.style={gray,line width=0.3pt},
  r_edge/.style={red,line width=0.3pt},
  g_edge/.style={green,line width=0.3pt},
  b_edge/.style={blue,line width=0.3pt},
 every edge/.style={draw=black,line width=0.3pt}]

 \begin{scope}
   \coordinate (x) at (0,0); 
   \coordinate (yi) at (150:20mm);
   \coordinate (yii) at (45:14.14mm);
   \coordinate (zi) at (140:12mm);

	\coordinate(zii) at(35:13.44mm);

	\fill[black!60] (x)--(140:8mm) arc [start angle=140,end angle=150,radius=8mm]--(x);

	\fill[black!60] (x)--(35:8mm) arc [start angle=35,end angle=45,radius=8mm]--(x);

	\draw[gr_edge,dashed] ($-1*(zii)$) -- ($2*(zii)$) node[label={[black]east:$L$}] {};
	\draw[gr_edge,dashed] ($(zi) - (15mm,0)$) -- ($(zi) + (35mm,0)$) node[label={[black]east:$L_i'$}] {};
	\draw (-28mm,0) -- (30mm,0) node[label={[black]east:$L_i$}] {};

   \draw[_edge] (x)--(yi);
   \draw[_edge] (x)--(zi);
   \draw[gr_edge,dashed] (x)--(yii);

   \node[_vertex,label=south:$a$] (X) at (x) {};
   \node[__vertex,label=north:$b_i$] (Y) at (yi) {};

   \node[__vertex,label=north east:$c_i$] (Z) at (zi) {};

   \node[gr_vertex,label={[gray]north:$b_{i+1}$}] (Y) at (yii) {};
   \node[gr_vertex,label={[gray]south east:$c_{i+1}$}] (Z) at (zii) {};

   \node (alabel) at (270:15mm) {(a)};
 \end{scope}

\def\brad{10}
\def\bsang{185}
\def\bfang{115}

 \begin{scope}[xshift=75mm]

   \coordinate (x) at (0,0); 
   \coordinate (yi) at (150:20mm);
   \coordinate (yii) at (45:14.14mm);
   \coordinate (zi) at (140:12mm);

	\coordinate(zii) at(35:13.44mm);

	\fill[black!60] (x)--(35:8mm) arc [start angle=35,end angle=45,radius=8mm]--(x);

	\draw[gr_edge,dashed] ($-1*(zii)$) -- ($2*(zii)$) node[label={[black]east:$L$}] {};
	\draw[gr_edge,dashed] ($(zi) - (15mm,0)$) -- ($(zi) + (35mm,0)$) node[label={[black]east:$L_i'$}] {};
	\draw (-28mm,0) -- (30mm,0) node[label={[black]east:$L_i$}] {};

	\draw[gr_edge,dashed] (yi) -- (yii);
   \draw[_edge] (x)--(yii);
   \draw[_edge] (x)--(zii);

   \node[_vertex,label=south:$a$] (X) at (x) {};
   \node[gr_vertex,label={[gray]north:$b_i$}] (Y) at (yi) {};
   \node[gr_vertex,label={[gray]north east:$c_i$}] (Z) at (zi) {};

   \node[__vertex,label=north:$b_{i+1}$] (Y) at (yii) {};

   \node[__vertex,label=south east:$c_{i+1}$] (Z) at (zii) {};

   \node (alabel) at (270:15mm) {(b)};
 \end{scope}

\end{tikzpicture}
    \caption{We choose $c_{i+1}$ to be the intersection of $L$ and $L_{i}'$.}
    \label{fig:new_dummy}
  \end{figure}

  To conclude the proof, note that the sector defined by $cab$ remains
  free of other edges incident to $a$ since $\mm$ is planar and the
  angle $\alpha$ is smaller than $\beta_{0}$.
\end{proof}

We now state the main result of this section. For the proof we will
consider several cases, depending on the vertices involved in the
contraction. Lemma~\ref{lemma:dummy_zs} will be useful in two of these
cases.

\begin{lemma}\label{lem:low_degree}
  Let $\MM$ be a planar pseudo-morph from $\Gamma$ to $\Gamma'$ that
  contracts vertex $v$ and consists of unidirectional steps. If
  $\Deg(v)\in\{1,2\}$, then we can obtain a planar morph $M$ from
  $\Gamma$ to $\Gamma'$ consisting of the same number of
  unidirectional steps.
\end{lemma}
\begin{proof}
  Let us consider different cases, depending on the degree of vertex
  $v$ and the degree of the vertex $a$ onto which $v$ is
  contracted. We use $\mm=\langle\Gamma_{1},\ldots,\Gamma_{k}\rangle$
  to denote the morph of the reduced graph in $\MM$. 



\begin{enumerate}
  \item \textbf{[$\deg(v)=1$]} We consider three further subcases,
    depending on the degree of $a$.
    
    \begin{enumerate}

    \item \textbf{[$\deg(a)=1$]} Since the graph is connected, the
      pseudo-morph $\mm$ amounts to a single translation of vertex
      $a$. Thus the morph $\MM$ consists of $3$ steps. It can be seen
      that we can obtain a true morph in $3$ steps that satisfies the
      required conditions.
    
  \item\label{item:deg1-deg2-case} \textbf{[$\deg(a)=2$]} Let $b$ be
    the second neighbor of $a$. In this case we place $v$ close
    enough to $a$ along the extension of the segment $ab$. Formally
    speaking, let $N$ denote the maximum distance between $a$ and $b$
    during $\mm$ and choose $\delta$ so that the disk centered at $a$
    and radius $\delta$ only contains $a$ and its only incident edge
    during $\mm$. Then we may place $v$ at
    $a+\delta/(2N)(a-b)$. The fact that $v$ will follow $a$ and $b$
    unidirectionally follows from an argument similar to the one used
    to prove Lemma~\ref{lemma:convex-comb}.

  \item \textbf{[$\deg(a)\geq 3$]} Let $b$ be the neighbor of $a$
    that precedes $v$ in clockwise order around $a$. Similarly, let
    $e$ be the neighbor of $a$ that succeeds $v$ in clockwise order
    around $x$. The idea to handle this case will be similar to the
    one followed to handle vertices of degree 5 in
    Section~\ref{sec:penta}. We will consider a small disk $D$ around
    $a$ so that we  place $v$ in the sector of $D$ between the lines
    through $ab$ and $ae$. See Figure~\ref{fig:deg1-deg3-case}.

    Let $\varepsilon>0$ be small enough so that the disk $D$ centered
    at $a$ only contains $a$ and edges incident to $a$ throughout
    $\mm$. As in Section~\ref{sec:sectors} we let $S$ be the positive
    sector in $D$ defined by $ab$ and $ae$ whenever the angle $eab$ is
    convex; and let $S$ be the negative sector in $D$ defined by the
    extensions of $ab$ and $ae$ whenever the angle $eab$ is
    reflex. Note that we may place $v$ at any nice point inside $S$
    throughout $\mm$ to obtain a true morph of $G$. Thus by
    Lemma~\ref{lemma:nice} it follows that such planar morph from
    $\Gamma$ to $\Gamma'$ that has the same number of steps as $\MM$
    exists.
    \begin{figure}[h]
      \centering
\begin{tikzpicture}[
  _vertex/.style ={circle,draw=black, fill=black,inner sep=1pt},
__vertex/.style ={circle,draw=black, fill=white,inner sep=1pt},
c_vertex/.style ={circle, double=white, draw=black, fill=black,inner sep=1pt},
gr_vertex/.style ={circle,draw=gray, fill=gray,inner sep=1pt},
  r_vertex/.style={circle,draw=red,  fill=red,inner sep=1pt},
  g_vertex/.style={circle,draw=green,fill=green,inner sep=1pt},
  b_vertex/.style={circle,draw=blue, fill=blue,inner sep=1pt},
  _edge/.style={black,line width=0.5pt},
  gr_edge/.style={gray,line width=0.3pt},
  r_edge/.style={red,line width=0.3pt},
  g_edge/.style={green,line width=0.3pt},
  b_edge/.style={blue,line width=0.3pt},
 every edge/.style={draw=black,line width=0.3pt}]

 \begin{scope}

   \coordinate (x) at (0,0); 
   \coordinate (w) at (170:13mm);
   \coordinate (v) at (0:10mm);
   \coordinate (y) at (45:13mm);

   \draw[_edge] (x)--(w); 
   \draw[_edge] (x)--(y);
   \draw[_edge] (x)--(v);

   \node[_vertex,label=north:$a$] (X) at (x) {};
   \node[__vertex,label=south:$e$] (W) at (w) {};
   \node[__vertex,label=west:$b$] (y) at (y) {};
   \node[__vertex,label=east:$v$] (V) at (v) {};

   \node (alabel) at (270:15mm) {(a)};
 \end{scope}

 \begin{scope}[xshift=4cm]

   \coordinate (x) at (0,0); 
   \coordinate (w) at (170:13mm);
   \coordinate (vp) at (0:10mm);
   \coordinate (v) at (270:3mm);
   \coordinate (y) at (45:13mm);

   \coordinate (my) at ($-1*(w)$);
   \coordinate (mz) at ($-1*(y)$);

   \fill[gray!20] (0,0) -- ($0.5*(mz)$) arc [start angle=225, end angle=350, radius=6.5mm];

   \draw[gray] (mz) -- ($1.2*(y)$);
   \draw[gray] (my) -- ($1.2*(w)$);

   \draw[_edge] (x)--(w); 
   \draw[_edge] (x)--(y);
   \draw[_edge] (x)--(v);
    \draw[gr_edge,dashed] (v)--(vp);   

   \node[_vertex,label=north:$a$] (X) at (x) {};
   \node[__vertex,label=south:$e$] (W) at (w) {};
   \node[__vertex,label=west:$b$] (Y) at (y) {};
   \node[__vertex,label=south:$v$] (V) at (v) {};
   \node[gr_vertex,label=east:$\color{gray}v$] (V) at (vp) {};

   \node (alabel) at (270:15mm) {(b)};
 \end{scope}

\end{tikzpicture}
      \caption{The case where $\deg(v)=1$ and $\deg(a)\geq 3$.}
      \label{fig:deg1-deg3-case}
    \end{figure}

  \end{enumerate}

\item \textbf{$\deg(v)=2$} Let $b$ denote the second neighbor of
  $v$. Let us consider two possible subcases.
  \begin{enumerate}
  \item \textbf{[$ab\not\in E(G)$]} Here we let $v$ lie at a fixed
    convex combination of $a$ and $b$ throughout the morph, say
    $v=(1/2)a+(1/2)b$. See Figure~\ref{fig:deg2-no-edge}. It follows
    from Lemma~\ref{lemma:convex-comb} that $v$ will move
    unidirectionally since $\mm$ consists of unidirectional steps.
    \begin{figure}[h]
      \centering
      \begin{tikzpicture}[
  _vertex/.style ={circle,draw=black, fill=black,inner sep=1pt},
__vertex/.style ={circle,draw=black, fill=white,inner sep=1pt},
c_vertex/.style ={circle, double=white, draw=black, fill=black,inner sep=1pt},
gr_vertex/.style ={circle,draw=gray, fill=gray,inner sep=1pt},
  r_vertex/.style={circle,draw=red,  fill=red,inner sep=1pt},
  g_vertex/.style={circle,draw=green,fill=green,inner sep=1pt},
  b_vertex/.style={circle,draw=blue, fill=blue,inner sep=1pt},
  _edge/.style={black,line width=0.5pt},
  gr_edge/.style={gray,line width=0.3pt},
  r_edge/.style={red,line width=0.3pt},
  g_edge/.style={green,line width=0.3pt},
  b_edge/.style={blue,line width=0.3pt},
 every edge/.style={draw=black,line width=0.3pt}]

 \begin{scope}
   \coordinate (x) at (190:10mm); 
   \coordinate (y) at (10:10mm);
   \coordinate (v) at (110:8mm);

   \draw[_edge] (x)--(v); 
   \draw[_edge] (v)--(y);

   \node[_vertex,label=south:$a$] (X) at (x) {};
   \node[__vertex,label=south:$b$] (Y) at (y) {};
   \node[__vertex,label=south:$v$] (V) at (v) {};

   \node (alabel) at (270:15mm) {(a)};
 \end{scope}

 \begin{scope}[xshift=4cm]

   \coordinate (x) at (190:10mm); 
   \coordinate (y) at (10:10mm);
   \coordinate (v) at ($0.5*(x)+0.5*(y)$);
   \coordinate (vp) at (110:8mm);

   \draw[_edge] (x)--(v); 
   \draw[_edge] (v)--(y);
   \draw[gr_edge,dashed] (v)--(vp);   

   \node[_vertex,label=south:$a$] (X) at (x) {};
   \node[__vertex,label=south:$b$] (Y) at (y) {};
   \node[__vertex,label=south:$v$] (V) at (v) {};
   \node[gr_vertex,label=east:$\color{gray}v$] (V) at (vp) {};

   \node (alabel) at (270:15mm) {(b)};
 \end{scope}

\end{tikzpicture}
      \caption{The case where $\deg(v)=2$ and $ab\not\in E(G)$.}
      \label{fig:deg2-no-edge}
    \end{figure}

  \item \textbf{[$ab\in E(G)$]} Our aim in this case will be to place
    $v$ close to $a$ along a line segment which forms a constant small
    angle with edge $ab$. Without loss of generality we may assume
    that $b$ precedes $v$ in clockwise order around $a$. Let $D$ be a
    disk centered at $a$ with radius $\varepsilon>0$. We choose
    $\varepsilon$ as follows. Take $\varepsilon>0$ small enough so
    that
    \begin{itemize}
    \item during $\mm$, $D$ only contains $a$ and the edges incident to
      $a$, excluding the other endpoints, and
    \item for all neighbors $p$ of $b$ other than $a$, the ray
      $\vec{bp}$ does not intersect $D$.
    \end{itemize}

    By Lemma~\ref{lemma:dummy_zs} there exist points
    $c_{1},\ldots,c_{k}$ and $\alpha>0$ so that    
    \begin{itemize}
    \item $c_{i}c_{i+1}$ is parallel to $L_{i}$, where $L_{i}$ denotes
      the direction of the unidirectional morph
      $\langle\Gamma_{i},\Gamma_{i+1}\rangle$,
    \item the angle $c_{i}a_{i}b_{i}$ has size $\alpha$, and
    \item no other edge incident to $a$ is in the clockwise sector of
      $D$ from $ac$ to $ab$ throughout $\mm$.
    \end{itemize}

    We may observe that in particular the existence of such points
    imply that the $i$-th unidirectional morph in $\mm$ acts
    unidirectionally on the point $c_{i}$, thus we may speak of a
    point $c$ moving throughout $\mm$. It only remains to place $v$
    close enough to $a$ so that $av$ remains inside $D$. We choose to
    place $v$ at a fixed convex combination of $ac$. Let $N$ denote
    the maximum distance between $a$ and $c$. We place $v$ at
    $a+\varepsilon/(2N)(c-a)$. See Figure~\ref{fig:deg2-edge}. Thus the degenerate version of
    Lemma~\ref{lemma:convex-comb} where the triangle collapses to a
    line segment implies the result.

    \begin{figure}[h]
      \centering

\begin{tikzpicture}[
  _vertex/.style ={circle,draw=black, fill=black,inner sep=1pt},
__vertex/.style ={circle,draw=black, fill=white,inner sep=1pt},
c_vertex/.style ={circle, double=white, draw=black, fill=black,inner sep=1pt},
gr_vertex/.style ={circle,draw=gray, fill=gray,inner sep=1pt},
  r_vertex/.style={circle,draw=red,  fill=red,inner sep=1pt},
  g_vertex/.style={circle,draw=green,fill=green,inner sep=1pt},
  b_vertex/.style={circle,draw=blue, fill=blue,inner sep=1pt},
  _edge/.style={black,line width=0.5pt},
  gr_edge/.style={gray,line width=0.3pt},
  r_edge/.style={red,line width=0.3pt},
  g_edge/.style={green,line width=0.3pt},
  b_edge/.style={blue,line width=0.3pt},
 every edge/.style={draw=black,line width=0.3pt}]

 \begin{scope}
   \coordinate (x) at (0,0); 
   \coordinate (y) at (170:18mm);
   \coordinate (v) at (0:10mm);

   \draw[_edge] (x)--(y);
   \draw[_edge] (x)--(v);
   \draw[_edge] (v)--(y);

   \node[_vertex,label=south west:$a$] (X) at (x) {};
   \node[__vertex,label=south:$b$] (Y) at (y) {};
   \node[__vertex,label=south east:$v$] (V) at (v) {};

   \node (alabel) at (270:15mm) {(a)};
 \end{scope}

\def\brad{10}
\def\bsang{185}
\def\bfang{115}

 \begin{scope}[xshift=4cm]
   \coordinate (x) at (0,0); 
   \coordinate (y) at (170:18mm);
   \coordinate (v) at (70:3mm);
   \coordinate (vp) at (0:10mm);

   \coordinate (li) at (70:10mm);

   \draw[black,thick,left to reversed-right to
   reversed,postaction={decorate,decoration={text along path,
       raise=9pt,text={ {$\alpha$}{} },text color=black, text
       align={align=center}}}] (170:11mm) arc [start angle=170, end angle=70, radius=11mm]; 

   \fill[gray!20] (0,0) --(li) arc [start angle=70, end angle=170, radius=10mm] --(0,0); 

   \draw[gray] ($-0.5*(y)$)--($1.5*(y)$);

   \draw[_edge] (x)--(y); 
   \draw[_edge] (x)--(v);
   \draw[_edge] (v)--(y);
   \draw[gr_edge,dashed] (v)--(vp);

   \draw[gray] ($-0.6*(li)$)--($2*(li)$) node[shape=circle,fill=black,inner sep=1pt,label={east:\textcolor{black}{$c$}}] {};

   \node[_vertex,label=south west:$a$] (X) at (x) {};
   \node[__vertex,label=south:$b$] (Y) at (y) {};
   \node[__vertex,label=north:$v$] (V) at (v) {};
   \node[gr_vertex,label=south east:$\color{gray}v$] (V) at (vp) {};

   \begin{scope}[scale=2]

   \end{scope}

   \node (alabel) at (270:15mm) {(b)};
 \end{scope}

\end{tikzpicture}
      \caption{The case where $\deg(v)=2$ and $ab\in E(G)$.}
      \label{fig:deg2-edge}
    \end{figure}

  \end{enumerate}
  \end{enumerate}
  
\end{proof}
}

\section{A Linear Lower Bound}\label{se:lowerbound}

In this section we prove Theorem~\ref{th:lb-bound}, that 
there exist two straight-line planar drawings of an $n$-vertex path such that any planar morph between them consists of $\Omega(n)$ linear morphs.

Specifically we will describe two straight-line planar drawings $\Gamma$ and $\Lambda$ of an $n$-vertex path $P=(v_1,\dots,v_n)$, and prove that any straight-line planarity preserving morph $M$ between $\Gamma$ and $\Lambda$ requires $\Omega(n)$ morphing steps. In order to simplify the description, we consider each edge $e_i=(v_i,v_{i+1})$ as oriented from $v_i$ to $v_{i+1}$, for $i=1,\dots,n-1$.

Drawing $\Gamma$ (see Figure~\ref{fig:lb-path}) is such that all the vertices of $P$ lie on a horizontal straight line with $v_i$ to the left of $v_{i+1}$, for each $i = 1, \dots, n-1$.

Drawing $\Lambda$ (see Figure~\ref{fig:lb-spiral}) is such that: 

\begin{itemize}
\item for each $i = 1, \dots, n-1$ with $i \textrm{ mod } 3 \equiv 1$, $e_i$ is horizontal with $v_i$ to the left of $v_{i+1}$; 
\item for each $i = 1, \dots, n-1$ with $i \textrm{ mod } 3 \equiv 2$, $e_i$ is parallel to line $y=\tan(\frac{2\pi}{3})x$ with $v_i$ to the right of $v_{i+1}$; and 
\item for each $i = 1, \dots, n-1$ with $i \textrm{ mod } 3 \equiv 0$, $e_i$ is parallel to line $y=\tan(-\frac{2\pi}{3})x$ with $v_i$ to the right of $v_{i+1}$.
\end{itemize}

\begin{figure}[htb]
\centering
\hfill
\subfigure[]{\includegraphics[width= 0.4 \textwidth]{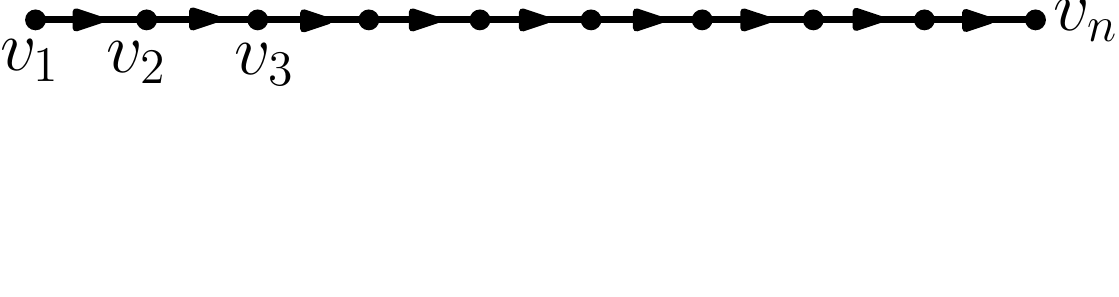}\label{fig:lb-path}}
\hfill
\subfigure[]{\includegraphics[width= 0.3 \textwidth]{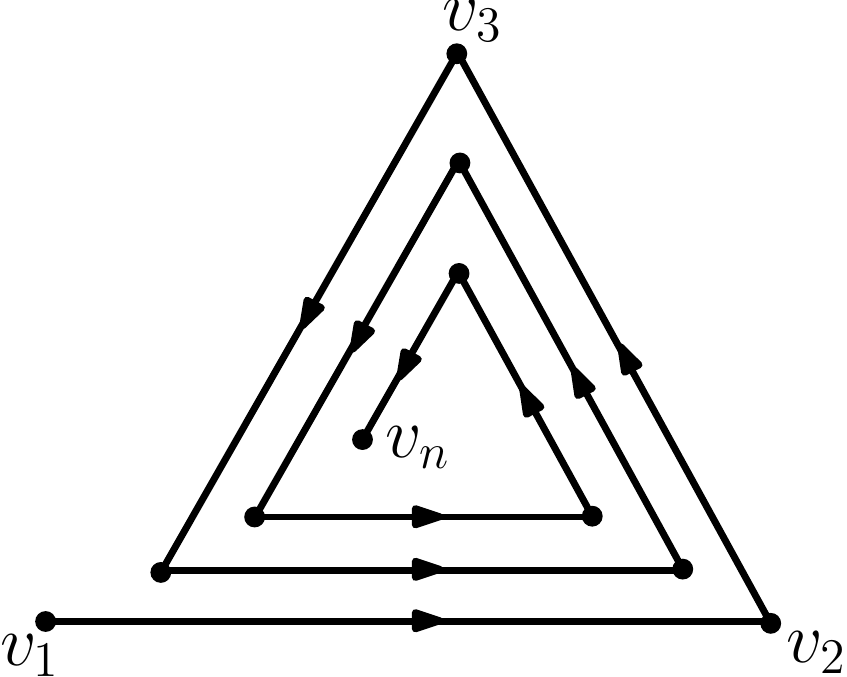}\label{fig:lb-spiral}}
\hfill
\caption{Drawings $\Gamma$ (a) and $\Lambda$ (b).}
\label{fig:lb-drawings}
\end{figure}

Let $M=\morph{\Gamma=\Gamma_1,\dots,\Gamma_x=\Lambda}$ be any planar morph transforming $\Gamma$ into $\Lambda$.

For $i=1,\dots,n$ and $j=1,\dots,x$, we denote by $v_i^j$ the point where vertex $v_i$ is placed in $\Gamma_j$ and by $e_i^j$ the directed straight-line segment representing edge $e_i$ in $\Gamma_j$.

For $1\leq j\leq x-1$, we define the \emph{rotation} $\rho_i^j$ of $e_i$ around $v_i$ during the morphing step $\langle \Gamma_{j},\Gamma_{j+1} \rangle$ as follows (see Figure~\ref{fig:lb-rotation}). Translate $e_i$ at any time instant of $\langle \Gamma_{j},\Gamma_{j+1} \rangle$ so that $v_i$ stays still during the entire morphing step. After this translation, the morph between $e_i^{j}$ and $e_i^{j+1}$ is a rotation of $e_i$ around $v_i$ (where $e_i$ might vary its length during $\langle \Gamma_{j},\Gamma_{j+1} \rangle$) spanning an angle $\rho_i^j$. We assume $\rho_i^j>0$ if the rotation is counter-clockwise and $\rho_i^j<0$ otherwise. We have the following.


\begin{figure}[htb]
\centering
\hfill
\subfigure[\label{fig:lb-rot-a}]{\includegraphics[width= 0.5 \textwidth]{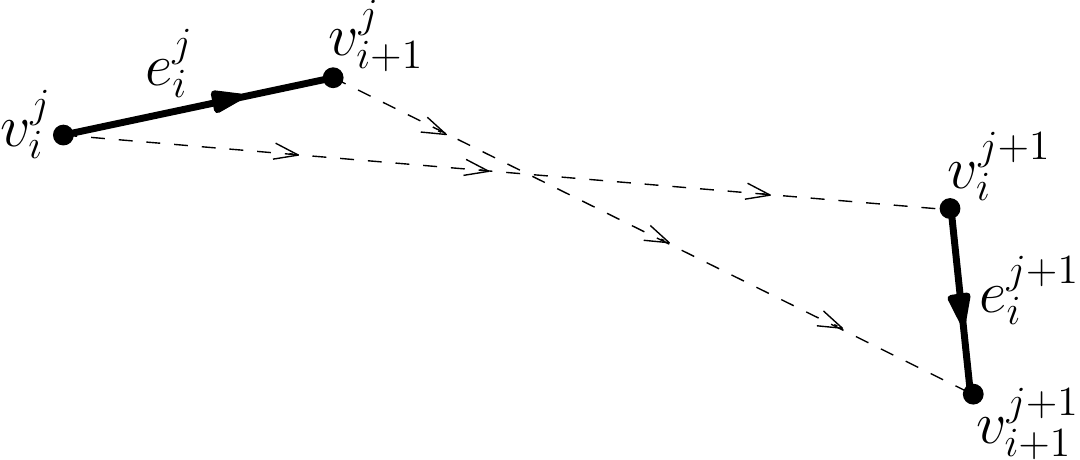}\label{fig:lb-rotation-a}}
\hfill
\subfigure[\label{fig:lb-rot-b}]{\includegraphics[width= 0.27 \textwidth]{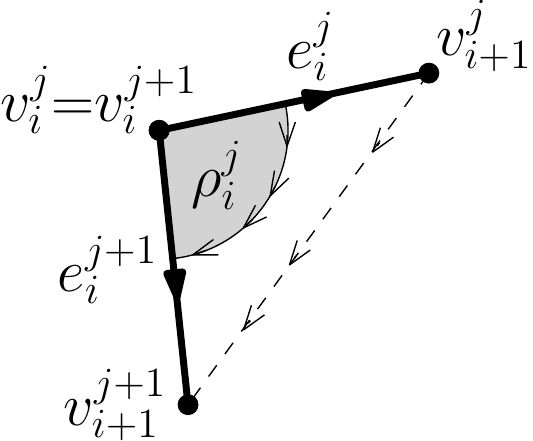}\label{fig:lb-rotation-c}}
\hfill
\caption{\subref{fig:lb-rot-a} Morph between $e_i^{j}$ and $e_i^{j+1}$. \subref{fig:lb-rot-b} Translation of the positions of $e_i$ during $\langle \Gamma_{j},\Gamma_{j+1} \rangle$, resulting in $e_i$ spanning an angle $\rho_i^j$ around $v_i$.}
\label{fig:lb-rotation}
\end{figure}

\begin{lemma}\label{le:lb-rotation}
For each $j=1,\dots,x-1$ and $i=1,\dots,n-1$, we have $|\rho_i^j| < \pi$.
\end{lemma}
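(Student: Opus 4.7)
The plan is to reduce the statement to a basic fact about how a straight-line trajectory is seen from an outside point. After translating coordinates so that $v_i$ is held fixed throughout the morphing step $\langle \Gamma_j,\Gamma_{j+1}\rangle$ (exactly the translation used in the definition of $\rho_i^j$), the other endpoint $v_{i+1}$ moves at uniform speed along a straight-line segment $\sigma$ from some point $p_0=v_{i+1}^j$ to some point $p_1=v_{i+1}^{j+1}$, and the rotation angle $\rho_i^j$ is precisely the signed angle swept by the ray from the origin $v_i$ to $v_{i+1}$ during this motion.

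The first step will be to observe that $\sigma$ cannot pass through the origin. Indeed, if $v_{i+1}$ passed through $v_i$ at some intermediate time, then the two vertices would coincide, contradicting the assumption that $M$ is a (straight-line planarity preserving) morph. Hence the origin lies strictly off the line supporting $\sigma$, or it lies on that line but on the far side of both $p_0$ and $p_1$.

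The second step will be to compute the swept angle. Parametrize $v_{i+1}(t) = (1-t)p_0 + tp_1$ for $t\in[0,1]$. Since $\sigma$ does not contain the origin, the origin lies strictly in one of the two open half-planes bounded by the line through $\sigma$ (or lies on that line but outside $\sigma$). In the first case the ray from the origin to $v_{i+1}(t)$ rotates monotonically and the total unsigned angular displacement equals the (undirected) angle $\angle p_0 O p_1$, which is strictly smaller than $\pi$ because $p_0, O, p_1$ are not collinear. In the second case (collinear but origin outside $\sigma$) the ray's direction is constant, so $\rho_i^j=0$. In either case $|\rho_i^j|<\pi$.

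The main potential subtlety will be confirming that the signed angle swept is the "short way around'' (i.e., bounded by $\angle p_0 O p_1$) rather than accumulating a full turn; this is handled by the monotonicity argument, which in turn follows from the fact that the half-plane determined by the line through $\sigma$ containing the origin is fixed, so the ray cannot sweep across the opposite direction. Once this is in place, the inequality $|\rho_i^j|<\pi$ is immediate.
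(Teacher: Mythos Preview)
Your proof is correct but takes a genuinely different route from the paper's. The paper argues by contradiction in the \emph{untranslated} frame: assuming $|\rho_i^j|\ge\pi$, it picks the first time $t_\pi$ at which the rotation reaches exactly $\pi$, observes that $e_i(t_\pi)$ is parallel to $e_i(0)$ with opposite orientation, and then uses a similar-triangles computation on the two straight trajectories $v_i(0)v_i(t_\pi)$ and $v_{i+1}(0)v_{i+1}(t_\pi)$ to exhibit an explicit time $t_r$ at which $v_i$ and $v_{i+1}$ coincide. You instead work directly in the translated frame and reduce everything to the classical fact that a point moving uniformly along a line segment subtends a monotone angle at any exterior point, hence an angle strictly less than $\pi$. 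Your argument is shorter and more conceptual (no intersection time needs to be computed), and it makes the role of the planarity hypothesis completely transparent: it is used once, to rule out the origin lying on~$\sigma$. The paper's argument, on the other hand, yields the explicit crossing time and so is a bit more constructive, which can be convenient if one later wants quantitative control. One small wording issue: your phrase ``on the far side of both $p_0$ and $p_1$'' should simply say ``outside the closed segment $[p_0,p_1]$''; with that fix the case analysis (including the degenerate case $p_0=p_1$) is complete.
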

\begin{proof}
Assume, for a contradiction, that $|\rho_i^j| \geq \pi$, for some $1\leq j\leq x-1$ and $1\leq i\leq n-1$. Also assume, w.l.o.g., that the morphing step $\langle \Gamma_{j},\Gamma_{j+1} \rangle$ happens between time instants $t=0$ and $t=1$. 

We introduce some notation. For any $0\leq t\leq 1$, we denote: 

\begin{itemize}
\item by $v_i(t)$ the position of $v_i$ at time instant $t$ -- note that $v_i(0)=v_i^j$ and $v_i(1)=v_i^{j+1}$;
\item by $v_{i+1}(t)$ the position of $v_{i+1}$ at time instant $t$ -- note that $v_{i+1}(0)=v_{i+1}^j$ and $v_{i+1}(1)=v_{i+1}^{j+1}$;
\item by $e_i(t)$ the drawing of $e_i$ at time instant $t$ -- note that $e_i(0)=e_i^j$ and $e_i(1)=e_i^{j+1}$; and
\item by $\rho_i^j(t)$ the rotation of $e_i$ around $v_i$ during the linear morph transforming $e_i(0)$ in $e_i(t)$ -- note that $\rho_i^j(0)=0$, and $\rho_i^j(1)=\rho_i^j$.
\end{itemize}

Since a morph is a continuous transformation and since $|\rho_i^j| \geq \pi$, it follows that there exists a time instant $t_{\pi}$ with $0< t_{\pi}\leq 1$ such that $|\rho_i^j(t_{\pi})|=\pi$. We prove that there exists a time instant $t_r$ with $0< t_r\leq t_{\pi}$ in which $v_i(t_r)$ and $v_{i+1}(t_r)$ coincide, thus contradicting the assumption that morph $\langle \Gamma_{j},\Gamma_{j+1} \rangle$ is planar. Refer to Figure~\ref{fig:proof-lb-rotation}. 

\begin{figure}[htb]
\centering
\includegraphics[scale=0.75]{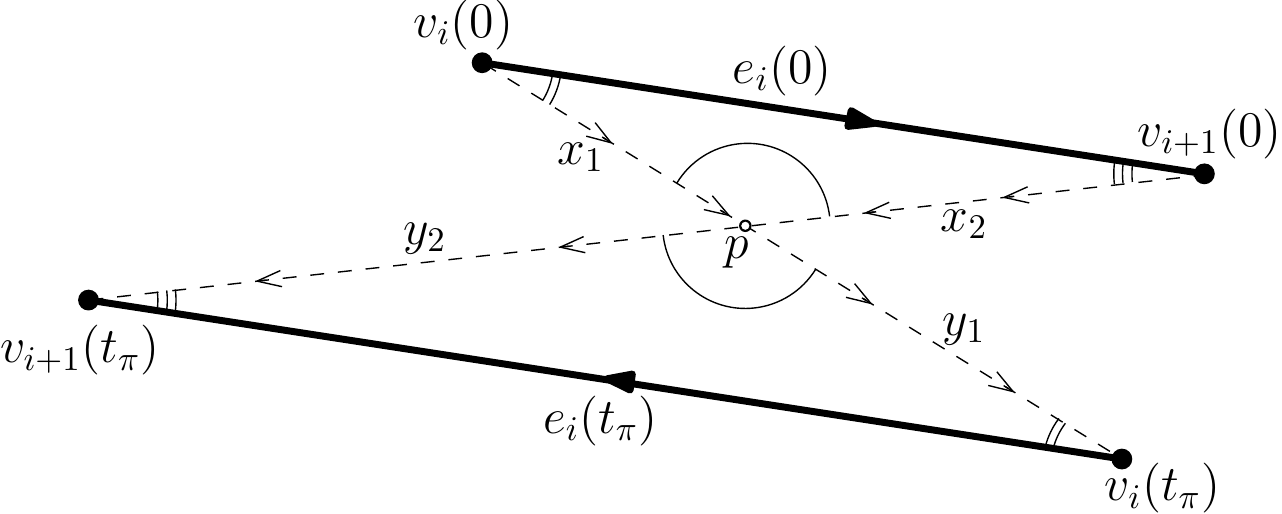}\label{fig:lb-rotation-proof-a}
\caption{Illustration for the proof of Lemma~\ref{le:lb-rotation}.}
\label{fig:proof-lb-rotation}
\end{figure}

Since $|\rho_i^j(t_{\pi})|=\pi$, it follows that $e_i(t_{\pi})$ is parallel to $e_i(0)$ and oriented in the opposite way. This easily implies that $t_r$ exists if $e_i(t_{\pi})$ and $e_i(0)$ are aligned. Otherwise, the straight-line segments $v_i(0) v_i(t_{\pi})$ and $v_{i+1}(0) v_{i+1}(t_{\pi})$ meet in a point $p$. Let $x_1$, $x_2$, $y_1$, and $y_2$ be the lengths of the straight-line segments $p v_i(0)$, $p v_{i+1}(0)$, $p v_i(t_{\pi})$, and $p v_{i+1}(t_{\pi})$, respectively. By the similarity of triangles $v_i(0) p v_{i+1}(0)$ and $v_i(t_{\pi}) p v_{i+1}(t_{\pi})$, we have $\frac{x_1}{y_1}=\frac{x_2}{y_2}$ and hence $\frac{x_1}{x_1+y_1}=\frac{x_2}{x_2+y_2}$. Thus, at time instant $t_r = \frac{x_1}{x_1+y_1}t_{\pi}$, we have that $v_i(t_r)$ and $v_{i+1}(t_r)$ coincide (in fact they both lie at $p$). This contradiction proves the lemma.
\end{proof}

For $j=1,\dots,x-1$, we denote by $M_j$ morph $\morph{\Gamma_1,\dots,\Gamma_{j+1}}$ -- note that $M_{x-1}=M$; also, for $i=1,\dots,n-1$, we define the \emph{total rotation} $\rho_i(M_j)$ of edge $e_i$ around $v_i$ during morph $M_j$ as $\rho_i(M_j)=\sum_{m=1}^{j}\rho_i^m$. Observe that the total rotation might be a value larger than $2\pi$ radians or smaller than $0$ radians; that is, the sum is not taken modulo $2\pi$. 

We will show in Lemma~\ref{le:lb-linear-total-rotation} that there exists an edge $e_i$, for some $1 \le i \le n-1$, whose total rotation during the entire morph is linear in the size of the path; that is, $\rho_i(M_{x-1})=\rho_i(M)\in \Omega(n)$. In order to do that, we first analyze the relationship between the total rotation of two consecutive edges of $P$.

\begin{lemma}\label{le:lb-diff-rot}
For each $j=1,\dots,x-1$ and for each $i=1,\dots,n-2$, we have that $|\rho_{i+1}(M_j)-\rho_{i}(M_j)|<\pi$.
\end{lemma}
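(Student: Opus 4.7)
The plan is to track, over all time $t$ in the morph $M_j$, a continuous real-valued lift $\alpha_i(t)$ of the direction angle of edge $e_i$ for every $i$. Since in $\Gamma = \Gamma_1$ every edge $e_i$ is horizontal and oriented from $v_i$ on the left to $v_{i+1}$ on the right, I would set $\alpha_i(t_1)=0$ for every $i$. Within each linear step $\langle \Gamma_m,\Gamma_{m+1}\rangle$ the rotation $\rho_i^m$ has $|\rho_i^m|<\pi$ by Lemma~\ref{le:lb-rotation}, so the translated edge sweeps monotonically through the signed angle $\rho_i^m$ without ever being degenerate, and the continuous lift extends across the step with $\alpha_i(t_{m+1})=\alpha_i(t_m)+\rho_i^m$. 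Iterating across all $j$ steps yields $\alpha_i(t_{j+1}) = \rho_i(M_j)$.

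Next I would study the turning function at the common vertex $v_{i+1}$, defined by $\tau_i(t) = \alpha_{i+1}(t) - \alpha_i(t)$. This is continuous in $t$, $\tau_i(t_1)=0$, and $\tau_i(t_{j+1}) = \rho_{i+1}(M_j) - \rho_i(M_j)$. The central planarity observation is that $\tau_i(t)$ cannot take any value of the form $(2k+1)\pi$ for $k\in\mathbb{Z}$ at any time $t$: such a value would make $e_i$ and $e_{i+1}$ point in exactly opposite directions at the shared vertex $v_{i+1}$, forcing $v_{i+2}$ onto the interior of $e_i$, or $v_i$ onto the interior of $e_{i+1}$, or the two vertices $v_i$ and $v_{i+2}$ to coincide; each alternative contradicts straight-line planarity.

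Combining these ingredients, $\tau_i$ is a continuous real-valued function starting at $0$ and avoiding every odd multiple of $\pi$, so by the intermediate value theorem it stays inside the open interval $(-\pi,\pi)$ throughout $M_j$. Evaluating at the endpoint gives $|\rho_{i+1}(M_j) - \rho_i(M_j)| = |\tau_i(t_{j+1})| < \pi$, which is exactly the desired bound. Notice that the initial condition $\tau_i(t_1)=0$ — coming from all edges being collinear and equally oriented in $\Gamma$ — is essential: for a generic starting drawing the same argument would only trap $\tau_i$ inside an interval of length $2\pi$.

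The main obstacle I anticipate is ensuring that the lift $\alpha_i(t)$ is well-defined as a continuous real-valued function on the whole morph, rather than only modulo $2\pi$. This reduces to verifying that inside a single linear step the direction of the translated edge rotates monotonically by exactly $\rho_i^m$ (without wobbling past any intermediate angle on the way). This should follow from the fact that a linear interpolation between two non-antiparallel vectors sweeps a monotone angle, combined with Lemma~\ref{le:lb-rotation}, which rules out the antiparallel configuration at the endpoints of each step and hence prevents any degenerate intermediate vector.
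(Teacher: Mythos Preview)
Your argument is correct and matches the paper's approach: both track a continuous angle-difference function starting at $0$, observe that planarity forbids it from reaching $\pm\pi$, and conclude via the intermediate value theorem. Your worry about monotonicity within a step is unnecessary, since a continuous lift of the edge direction exists whenever the edge vector $v_{i+1}(t)-v_i(t)$ is nonzero (which planarity guarantees), and $\rho_i^m$ is by definition the change in that lift; monotonicity happens to hold for linear interpolation of vectors but is not required for the argument.
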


\begin{proof}
Suppose, for a contradiction, that $|\rho_{i+1}(M_j)-\rho_{i}(M_j)|\geq \pi$ for some $1\leq j \leq x-1$ and $1\leq i \leq n-2$. Assume that $j$ is minimal under this hypothesis. 

Since each vertex moves continuously during $M_j$, there exists an intermediate drawing $\Gamma^*$ of $P$, occurring during morphing step $\morph{\Gamma_j,\Gamma_{j+1}}$, such that $|\rho_{i+1}(M^*)-\rho_{i}(M^*)| = \pi$, where $M^*=\morph{\Gamma_1,\dots,\Gamma_j,\Gamma^*}$ is the morph obtained by concatenating $M_{j-1}$ with the morphing step transforming $\Gamma_{j}$ into $\Gamma^*$.

Recall that in $\Gamma_1$ edges $e_i$ and $e_{i+1}$ lie on the same straight line and have the same orientation. Then, since $|\rho_{i+1}(M^*)-\rho_{i}(M^*)| = \pi$, in $\Gamma^*$ edges $e_i$ and $e_{i+1}$ are parallel and have opposite orientations. Also, since edges $e_i$ and $e_{i+1}$ share vertex $v_{i+1}$, they lie on the same line. This implies that such edges overlap, contradicting the hypothesis that $M^*$, $M_j$, and $M$ are planar.
\end{proof}

We now prove the key lemma for the lower bound.

\begin{lemma}\label{le:lb-linear-total-rotation}
There exists an index $i$ such that $|\rho_i(M)| \in \Omega(n)$.
\end{lemma}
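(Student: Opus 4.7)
The plan is to show that $\rho_{i+1}(M) - \rho_i(M)$ is forced to equal exactly $2\pi/3$ for every $i \in \{1, \ldots, n-2\}$, so that a telescoping argument then gives $|\rho_1(M)|$ or $|\rho_{n-1}(M)|$ linear in $n$.

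First I would compute the edge directions in $\Gamma$ and $\Lambda$. In $\Gamma$ every edge $e_i$ has direction angle $0$. In $\Lambda$, a direct inspection of the three cases modulo $3$ shows that $e_i$ has direction $\theta_i^{\Lambda} \in \{0,\, 2\pi/3,\, -2\pi/3\}$, and in all three cases $\theta_{i+1}^{\Lambda} - \theta_i^{\Lambda} \equiv 2\pi/3 \pmod{2\pi}$ for every $i \in \{1, \ldots, n-2\}$. Since the starting directions of $e_i$ and $e_{i+1}$ in $\Gamma$ are both $0$, and since the total rotation $\rho_i(M) = \sum_{j} \rho_i^j$ accumulates the signed angular change of $e_i$ across all morphing steps, we have $\rho_i(M) \equiv \theta_i^{\Lambda} \pmod{2\pi}$, and therefore $\rho_{i+1}(M) - \rho_i(M) \equiv 2\pi/3 \pmod{2\pi}$.

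The key step is to combine this congruence with Lemma~\ref{le:lb-diff-rot} applied at $j = x-1$, which gives $|\rho_{i+1}(M) - \rho_i(M)| < \pi$. The only real number in the open interval $(-\pi, \pi)$ that is congruent to $2\pi/3$ modulo $2\pi$ is $2\pi/3$ itself, so the difference must equal $2\pi/3$ exactly. Telescoping over $i = 1, \ldots, n-2$ then yields
\[
\rho_{n-1}(M) - \rho_1(M) \;=\; (n-2)\cdot \frac{2\pi}{3},
\]
and by the triangle inequality,
\[
\max\bigl(|\rho_1(M)|,\; |\rho_{n-1}(M)|\bigr) \;\geq\; \frac{(n-2)\pi}{3} \;\in\; \Omega(n),
\]
so some index $i \in \{1, n-1\}$ satisfies $|\rho_i(M)| \in \Omega(n)$, proving the lemma.

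I do not anticipate major obstacles. The only subtlety is justifying the congruence $\rho_i(M) \equiv \theta_i^{\Lambda} \pmod{2\pi}$, which follows directly from the definition of $\rho_i^j$: each step $\langle \Gamma_j, \Gamma_{j+1} \rangle$ rotates $e_i$ from its current direction by exactly $\rho_i^j$ modulo $2\pi$, so the cumulative change in direction from $\Gamma$ to $\Lambda$ is $\rho_i(M)$ modulo $2\pi$. Everything else is a clean application of Lemma~\ref{le:lb-diff-rot} and the triangle inequality.
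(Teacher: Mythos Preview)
Your proposal is correct and follows essentially the same approach as the paper. The paper likewise establishes $\rho_{i+1}(M)-\rho_i(M)\equiv 2\pi/3\pmod{2\pi}$ (phrasing it via the angle between consecutive edges rather than absolute edge directions), invokes Lemma~\ref{le:lb-diff-rot} to force the difference to equal exactly $2\pi/3$, and then telescopes to conclude that $|\rho_1(M)|$ or $|\rho_{n-1}(M)|$ is $\Omega(n)$.
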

\begin{proof}
Refer again to Figure~\ref{fig:lb-drawings}. For every $1\leq i\leq n-2$, edges $e_i$ and $e_{i+1}$ form an angle of $\pi$ radians in $\Gamma$, while they form an angle of $\frac{\pi}{3}$ radians in $\Lambda$. Hence, the total rotation of $e_{i+1}$ during the entire morph $M$ has to be larger than the one of $e_{i}$ by $\frac{2\pi}{3}$ radians (plus any multiple of $2\pi$); that is, $\rho_{i+1}(M)=\rho_i(M)+\frac{2\pi}{3} + 2z_i\pi$, for some $z_i\in \mathbb{Z}$.

In order to prove the lemma, it suffices to prove that $z_i=0$, for every $i=1,\dots,n-2$. Namely, in this case $\rho_{i+1}(M)=\rho_i(M)+\frac{2\pi}{3}$ for every $i=1,\dots,n-2$, and hence $\rho_{n-1}(M) = \rho_1(M) + \frac{2\pi}{3}(n-2)$. This implies $|\rho_{n-1}(M)-\rho_1(M)|\in \Omega(n)$, and thus $|\rho_1(M)|\in \Omega(n)$ or $|\rho_{n-1}(M)|\in \Omega(n)$.

Assume, for a contradiction, that $z_i\neq 0$, for some $1 \leq i \leq n-2$. If $z_i>0$, then $\rho_{i+1}(M)\geq \rho_i(M)+\frac{8\pi}{3}$; further, if $z_i<0$, then $\rho_{i+1}(M)\leq \rho_i(M)-\frac{4\pi}{3}$. Since each of these inequalities contradicts Lemma~\ref{le:lb-diff-rot}, the lemma follows.
\end{proof}

We are now ready to prove Theorem~\ref{th:lb-bound}. Namely, consider the two drawings $\Gamma$ and $\Lambda$ of path $P=(v_1,\dots,v_n)$ illustrated in Figure~\ref{fig:lb-drawings}. By Lemma~\ref{le:lb-linear-total-rotation}, there exists an edge $e_i$ of $P$, for some $1 \le i \le n-1$, such that $|\sum_{j=1}^{x-1}\rho_i^j| \in \Omega(n)$. Since, by Lemma~\ref{le:lb-rotation}, we have that $|\rho_i^j| < \pi$ for each $j=1,\dots,x-1$, it follows that $x \in \Omega(n)$. This concludes the proof of the theorem.

\section{Conclusion} \label{se:conclusions}

In this paper we have given an $O(n^3)$-time algorithm that takes as input two straight-line planar drawings $\Gamma_1$ and $\Gamma_2$ of the same $n$-vertex planar graph with the same embedding, and finds a morph consisting of $O(n)$ unidirectional morphing steps from $\Gamma_1$ to $\Gamma_2$ that preserves straight-line planarity.   The number of steps of our morph is asymptotically optimal in the worst case.


Our algorithm works under the real RAM model of computation.
However, we have not bounded the coordinate values used in our morphs,  
and it seems that they may require a super-logarithmic number of bits
 (though they can be described using a polynomial number
of arithmetic operations). 
Consequently, the intermediate drawings produced by our morph may have an exponential ratio of the distances between the closest and farthest pairs of vertices. We leave as an open problem to find a morph that uses a polynomial number of discrete morphing steps and uses only a logarithmic number of bits per coordinate. Barrera-Cruz et al.~\cite{Barrera-Cruz-Schnyder} made a first step in this direction by solving the case where the two drawings are Schnyder drawings.

\section*{Acknowledgements}
Part of this research was conducted in the framework of ESF project
10-EuroGIGA-OP-003 GraDR ``Graph Drawings and Representations'', of 
EU FP7 STREP Project ``Leone: From Global Measurements to Local
Management'', grant no. 317647, and of DFG grant Ka812/17-1.
Part of this research was supported by NSERC, the Natural Sciences and Engineering Research Council of Canada.  
Part of this research 
was supported by 
the Danish National Research Foundation grant DNRF84 through the Center for Massive Data Algorithmics (MADALGO).
Some of the work was done as part of an Algorithms Problem Session at the University of Waterloo, and we thank the other participants for helpful discussions. We thank Therese Biedl for a good suggestion about 3-connected graphs.

\bibliographystyle{spmpsci}      
\bibliography{journal}   

\end{document}